\documentclass[11pt]{article}

\newcommand{\ifabs}[2]{#2}
% uncomment below for abstract
%\renewcommand{\ifabs}[2]{#1}

%\ifabs{\renewcommand{\baselinestretch}{0.9788}}{}

\ifabs{\usepackage[margin = 1in]{geometry}}{
\usepackage{geometry}
}

\usepackage{amsmath}
\usepackage{amsfonts}
\usepackage{amsthm}
\usepackage{amssymb}
\usepackage{color}
\usepackage{caption}
\ifabs{\usepackage{times}}{}

\usepackage{hyperref, color}
\hypersetup{colorlinks=true,citecolor=blue, linkcolor=blue, urlcolor=blue}

\usepackage{comment}
\excludecomment{suppress}

\newtheorem{theorem}{Theorem}[section]

\newtheorem{claim}[theorem]{Claim}
\newtheorem*{claim*}{Claim}

\newtheorem{lemma}[theorem]{Lemma}
\newtheorem{proposition}[theorem]{Proposition}
\newtheorem{corollary}[theorem]{Corollary}
\theoremstyle{definition}

\newtheorem{definition}{Definition}[section]
\newtheorem{remark}{Remark}[section]
\newtheorem*{remark*}{Remark}

\newcommand{\concept}[1]{\emph{{#1}}}

\newcommand{\todo}[1]{\typeout{TODO: \the\inputlineno: #1}\textbf{\color{red}[[[ #1 ]]]}}

\newcommand{\LOCAL}{$\mathsf{LOCAL}$}
\newcommand{\SLOCAL}{$\mathsf{SLOCAL}$}
\newcommand{\DTV}[2]{d_{\mathrm{TV}}\left({#1},{#2}\right)}
\newcommand{\err}[2]{\mathsf{err}\left({#1},{#2}\right)}
\newcommand{\JVV}{local-JVV}

\newcommand{\poly}{\mathrm{poly}}
\newcommand{\Problem}[1]{\mathfrak{{#1}}}

\newcommand{\Ball}{{B}}

\newcommand{\dist}{\mathrm{dist}}
\renewcommand{\vec}[1]{\boldsymbol{{#1}}}

\newcommand{\alg}[1]{\hat{{#1}}}

\title{On Local Distributed Sampling and Counting\footnote{This research is supported by the National Science Foundation of China under Grant No.~61672275 and No.~61722207.}
%\\{\large(\ifabs{extended abstract}{full paper})}
}
\author{
Weiming Feng~\thanks{Department of Computer Science and Technology, Nanjing University. Email: {fengwm@smail.nju.edu.cn}.}
\and
Yitong Yin\thanks{State Key Laboratory for Novel Software Technology, Nanjing University. Email: {yinyt@nju.edu.cn}.}
}
\date{}

\begin{document}
\maketitle

\begin{abstract}
In classic distributed graph problems, each instance on a graph specifies a space of feasible solutions (e.g.~all proper ($\Delta+1$)-list-colorings of the graph), and the task of distributed algorithm is to construct a feasible solution using local information. 

We study distributed sampling and counting problems, in which each instance specifies a joint distribution of feasible solutions. The task of distributed algorithm is to sample from this joint distribution, or to locally measure the volume of the probability space via the marginal probabilities. The latter task is also known as inference, which is a local counterpart of counting.
%. For local computation, counting is ambiguous. We choose the problem of inference as the local version of counting, where the goal is to locally estimate marginal probabilities, which are standard local knowledges of the volume of probability space.

For self-reducible classes of instances, the following equivalences are established  in the \LOCAL{} model up to polylogarithmic factors:
\begin{itemize}
\item For all joint distributions, approximate inference and approximate sampling are computationally equivalent. 
\item For all joint distributions defined by local constraints, 
exact sampling is reducible to either one of the above tasks.
%approximate inference, approximate sampling, and exact sampling are all computationally equivalent.
\item 
%If further a feasible solution can always be constructed by a local oblivious sequential algorithm,  
If further, sequentially constructing a feasible solution is trivial locally,
then all above tasks are easy if and only if the joint distribution exhibits strong spatial mixing. 
\end{itemize}
Combining with the state of the arts of strong spatial mixing, we obtain efficient sampling algorithms in the \LOCAL{} model for various important sampling problems,
including: an $O(\sqrt{\Delta}\log^3n)$-round algorithm for exact sampling matchings in graphs with maximum degree $\Delta$, and an $O(\log^3n)$-round algorithm for sampling according to the hardcore model (weighted independent sets) in the uniqueness regime, which along with the $\Omega(\mathrm{diam})$ lower bound in~\cite{feng2017sampling} for sampling according to the hardcore model in the non-uniqueness regime, gives the first computational phase transition for distributed sampling.
\end{abstract}

\setcounter{page}{0} \thispagestyle{empty} \vfill
\pagebreak

\section{Introduction}
In local computation, classic distributed graph problems are formulated in such a way that each instance $\mathcal{I}$ on a graph $G=(V,E)$ specifies a set $\Omega_{\mathcal{I}}$ of feasible solutions $\vec{y}=(y_v)_{v\in V}$, and the goal of the distributed algorithm is to {construct} a feasible solution $\vec{y}\in\Omega_{\mathcal{I}}$ by outputting $y_v$ at each node $v\in V$. 
As a paradigm, we consider the list-coloring problem, where each instance $\mathcal{I}$ gives a graph $G=(V,E)$ with each node $v\in V$ associated with a list $L_v$ of available colors. Then the set $\Omega_{\mathcal{I}}$ contains all proper list-colorings $\vec{y}$ with $y_v\in L_v$ for every $v\in V$, satisfying $y_u\neq y_v$ for all edges $uv\in E$.

Alternatively, we can consider the set of feasible solutions $\Omega_{\mathcal{I}}$ as the sample space of a probability distribution and imagine that each instance $\mathcal{I}$ specifies a joint distribution $\mu_{\mathcal{I}}$ of feasible solutions $\vec{y}=(y_v)_{v\in V}\in\Omega_{\mathcal{I}}$. 
A distributed graph problem is then given by a class of joint distributions $\mu_{\mathcal{I}}$ indexed by instances $\mathcal{I}$. (In the paradigm of list-coloring problem, we may take each $\mu_{\mathcal{I}}$ as the uniform distribution over all proper list-colorings of instance $\mathcal{I}$.)
To each class of joint distributions of this form, there correspond a number of naturally defined problems.
%For distributed graph problems, the distributed algorithm needs to construct a $\vec{y}=(y_v)_{v\in V}$ that $\mu_{\mathcal{I}}(\vec{y})>0$.
\begin{itemize}
\item Construction: 
Exhibit a feasible solution $\vec{y}=(y_v)_{v\in V}$ satisfying $\mu_{\mathcal{I}}(\vec{y})>0$, where each node $v\in V$ outputs $y_v$. This is the task for classic distributed graph problems. 
(e.g.~Construct a proper list-coloring of instance $\mathcal{I}$.)
%Each node $v\in V$ outputs a value $\sigma_v\in\Sigma$ so that $\mu({\sigma})>0$, where ${\sigma}=(\sigma_v)_{\sigma\in V}$ denotes the vector of outputs.
%\item Enumeration (listing): Each node $v\in V$ outputs the list $L_v\subseteq\Sigma$ of possible values of node $v$, where $L_v=\{c\in\Sigma\mid \mu_v(c)>0 \}$.
\item Sampling: 
Generate a random solution $\vec{Y}=(Y_v)_{v\in V}$ distributed according to $\mu_\mathcal{I}$, where each node $v\in V$ outputs $Y_v$.
(e.g.~Generate uniformly a proper list-coloring of instance $\mathcal{I}$  at random.)
%Each node $v\in V$ outputs a random value so that the random vector ${\sigma}=(\sigma_v)_{\sigma\in V}$ follows the distribution $\mu$.
\item Inference (Counting): 
Each node calculates the marginal probabilities of the random variable $Y_v$ being specific values, where the random vector $\vec{Y}=(Y_v)_{v\in V}$ is drawn from $\mu_\mathcal{I}$.
(e.g.~Each node $v\in V$ calculates the probabilities of $v$ being assigned specific  colors in a uniform random proper list-coloring of instance $\mathcal{I}$.)
%Each node $v\in V$ outputs its marginal distribution $\mu_v$.
\end{itemize}

We choose the inference problem as a local version of counting, as the marginal probabilities are typical local knowledges regarding the volume of probability space.
A more standard global definition of counting is to estimate the number of feasible solutions $|\Omega_{\mathcal{I}}|$ (or the total weights if the joint distribution $\mu_{\mathcal{I}}$ is non-uniform), which is unsuitable to study for local computation because it computes a global information. 
%It is well-known that for self-reducible problems, the two definitions are equivalent via the chain rule. %In particular, assuming that $\mu_{\mathcal{I}}$
However, it is well known that for self-reducible problems, such global information can be decomposed via the chain rule into the marginal probabilities computed by  inference problems~\cite{jerrum2003book}.
Furthermore, the inference problem itself is especially well-motivated by distributed machine learning~\cite{paskin2004robust}.

%\todo{say something to justify inference: its equivalence with counting, and motivation from learning.}

%\todo{say something about self-reducibility}

%\[
%|\Omega_{\mathcal{I}}|=\frac{1}{\Pr[\vec{Y}=\vec{y}]}=\prod_{i=1}^n\frac{1}{\Pr[Y_{v_i}=y_{v_i}\mid \forall 1\le j< i: Y_{v_{j}}=y_{v_j}]}.
%\]

Previous studies in local computation were focused on the complexity of constructing a feasible solution. The studies of sampling problems in local computation were started very recently~\cite{feng2017sampling, feng2018coloring}.
%whereas little has been said to the local complexities of sampling and counting.
%For classic computation on Turing machine, for all above three types of computation problems, their complexities constitute the main themes of the theory of computing and have drawn extensive studies. In contrast, for local computation, the complexity of constructing a feasible solution has been a main theme all along, whereas the complexities of sampling and counting have rarely been studied in the model of local computation. 
Several fundamental questions regarding the local complexities of sampling and counting need to be answered.

\vspace{6pt}
\noindent\textbf{Question 1:} \emph{What is the relation between sampling and counting in local computation?}
%\vspace{6pt}

It is well known that for self-reducible problems, approximate counting and approximate sampling are inter-reducible on polynomial-time Turing machines~\cite{JVV86}. A natural question is whether this is true for local computation. 
To see the nontriviality of the question, recall that the generic reduction from sampling to counting has to be sequential because the procedure is fully-adaptive: Each individual variable is sampled according to the marginal distribution conditioning on the outcomes of previous samplings.
To understand the relation between counting and sampling in local computation, one has to answer the following question first:
\ifabs{}{\begin{quote}}``How much non-adaptively can we sample a random vector $\vec{Y}=(Y_i)$ by accessing to marginal distributions of individual variables $Y_i$?'' 
\ifabs{}{\end{quote}}which by itself is a fundamental question with a broader background.

%How much non-adaptively one can sample a random vector according to marginal probabilities of individual variables is a fundamental question.

\vspace{6pt}
\noindent\textbf{Question 2:} \emph{What are the roles of approximation in  sampling and counting in local computation?}
%\vspace{6pt}

Due to the pioneering works of Valiant~\cite{valiant1979complexity} and Stockmeyer~\cite{stockmeyer1983complexity}, on Turing machines exact counting can be much harder then approximate counting. 
Similar phenomena occur for local computation: Due to a straightforward information-theoretical argument, exact inference is impossible to compute locally unless for joint distributions with zero long-range correlation.

For sampling, a celebrated result of Jerrum, Valiant, and Vazirani~\cite{JVV86} shows that on polynomial-time Turing machines, for self-reducible problems, approximate sampling can be boosted into exact sampling via a rejection sampling procedure (the JVV sampler), such that the algorithm succeeds with high probability (with certifiable failures) and conditioning on success the output is distributed precisely according to the joint distribution $\mu_{\mathcal{I}}$ specified by the instance $\mathcal{I}$. Our question is whether there is a distributed variant of the JVV sampler. 
%such that conditioning on success the output $\vec{Y}=(Y_v)_{v\in V}$, where each node $v$ outputs $y_v$, is distributed precisely according $\mu_{\mathcal{I}}$. 
Answering this question also involves investigating a fundamental problem:
%\begin{quote}
``Can rejection sampling be made local?''
%\end{quote}
Considering that the rejection sampling as a basic Monte Carlo method has been studied for more than half a century, this is certainly worth studying in local computation.

\vspace{6pt}
\noindent\textbf{Question 3:} \emph{What makes a sampling or counting problem solvable by local computations?}
%\vspace{1pt}

Finally, we want to characterize the easiness of sampling and counting in local computation by the properties of joint distributions. For sampling and counting on  polynomial-time Turing machines, a phase transition of computational complexity is witnessed at the threshold for the decay of correlation (strong spatial mixing)~\cite{weitz2006counting, sly2010computational}. We wonder whether similar computational phase transitions exist for local computation.
 
%In a previous work~\cite{feng2017sampling}, the complexity of sampling was lower bounded by the long-range correlations in the joint distributions. The converse, whether decay of correlation implies efficient sampling and counting, is unknown.

\subsection{Our results}
We study the local complexities of self-reducible sampling and counting problems, and provide answers to above fundamental questions.

We formulate distributed sampling and inference (counting) problems  by classes of joint distributions of solutions. Besides general joint distributions, we focus on the joint distributions with following properties naturally arising from local computation:
\begin{itemize}
\item[($\star$)] Joint distributions defined by local constraints. These are the counterparts of locally checkable labelings (LCL) in the world of sampling and counting.
In our paper, these joint distributions are called local Gibbs distributions \ifabs{(Definition~\ref{def:Gibbs}).}{(Definition~\ref{def:Gibbs} and~\ref{def:local-Gibbs}).}
\item[($\star$$\star$)] Joint distributions for which constructing a feasible solution is trivial for a sequential local oblivious procedure. For example, this includes distributions over $(\Delta+1)$-list-colorings, but not the distributions over $\Delta$-list-colorings. The property is related to the ergodicity of the local dynamics on feasible solutions.
In our paper, joint distributions of this property are called locally admissible (Definition~\ref{def:local-admissible}).
\end{itemize}

\ifabs{\noindent\textbf{Main results:}}{
\paragraph{Main results:}} For self-reducible classes of instances, we show the followings hold for sampling and inference (counting) in the \LOCAL{} model:
\begin{itemize}
\item %For general joint distributions: 
Approximate inference and approximate sampling are inter-reducible, 
in a sense that 
if one of the tasks is tractable in the \LOCAL{} model so is the other one (Theorem~\ref{thm:approx-infer-approx-sample} and Theorem~\ref{thm:approx-sample-approx-infer}),
where an approximate problem is tractable in the \LOCAL{} model if the problem is solvable for any $n$ and $\delta>0$, where $n$ is the number of nodes and $\delta$ is the approximation error, within time complexity $\mathrm{poly}(\log n,\log\frac{1}{\delta})$.

\item %For local Gibbs distributions: 
If property ($\star$) is satisfied, then
approximate inference, approximate sampling, and exact sampling are all inter-reducible. %The exact sampling is tractable if it successfully terminates within $\mathrm{polylog}(n)$ rounds with high probability and conditioning on success returns a solution distributed precisely according to the desired joint distribution.
With the above results, this is proved by a distributed JVV sampler that successfully terminates within $\mathrm{polylog}(n)$ rounds with high probability, and conditioning on success, returns a solution distributed precisely as the desired distribution (Theorem~\ref{thm:approx-infer-exact-sample}).

\item %For locally admissible, local Gibbs distributions:
If further, property ($\star$$\star$) is satisfied, then
all above tasks are easy if and only if the joint distribution exhibits certain degrees of strong spatial mixing (Theorem~\ref{thm:ssm-approx-infer} and Corollary~\ref{coro:ssm-sampling}), a decay of correlation property which is critically related to the computational complexity of approximate counting~\cite{weitz2006counting,sly2010computational}.
\end{itemize}

\ifabs{\noindent\textbf{Minor results:}}{
\paragraph{Minor results:}} As by-products of above  results, we obtain two boosting lemmas:
\begin{itemize}
\item %For local Gibbs distributions: 
If property ($\star$) is satisfied, then
approximate inference with bounded additive (total variation) error can be boosted to  with bounded multiplicative error (Lemma~\ref{lemma-local-boosting}).

\item %For locally admissible, local Gibbs distributions:
If further, property ($\star$$\star$) is satisfied, then
strong spatial mixing in total variation distance implies strong spatial mixing with decay in multiplicative error (Corollary~\ref{corollary-ssm-boosting}). 

This explains why so far the known strong spatial mixing results for several major problems (e.g.~independent sets, matchings, and graph colorings) were all proved in this stronger form with decay in multiplicative error~\cite{weitz2006counting, bayati2007simple, gamarnik2007correlation, gamarnik2013strong}.
\end{itemize}
Interestingly, the second boosting result states a proposition in probability theory, which seems unrelated to distributed algorithms, but is proved by us via local computation.

\ifabs{\vspace{6pt}\noindent\textbf{Applications in the \LOCAL{} model:}}{
\paragraph{Applications in the \LOCAL{} model:}} The  main results in above together with the state of the arts of strong spatial mixing~\cite{bayati2007simple, weitz2006counting, gamarnik2013strong, li2013correlation, song2016counting}, imply the following exact sampling algorithms for various important sampling problems and statistical physics models:
\begin{itemize}
\item An $O(\sqrt{\Delta}\log^3n)$-round algorithm for exact sampling matchings in graphs with maximum degree $\Delta$.
\item An $O(\log^3n)$-round exact sampling according to the hardcore model (weighted independent sets) in the uniqueness regime, which along with the $\Omega(\mathrm{diam})$ lower bound in a previous work~\cite{feng2017sampling} for sampling according to the hardcore model in the non-uniqueness regime,  gives the first  \concept{computational phase transition} for distributed sampling  and counting.
\item $O(\log^3n)$-round exact sampling algorithms for sampling according to various anti-ferromagnetic models, including: anti-ferromagnetic 2-spin model in the uniqueness regime, weighted hypergraph matchings in the uniqueness regime, proper $q$-colorings of triangle-free graphs when $q\ge\alpha\Delta$ where $\alpha>\alpha^*$ and $\alpha^*\approx1.763$ is the positive root of the euqation $x=\mathrm{e}^{1/x}$.
\end{itemize}

\ifabs{}{
\subsection{Organization of the paper}
Models and definitions are introduced in Section~\ref{sec:model}.
In Section~\ref{sec:SLOCAL}, we apply a generic transformation from local sequential sampling to local distributed sampling, which proves the first main result. %the equivalence between efficient approximate inference and sampling in local computation.
In Section~\ref{sec:JVV}, we develop the techniques of local self-reductions, which proves the second main result, giving the distributed JVV sampler, along with the boosting lemma for approximate inference.
In Section~\ref{sec:SSM}, we explore the intrinsic relation between decay of correlations and distributed counting and sampling, proving the third main result.
}
%\subsection{Related work}
%\begin{theorem}(informal)
%In the \LOCAL{} model, for any class of joint distributions, 
%approximate inference is tractable  if and only if approximate sampling is tractable.
%\end{theorem}

%\begin{theorem}(informal)
%In the \LOCAL{} model, for any class of joint distributions defined by local constraints, 
%if approximate inference is tractable then so is exact sampling.
%\end{theorem}

%\begin{theorem}(informal)
%For any class of joint distributions defined by local constraints such that a sequential local greedy algorithm can always construct a feasible solution, 
%approximate inference and exact sampling are tractable in the \LOCAL{} model if the strong spatial mixing holds.
%\end{theorem}

\section{Models and Definitions}\label{sec:model}
\ifabs{
\noindent\textbf{The \LOCAL{} Model:}
In the \LOCAL{} model~\cite{naor1995can, peleg2000distributed}, the network is a simple, undirected graph $G=(V,E)$. Initially, each node $v\in V$ receives a local input and an arbitrary long random bit string sampled independently at $v$. For a \LOCAL{} algorithm with time complexity $t$, each node $v\in V$ gathers all information within radius $t$ from $v$, including the topology of the graph, the inputs and random bits of the nodes within that radius, and performs an arbitrary local computation with the information to compute an output.
}{
\paragraph{Notation for Graphs:}
Let $G=(V,E)$ be a simple, undirected graph.
For any two vertices $u,v\in V$, the distance between $u$ and $v$ in $G$, denoted by $\dist_G(u,v)$, is the length of the shortest path between $u$ and $v$ in $G$; and for a subset $S\subseteq V$, we define $\dist_G(v,S)\triangleq\min_{u\in S}\dist_G(u,v)$. For any vertex $v\in V$ and $r>0$, let $\Ball_r(v)\triangleq\{u\in V\mid \dist_G(u,v)\le r\}$ denote the \concept{$r$-ball} centered at $v$ in $G$.
}

\ifabs{\vspace{6pt}
\noindent\textbf{Joint Distributions:}}{
\paragraph{Notation for Joint Distributions:}}
Let $V$ be a set of size $n=|V|$ and $\Sigma$ an \concept{alphabet} of size $q=|\Sigma|\le \poly(n)$. Let $\Omega=\Sigma^V$ be the \concept{sample space}. 
%\item configuration
Each $\sigma\in\Sigma^V$ is called a \concept{configuration}. 
For $S\subseteq V$, we use $\sigma_S$ or $\sigma(S)$ to denote the restriction of $\sigma$ on subset $S$.
%Given a configuration $\sigma\in\Sigma^V$ and $\tau\in\Sigma^{\Lambda}$ on a subset $\Lambda$, we say $\sigma$ \concept{extends} $\tau$ if $\sigma_{\Lambda}=\tau$.
%\item joint distribution

Let $\mu$ be a {distribution} over $\Sigma^V$, called a \concept{joint distribution}, because each $Y\sim \mu$ is a random vector ${Y}=(Y_v)_{v\in V}$ consisting of $n$ jointly distributed random variables.
%\item marginal distribution
For $R\subseteq V$, let $\mu_R$ denote the \concept{marginal distribution} over $\Sigma^R$ induced by $\mu$ on subset $R$\ifabs{.}{, formally:
\[
\forall \tau\in\Sigma^R:\quad \mu_R(\tau)=\Pr_{{Y}\sim\mu}[{Y}_R=\tau]=\sum_{\substack{\sigma\in\Sigma^V: \sigma_R=\tau}}\mu(\sigma).
%=\sum_{\sigma\in\Sigma^V: \sigma_B=\tau_B}\mu(\sigma).
\]}
In particular, when $R=\{v\}$ for some $v\in V$, we write $\mu_v=\mu_{\{v\}}$. 
%\item feasibility
A configuration $\sigma\in\Sigma^V$ is \concept{feasible} with respect to $\mu$ if $\mu(\sigma)>0$.
\ifabs{A configuration $\tau\in\Sigma^\Lambda$ on a subset $\Lambda\subseteq V$ is feasible with respect to $\mu$ if there is a $\sigma\in\Sigma^V$  such that $\mu(\sigma)>0$ and $\sigma_\Lambda=\tau$.
}{A configuration $\tau\in\Sigma^\Lambda$ on a subset $\Lambda\subseteq V$ is feasible with respect to $\mu$ if $\mu_\Lambda(\tau)>0$, i.e.~if there is a feasible $\sigma\in\Sigma^V$  such that $\sigma_\Lambda=\tau$. By convention an empty configuration $\varnothing$ is always feasible.}
%\item conditional distribution
Given a feasible configuration $\tau\in\Sigma^\Lambda$ on a subset $\Lambda\subseteq V$, we use $\mu^{\tau}$ to denote the distribution over $\Sigma^V$ induced by $\mu$ conditioning on $\tau$\ifabs{.}{, formally:
\[
\forall \sigma\in\Sigma^V:\quad \mu^{\tau}(\sigma)=\Pr_{{Y}\sim \mu}[{Y}=\sigma\mid {Y}_\Lambda=\tau].
%=\begin{cases}
%\frac{\mu(\tau)}{\sum_{\rho\in\Sigma^V:\rho_\Lambda=\sigma_\Lambda}\mu(\rho)} & \text{if }\tau_\Lambda=\sigma_\Lambda,\\
%0 & \text{otherwise}.
%\end{cases}
\]
}
The conditional marginal distributions $\mu_R^\tau$ and $\mu_v^\tau$ are accordingly defined.

%\item total variation

Suppose that $\mu$ and $\nu$ are two distributions over the same sample space $\Omega$. The \concept{total variation distance} between $\mu$ and $\nu$, denoted by $\DTV{\mu}{\nu}$, is defined as:
\ifabs{$\DTV{\mu}{\nu}\triangleq\frac{1}{2}\left\|\mu-\nu\right\|_1=\max_{A\subseteq\Omega}|\mu(A)-\nu(A)|$.}{
\[
\DTV{\mu}{\nu}\triangleq\frac{1}{2}\left\|\mu-\nu\right\|_1=\max_{A\subseteq\Omega}|\mu(A)-\nu(A)|.
\]
}

\ifabs{\vspace{6pt}
\noindent\textbf{Distributed Graph Problems:}}{
\paragraph{Distributed Graph Problems:}}
We reformulate the notion of \concept{distributed graph problems} in~\cite{ghaffari2016complexity} by classes of joint distributions. 
%A distributed graph problem is given by a class of joint distributions $\Problem{M}=\{\mu_{(G,\vec{x})}\}$, indexed by \concept{instances} $(G,\vec{x})$, where $G=(V,E)$ is a simple, undirected graph and $\vec{x}=(x_v)_{v\in V}$ is a $|V|$-dimensional \concept{input vector}. 
%An {instance} $(G,\vec{x})$ specifies a joint distribution $\mu_{(G,\vec{x})}$ over $\Sigma^V$, where $\Sigma=\Sigma_{(G,\vec{x})}$ is a finite {alphabet} and each $\vec{y}=(y_v)_{v\in V}$ in $\Sigma^V$ is an \concept{output vector}. The class of distributions $\Problem{M}$ is \concept{translation-invariant}, which means that if the labeled graphs $(G,\vec{x})$ and $(\tilde{G},\tilde{\vec{x}})$ are isomorphic under bijection $\phi$ on vertices, then $\mu_{(G,\vec{x})}\in\Problem{M}$ if and only if $\mu_{(\tilde{G},\tilde{\vec{x}})}\in\Problem{M}$, besides, $\mu_{(G,\vec{x})}$ and $\mu_{(\tilde{G},\tilde{\vec{x}})}$ are identical under bijection $\phi$.

\begin{definition}[distributed graph problems]\label{def:distributed-graph-problem}
A distributed graph problem is given by a class of joint distributions $\Problem{M}=\{\mu_{(G,\vec{x})}\}$, indexed by labeled graphs $(G,\vec{x})$, where $G=(V,E)$ is a simple, undirected graph and $\vec{x}=(x_v)_{v\in V}$ is a $|V|$-dimensional vector. 
Each $\mu_{(G,\vec{x})}$ is a joint distribution over $\Sigma^V$, where $\Sigma=\Sigma_{(G,\vec{x})}$ is an {alphabet} of size \ifabs{$q=|\Sigma|\le\poly(|V|)$.}{$q=|\Sigma|$ bounded in polynomial of $n=|V|$.}

The class of distributions $\Problem{M}$ is \concept{translation-invariant}, which means that if the labeled graphs $(G,\vec{x})$ and $(\tilde{G},\tilde{\vec{x}})$ are isomorphic under bijection $\phi$ on vertices, then $\mu_{(G,\vec{x})}\in \Problem{M}$ is well-defined if and only if $\mu_{(\tilde{G},\tilde{\vec{x}})}$ is, and the distributions $\mu_{(G,\vec{x})}$ and $\mu_{(\tilde{G},\tilde{\vec{x}})}$ are identical under bijection $\phi$.
\end{definition}

%In an undirected graph $G=(V,E)$ which represents the network, each node $v\in V$ receives an input $x_v$. The labelded graph $(G,\vec{x})$, where $\vec{x}=(x_v)_{v\in V}$ is the vector of inputs, specifies a distribution $\mu=\mu_{(G,\vec{x})}$ over $\Sigma^V$, where $\Sigma$ stands for the \concept{alphabet} of possible outputs at a vertex. 
Initially, each node $v\in V$ knows $x_v$. For classic distributed graph problems, the nodes need to construct a feasible $\vec{y}$ that  $\mu_{(G,\vec{x})}(\vec{y})>0$, where each node $v\in V$ outputs $y_v$.

\begin{remark}
In~\cite{ghaffari2016complexity}, a distributed graph problem is defined by a relation $\mathcal{T}$ that contains all satisfying tuples $(G,\vec{x},\vec{y})$ instead of as a class of distributions $\Problem{M}=\{\mu_{(G,\vec{x})}\}$ over $\vec{y}$. The two definitions are equivalent: Given a relation $\mathcal{T}$, for any $(G,\vec{x})$, $\mu_{(G,\vec{x})}$ is just a positive distribution over all feasible solutions $\vec{y}$ satisfying $(G,\vec{x},\vec{y})\in\mathcal{T}$ (e.g.~the uniform distribution over feasible solutions); and conversely, given $\Problem{M}=\{\mu_{(G,\vec{x})}\}$, the relation $\mathcal{T}$ can be constructed as $\mathcal{T}=\{(G,\vec{x},\vec{y})\mid \mu_{(G,\vec{x})}(\vec{y})>0\}$.
\end{remark}

\ifabs{
\noindent\textbf{Distributed Sampling and Counting:}}{
\paragraph{Distributed Sampling and Counting:}}
A distributed sampling or counting problem is also given by a class of joint distributions $\Problem{M}=\{\mu_{(G,\vec{x})}\}$ as defined in Definition~\ref{def:distributed-graph-problem}.

\begin{definition}[instances for distributed sampling/counting]\label{def:self-reducible-instance}
Let $\Problem{M}=\{\mu_{(G,\vec{x})}\}$ be a class of joint distributions as defined in Definition~\ref{def:distributed-graph-problem}.
%A distributed sampling or counting problem is given by a class of joint distributions $\Problem{M}=\{\mu_{(G,\vec{x})}\}$, where $G=(V,E)$ is a simple, undirected graph and $\vec{x}=(x_v)_{v\in V}$. Each $\mu_{(G,\vec{x})}$ is a joint distribution over $\Sigma^V$, where $\Sigma=\Sigma_{(G,\vec{x})}$ is an {alphabet} of size $q=|\Sigma|$ bounded in polynomial of $n=|V|$.
An {instance} for distributed sampling/counting is a tuple $(G,\vec{x},\tau)$,  where $(G,\vec{x})$ specifies a joint distribution $\mu=\mu_{(G,\vec{x})}$ over $\Sigma^V$, and $\tau\in\Sigma^\Lambda$ is an arbitrary configuration on a subset $\Lambda\subseteq V$ that is feasible with respect to $\mu$. 
We call the distribution $\mu^{\tau}$ the \concept{target distribution}. 

Given an instance $(G,\vec{x},\tau)$ where $\tau$ is specified on the subset $\Lambda\subseteq V$, initially each node $v\in V$ knows $x_v$, and also $\tau_v$ if $v\in\Lambda$. We assume that $x_v$ includes a unique ID for $v$, a global polynomial upper bound of $n=|V|$, and a global  upper bound on  errors if approximation is involved.

%, which includes as parts the unique ID of node $v$, a polynomial upper bound of $n=|V|$, and an error upper bound $\delta>0$ if approximation is involved. %; in addition, each node $v\in\Lambda$ receives $\tau_v$ as input as well.

\end{definition}

%An {instance} is now given by a tuple $(G,\vec{x},\tau)$,  where $(G,\vec{x})$ still specifies a distribution $\mu=\mu_{(G,\vec{x})}$ over $\Sigma^V$, where $\Sigma=\Sigma_{(G,\vec{x})}$ is an {alphabet} whose size is bounded in polynomial of $n=|V|$, and $\tau\in\Sigma^\Lambda$ is an arbitrary feasible configuration on an arbitrary subset $\Lambda\subseteq V$. 
%We call the distribution $\mu^{*}=\mu^{\tau}$ the \concept{target distribution}. 

\begin{remark}
%We emphasize that for every distribution $\mu=\mu_{(G,\vec{x})}$ in the class, for every subset $\Lambda\subseteq V$ and any configuration $\tau\in\Sigma^\Lambda$ which is feasible in $\mu$, $(G,\vec{x},\tau)$ must be an instance.

%We assume that given any $(G,\vec{x})$, for any $\Lambda_1,\Lambda_2\subseteq V$ and any feasible configurations $\tau_1\in\Sigma^{\Lambda_1}$ and $\tau_2\in\Sigma^{\Lambda_2}$, $(G,\vec{x},\tau_1)$ if an instance if and only if $(G,\vec{x},\tau_2)$ if an instance.

The reason to include an arbitrary partially specified configuration $\tau$ into the problem instance, is to explicitly enforce the \concept{self-reducibility}, a property that is essential to problems such as sampling and counting~\cite{JVV86}.\ifabs{}{ In our context, it means that arbitrarily fixing a feasible evaluation $\tau$ of a subset of variables, the conditional distribution $\mu^{\tau}$ forms a valid instance over the remaining free variables.}\footnote{Alternatively, one may enforce the self-reducibility implicitly by assuming it as a property of the class of joint distributions $\Problem{M}=\{\mu_{(G,\vec{x})}\}$: For every $\mu_{(G,\vec{x})}\in\Problem{M}$ where $G=(V,E)$, for any feasible configuration $\tau\in\Sigma^\Lambda$ on a subset $\Lambda\subseteq V$, there exists a $\mu_{(G',\vec{x}')}\in\Problem{M}$ such that $G'=(V,E')$ is a subgraph of $G$ with identical vertex set and  $\mu_{(G,\vec{x})}^{\tau}=\mu_{(G',\vec{x}')}$, where 
the new instance $(G',\vec{x}')$ can be constructed locally from $(G,\vec{x})$, providing $\tau_v$ to each node $v\in\Lambda$. This alternative formulation of self-reducibility is equivalent to the one we used above.
%All results in this paper hold for 
}
\ifabs{}{
%if a distribution $\mu$ is a valid instance for the computational problem, then for any feasible configuration $\tau$ on any subset, the conditional distribution $\mu^{\tau}$ is also a valid instance. 

%We consider \concept{self-reducible} problems, which in our context means that if a distribution $\mu$ over $\Sigma^V$ is an instance of the problem, then for any feasible configuration $\tau\in\Sigma^\Lambda$ partially specified on a subset $\Lambda\subseteq V$, the conditional distribution $\mu^{\tau}$ is also a valid instance. 
%To enforce the self-reducibility, we assume that an instance $(G,\vec{x})$ describes a distribution $\mu$ over $\Sigma^V$, along with an arbitrary feasible configuration $\sigma_\Lambda\in\Sigma^\Lambda$ partially specified on an arbitrary subset of vertices $\Lambda\subseteq V$, with $x_v$ for each node $v\in \Lambda$ includes the information of $\sigma_\Lambda(v)$. The actual distribution $\mu_{(G,\vec{x})}$ specified by the instance is the conditional distribution $\mu^{\sigma_\Lambda}$.

}For example, consider $\mu$ as the uniform distribution over all proper $(\Delta+1)$-colorings of $G$. For any proper $(\Delta+1)$-coloring $\tau$ of vertices in a subset $\Lambda\subseteq V$, $\mu^{\tau}$ is the uniform distribution over all proper $(\Delta+1)$-colorings of $G$ consistent with $\tau$. This equivalently specifies a uniform distribution over list-colorings of the subgraph $G[V\setminus \Lambda]$ induced by subset $V\setminus\Lambda$, where each node $v\in V$ holds a list of available colors $L_v=[q]\setminus\{\tau_u\mid uv\in E\}$.
\end{remark}

%\todo{certifiable failures, total failure is at most $\frac{1}{n}$.}

We assume that the time complexity of a distributed algorithm is fixed.
%which is given by a function $t(n)$ where $n=|V|$, or $t(n,\delta)$ when approximation is involved. 
Upon termination the algorithm either successfully returns or fails.
We assume that the algorithm succeeds with high probability, and all failures are \concept{locally certifiable}. 
Upon termination, each node $v\in V$ besides the regular output explicitly outputs a random bit $F_v$ indicating whether the algorithm fails locally at $v$, %and if $F_v=0$ the node $v$ successfully returns. 
and it is guaranteed that $\sum_{v\in V}\mathbb{E}[F_v]= O(\frac{1}{n})$. This is a well accepted notion of the Las Vegas algorithms for local computation~\cite{ghaffari2017derandomizing}.

The goal of \concept{distributed sampling} is to draw a random sample according to the target distribution $\mu^{\tau}$ given by the instance $(G,\vec{x},\tau)$. %or according to a distribution $\hat{\mu}\approx\mu^{\tau}$.

%The goal of sampling algorithms is to sample a random configuration which follows exactly or approximately (with bounded total variation error) the target distribution $\mu^{\tau}=\mu_{(G,\vec{x})}^{\tau}$ given by the instance $(G,\vec{x},\tau)$.

%For distributed sampling problems, each instance specifies a distribution $\mu$ over output vectors $\vec{y}\in\Sigma^V$, and the goal of the distributed algorithm is to output a random vector $\vec{y}$ which follows the distribution $\mu$, or a distribution $\hat{\mu}$ close enough to $\mu$ when approximation is tolerated.

%\todo{Common setup: $x_v$ includes information of $\sigma_\Lambda(v)$ and $\delta$, instance distribution $\mu=\mu_{(G,\vec{x})}$, target distribution $\mu^*=\mu^{\tau}$, $|\Sigma|=\poly(n)$, $n=|V|$}

%We enforce the self-reducibility explicitly by assuming an arbitrary partially specified feasible configuration $\tau\in\Sigma^\Lambda$ as part of the input.

%For distributed sampling problems, the goal is to output a random vector $\vec{y}$ which follows the target distribution $\mu^{\sigma_\Lambda}$, or a distribution $\alg{\mu}$ which is close enough to $\mu^{\sigma_\Lambda}$ when approximation is tolerated.

%For distributed sampling problems, the goal is to output a random vector $\vec{y}\in\Sigma^V$ which follows the target distribution $\mu^{*}$, or a distribution $\alg{\mu}$ which is close enough to $\mu^{*}$ when approximation is tolerated.

\begin{description}
\ifabs{\vspace{-6pt}}{}
\item[\textbf{Exact sampling:}] Given any instance $(G,\vec{x},\tau)$, the nodes need to sample a random $\vec{Y}=(Y_v)_{v\in V}$ upon successful termination, where each node $v\in V$ outputs $Y_v$ or fails, such that conditioning on that no one fails the distribution of $\vec{Y}$ is precisely $\mu^\tau$.

\ifabs{\vspace{-6pt}}{}
\item[\textbf{Approximate sampling:}] %Given a distribution $\mu$ over $\Sigma^V$, an arbitrary partially specified feasible configuration $\sigma_\Lambda\in\Sigma^\Lambda$ on $\Lambda\subseteq V$, and 
%For any $\delta>0$, all nodes terminates within a fixed number of steps and output a random vector $\vec{y}\in\Sigma^V$, where each node $v\in V$ outputs $y_v$, such that $\DTV{\alg{\mu}}{\mu^{*}}\le\delta$ where $\alg{\mu}$ stands for the distribution of $\vec{y}$.
Given any instance $(G,\vec{x},\tau)$, for any $\delta>0$, the nodes need to output a random $\vec{Y}=(Y_v)_{v\in V}$ upon successful termination, 
%where each node $v\in V$ outputs $y_v$ or fails, 
such that conditioning on success the distribution $\alg{\mu}$ of $\vec{Y}$  satisfies $\DTV{\alg{\mu}}{\mu^{\tau}}\le\delta$.

%each node $v\in V$ guarantees to terminate within $\tmc(n,\delta)$ rounds and output a random $y_v\in \Sigma$ so that the random vector $\vec{y}=(y_v)_{v\in V}$ follows a distribution $\alg{\mu}$ with $\DTV{\alg{\mu}}{\mu^{\sigma_\Lambda}}\le\delta$.
\end{description}

\ifabs{\vspace{-6pt}}{}
We use the \concept{distributed inference} to represent counting in distributed settings. Here, the goal is to estimate the marginal distribution $\mu_v^{\tau}$ for each node $v\in V$, where $\mu_v^{\tau}$ is the target distribution given by the instance $(G,\vec{x},\tau)$. Due to an information-theoretical argument, exact inference with local information is impossible for joint distributions with nonzero long-range correlations. Hence we focus on approximate inference.

\begin{description}
\ifabs{\vspace{-6pt}}{}
\item[\textbf{Approximate inference:}] 
Given any instance $(G,\vec{x},\tau)$, for any $\delta>0$, 
each node $v\in V$ needs to output a marginal distribution $\alg{\mu}_v$ over $\Sigma$, which is a vector $\alg{\mu}_v\in[0,1]^\Sigma$ and $\|\alg{\mu}\|_1=1$,
satisfying that $\DTV{\alg{\mu}_v}{\mu_v^{\tau}}\le\delta$.

%For any $\delta>0$, all nodes terminate in a fixed number of steps, and each node $v\in V$ outputs a marginal distribution $\alg{\mu}_v$ over $\Sigma$, which is a vector with $\alg{\mu}_v\in[0,1]^\Sigma$ and $\|\alg{\mu}\|_1=1$, such that $\DTV{\alg{\mu}_v}{\mu_v^{*}}\le\delta$, i.e.~$\alg{\mu}_v$ gives an approximation of the marginal distribution $\mu_v^{*}$ induced by the target distribution $\mu^{*}$ at node $v$.

%of the marginal distribution $\mu_v^{*}$, where $\alg{\mu}_v$ is a distribution over $\Sigma$, i.e.~$\alg{\mu}_v\in[0,1]^\Sigma$ and $\|\alg{\mu}\|_1=1$, such that $\DTV{\alg{\mu}_v}{\mu_v^{*}}\le\delta$.

%\todo{Randomized inference \LOCAL{} algorithm can be transformed to deterministic one by averaging over all random bits.}

%For randomized algorithms, it also holds that all nodes terminate in a fixed number of steps, and the distribution $\alg{\mu}_v$ outputted by each node $v\in V$ is a random variable, which needs to satisfy $\DTV{\alg{\mu}_v}{\mu_v^{\sigma_\Lambda}}\le\delta$ with probability at least $1-\frac{1}{n}$.

%each node $v\in V$ guarantees to terminate within $\tinf(n,\delta)$ rounds and output a marginal distribution $\alg{\mu}_v$ over $\Sigma$, which is vector $\alg{\mu}\in[0,1]^{\Sigma}$ with $\|\alg{\mu}\|_1=1$, such that $\DTV{\alg{\mu}_v}{\mu_v^{\sigma_\Lambda}}\le\delta$. If the algorithm is randomized and the output $\alg{\mu}_v$ is a random variable, we need $\DTV{\alg{\mu}_v}{\mu_v^{\sigma_\Lambda}}\le\delta$ to hold with probability at least $1-\frac{1}{n}$.

%\item[\textbf{Approximate counting:}] 
\end{description}
\ifabs{\vspace{-6pt}}{}
A more accurate approximate inference with bounded multiplicative error is discussed in Section~\ref{sec:boosting}. %where show its equivalence with approximate inference with bounded total variation error.

%There is also an alternative way to formulate the \concept{distributed counting}, which is more aligned with the standard approximate counting in non-distributed computing. Here, each node needs to output a number $\alg{z}_v$, such that the product $\alg{Z}=\prod_{v\in V}\alg{z}_v$ gives an estimation of the total number, or more generally the total weights, of all feasible configurations with respect to $\mu^{\tau}$. In Section~\ref{sec:approx-counting}, we will see that the two notions are computationally equivalent up to a polylogarithmic factor. %in the \LOCAL{} model.

%We restrict to deterministic approximate inference algorithms without loss of generality, because randomized approximate inference algorithms can be derandomized by averaging over all local random choices. We will see this in Section~\ref{sec:approx-sample-to-approx-infer}.

\ifabs{\vspace{6pt}
\noindent\textbf{Joint Distributions Defined by Local Constraints:}}{
\paragraph{Joint Distributions Defined by Local Constraints:}}
We use the Gibbs distributions defined by the \concept{weighted constraint satisfaction problems (CSPs)}, also known as the \concept{factor graphs}~\cite{mezard2009information}, to model joint distributions characterized by local constraints.
\ifabs{\vspace{-3pt}}{}
\ifabs{\begin{definition}[local Gibbs distributions]}{\begin{definition}[Gibbs distributions]}\label{def:Gibbs}
A \concept{Gibbs distribution} $\mu$ is specified by a tuple $(G,\Sigma,\mathcal{F})$, where $G=(V,E)$ is an undirected graph, $\Sigma$ is an alphabet of size $q=|\Sigma|$ bounded in polynomial of $n=|V|$, and $\mathcal{F}$ is a collection of \concept{constraints} (also called \concept{factors}). A constraint $(f,S)\in\mathcal{F}$ consists of a nonnegative-valued function $f:\Sigma^S\to\mathbb{R}_{\ge 0}$ defined on the \concept{scope} $S\subseteq V$. \ifabs{}{A constraint $(f,S)$ is a \concept{soft constraint} if the function $f$ is positive-valued, otherwise it is a \concept{hard constraint}.}

Each configuration $\sigma\in\Sigma^V$ is assigned a weight:
\ifabs{$w(\sigma) = \prod_{(f,S)\in\mathcal{F}}f(\sigma_{S})$.}{
\begin{align}
w(\sigma) = \prod_{(f,S)\in\mathcal{F}}f(\sigma_{S}).\label{eq:configuration-weight}
\end{align}}
%where $\sigma_{S}$ represents the restriction of $\sigma\in\Sigma^V$ onto the subset $S\subseteq V$. 
The {Gibbs distribution} $\mu$ over all configurations in $\Sigma^V$ is defined proportional to the weights
\ifabs{$\mu(\sigma)= \frac{w(\sigma)}{Z}$,}{
\[
\mu(\sigma)= \frac{w(\sigma)}{Z},
\]}
where the normalizing factor $Z=\sum_{\sigma\in\Sigma^V}w(\sigma)$ is known as the \concept{partition function}, which can be seen as a function of the specification $(G,\Sigma,\mathcal{F})$ of the distribution $\mu$. 

\ifabs{
A Gibbs distribution specified by $(G,\Sigma,\mathcal{F})$ 
is \concept{local} if for any constraint $(f,S)\in\mathcal{F}$, the diameter of the scope $S$ in graph $G$ is bounded as $\max_{u,v\in S}\dist_G(u,v)=O(1)$.
}{}
\end{definition}

\ifabs{}{
In particular, when all constraint functions $f$ are Boolean-valued functions, the distribution $\mu$ is the uniform distribution over all feasible configurations, and $Z$ gives the total number of feasible configurations.

%\begin{definition}[local Gibbs distributions]
%A class of Gibbs distributions is called \concept{local} if for every Gibbs distribution $(G,\Sigma,\mathcal{F})$ from the class, for every factor $(f,S)\in\mathcal{F}$, the diameter of the scope $S$ in $G$ is bounded by a constant, i.e.~$\max_{u,v\in S}\dist_G(u,v)=O(1)$.
%\end{definition}

\begin{definition}[locality of Gibbs distributions]\label{def:local-Gibbs}
%A Gibbs distribution specified by $(G,\Sigma,\mathcal{F})$ 
%is \concept{$\ell$-local} if for any constraint $(f,S)\in\mathcal{F}$, the diameter of the scope $S$ in graph $G$ is bounded by $\ell$, that is, $\max_{u,v\in S}\dist_G(u,v)\le\ell$.
%A class of Gibbs distributions are \concept{local} if they are $O(1)$-local. 

A class of Gibbs distributions specified by $(G,\Sigma,\mathcal{F})$ 
are \concept{local} if for any constraint $(f,S)\in\mathcal{F}$, the diameter of the scope $S$ in graph $G$ is bounded by a constant, that is, $\max_{u,v\in S}\dist_G(u,v)=O(1)$.
\end{definition}
}
%In particular, when $f_c$'s are Boolean-valued functions, the Gibbs distribution $\mu$ is the uniform distribution over CSP solutions.

%\todo{The role of local weighted CSP to distributed sampling and counting is as LCL to distributed graph problems.}

%\todo{input vector for Gibbs distributions}

\ifabs{\vspace{-3pt}}{}
The local Gibbs distributions are the counterparts of the LCL problems~\cite{naor1995can} in the world of distributed sampling/counting.
Just as that  the LCL problems are the distributed graph problems that are defined by local constraints, the local Gibbs distributions are the joint distributions  that are defined by local factors. %just as that the LCL problems represent the special distributed graph problems that are defined by local constraints.

\ifabs{}{
When an instance $(G,\vec{x},\tau)$ is provided to distributed algorithm, if the joint distribution $\mu_{(G,\vec{x})}$ is a local Gibbs distribution specified by $(G,\Sigma,\mathcal{F})$, we assume that for each node $v\in V$, $x_v$ includes the descriptions of all local constrains $(f,S)\in\mathcal{F}$ that $v\in S$.

An important property of Gibbs distribution is the \concept{spatial Markovian} property, also known as conditional independence, which is stated formally by the following proposition. 

\begin{proposition}[conditional independence]\label{prop:cond-ind}
Let $\mu$ be a Gibbs distribution specified by $(G,\Sigma,\mathcal{F})$, where $G=(V,E)$. Let $H=(V,F)$ denote the hypergraph with vertices $V$ and hyperedges $F=\{S\mid (f,S)\in\mathcal{F}\}$. 
Suppose that $A,B,C\subset V$ are disjoint nonempty subsets and $C$ is a vertex separator whose removal disconnects $A$ and $B$ in $H$.
%$V$ is partitioned into nonempty $A,B,C\subset V$ such that $C$ is a vertex separator for $A$ and $B$ in $H$, %i.e.~for any $(f,S)\in\mathcal{F}$, either $S\cap A=\emptyset$ or $S\cap B=\emptyset$. 
%Let $\sigma_C\in\Sigma^C$ be an arbitrary configuration on $C$ that is feasible with respect to $\mu$. 
For a random vector $Y\in\Sigma^V$ distributed according to $\mu$,  $Y_A$ and $Y_B$ are conditionally independent given that $Y_C$ is arbitrarily and feasibly fixed. Formally, for any $\sigma_C\in\Sigma^C$ that is feasible with respect to $\mu$, any $\sigma_A\in\Sigma^A$ and $\sigma_B\in\Sigma^B$, it holds that
\[
\Pr_{Y\sim\mu}[Y_A=\sigma_A\wedge Y_B=\sigma_B\mid Y_C=\sigma_C]=
\Pr_{Y\sim\mu}[Y_A=\sigma_A \mid Y_C=\sigma_C]\Pr_{Y\sim\mu}[Y_B=\sigma_B \mid Y_C=\sigma_C].
\]
\end{proposition}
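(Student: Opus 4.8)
The plan is to reduce the statement to the standard Hammersley--Clifford type factorization argument, using the fact that the weight $w(\sigma)$ in~\eqref{eq:configuration-weight} factors through the hyperedges of $H$. First I would invoke the hypothesis that $C$ separates $A$ from $B$ in $H$ to partition the hyperedge set $F$ into three groups: those contained in $A\cup C$, those contained in $B\cup C$, and those contained in $C$ alone (a hyperedge cannot meet both $A\setminus C$ and $B\setminus C$, since that would connect $A$ and $B$ avoiding $C$). Let $W=V\setminus(A\cup B\cup C)$ be the remaining vertices; a hyperedge may also touch $W$, so more carefully I would split $F$ into the hyperedges that are disjoint from $B\setminus C$ (call their product $w_1$, a function of $\sigma$ restricted to $V\setminus B$, hence in particular of $\sigma_A,\sigma_C$ and $\sigma_W$) and those that meet $B\setminus C$ (product $w_2$, a function of $\sigma_B,\sigma_C,\sigma_W$ only, since such a hyperedge is disjoint from $A\setminus C$). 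Then $w(\sigma)=w_1(\sigma_{V\setminus B})\,w_2(\sigma_{B\cup C\cup W})$.

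Next I would write out the conditional probabilities explicitly. Fix a feasible $\sigma_C\in\Sigma^C$. Summing $w(\sigma)$ over all extensions with $Y_A=\sigma_A$, $Y_B=\sigma_B$, $Y_C=\sigma_C$, and arbitrary $\sigma_W$, the product structure gives
\[
\sum_{\sigma_W} w_1(\sigma_A,\sigma_C,\sigma_W)\,w_2(\sigma_B,\sigma_C,\sigma_W),
\]
which does not obviously split because both factors depend on $\sigma_W$. This is the main obstacle, and the way around it is to handle $W$ correctly: $W$ itself is separated into its $A$-side part and $B$-side part by $C$, so I would recursively (or by the same partition argument applied to the induced structure) decompose $W = W_A \cup W_B$ where no hyperedge crosses between $W_A\setminus C$ and $W_B\setminus C$ without passing through $C$, concluding $w_1$ depends only on $(\sigma_A,\sigma_C,\sigma_{W_A})$ and $w_2$ only on $(\sigma_B,\sigma_C,\sigma_{W_B})$. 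With this the double sum over $\sigma_{W_A}$ and $\sigma_{W_B}$ factors into a product of two independent sums. Concretely, define $g_A(\sigma_A,\sigma_C)=\sum_{\sigma_{W_A}}w_1$ and $g_B(\sigma_B,\sigma_C)=\sum_{\sigma_{W_B}}w_2$; then $\Pr[Y_A=\sigma_A\wedge Y_B=\sigma_B\mid Y_C=\sigma_C]$ equals $g_A(\sigma_A,\sigma_C)g_B(\sigma_B,\sigma_C)$ divided by the normalizer $\bigl(\sum_{\sigma_A'}g_A(\sigma_A',\sigma_C)\bigr)\bigl(\sum_{\sigma_B'}g_B(\sigma_B',\sigma_C)\bigr)$ (using feasibility of $\sigma_C$ to ensure both marginal sums are positive).

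Finally I would identify $\Pr[Y_A=\sigma_A\mid Y_C=\sigma_C]$ with $g_A(\sigma_A,\sigma_C)/\sum_{\sigma_A'}g_A(\sigma_A',\sigma_C)$ and symmetrically for $B$, by the same factorization applied after marginalizing out the other block, and conclude that the joint conditional probability is the product of the two conditional marginals, which is exactly the claimed identity. The only delicate points are the bookkeeping of which hyperedges and which vertices of $W$ land on which side — I expect this to be the one genuinely technical step — and the positivity of the conditioning event, which is exactly what the feasibility assumption on $\sigma_C$ provides, so no division by zero occurs.
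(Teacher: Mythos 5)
The paper states this proposition without proof --- it is invoked as the standard global Markov property of Gibbs (factor-graph) distributions --- so I will assess your argument on its own terms. Your Hammersley--Clifford-style factorization is the right approach, and you correctly flag the treatment of $W$ as the delicate step, but the partition of hyperedges you actually commit to does not deliver the dependency structure you then assert. Splitting $\mathcal{F}$ into ``$S$ disjoint from $B$'' (giving $w_1$) versus ``$S$ meets $B$'' (giving $w_2$) does not force $w_1$ to be a function of $(\sigma_A,\sigma_C,\sigma_{W_A})$ alone: a hyperedge $S$ can be disjoint from $B$ yet have $S\setminus C$ lying on the $B$-side of $H-C$ (for instance $S\subseteq C\cup W_B$ with $S\cap W_B\neq\emptyset$), in which case $w_1$ still carries dependence on $\sigma_{W_B}$ and the sum over $\sigma_W$ does not factor as you claim. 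A concrete instance: $V=\{a,c,w,b\}$, $A=\{a\}$, $B=\{b\}$, $C=\{c\}$, $W=W_B=\{w\}$, with hyperedges $\{a,c\}$, $\{c,w\}$, $\{w,b\}$; here $\{c,w\}$ is disjoint from $B$ so your rule assigns it to $w_1$, yet it depends on $\sigma_w$.

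The repair is to partition hyperedges by where $S\setminus C$ falls, not by whether $S$ meets $B$. Since a hyperedge makes all its vertices mutually adjacent, $S\setminus C$ always lies within a single connected component of $H-C$; assign $S$ to the $w_2$-group exactly when that component contains a vertex of $B$ (hyperedges with $S\subseteq C$ may go to either side), and to $w_1$ otherwise. Taking $W_B$ to be the vertices of $W$ lying in $B$-components of $H-C$ and $W_A=W\setminus W_B$, this partition gives precisely $w_1=w_1(\sigma_A,\sigma_C,\sigma_{W_A})$ and $w_2=w_2(\sigma_B,\sigma_C,\sigma_{W_B})$, and it also automatically captures your earlier observation that no hyperedge meets both $A$ and $B$. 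With that correction, the rest of your argument --- the independent sums over $\sigma_{W_A}$ and $\sigma_{W_B}$, the definitions of $g_A,g_B$, the recovery of the conditional marginals by summing out the other block, and positivity of the normalizers from feasibility of $\sigma_C$ --- goes through cleanly.
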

}

%Indeed, the conditional independence is the only property of Gibbs distributions we use.

We also consider a restrictive class of Gibbs distributions, with the following property.

\begin{definition}[locally admissible]\label{def:local-admissible}
Let $\mu$ be a Gibbs distribution specified by $(G,\Sigma,\mathcal{F})$ where $G=(V,E)$. Let $\Lambda\subseteq V$ be a subset. A configuration $\sigma\in\Sigma^\Lambda$ on a subset $\Lambda\subseteq V$ is \concept{locally feasible} if $\sigma$ itself does not violate any constraint onsite, 
\ifabs{i.e.~$\prod_{(f,S)\in\mathcal{F}\atop S\subseteq \Lambda}f(\sigma_S) >0$.}{
formally:
\[
\prod_{(f,S)\in\mathcal{F}\atop S\subseteq \Lambda}f(\sigma_S) >0.
\]}
%where $\sigma_S$ stands for the restriction of $\sigma_\Lambda$ further onto the subset $S\subseteq \Lambda$.

\ifabs{
A Gibbs distribution $\mu$ is said to be \concept{locally admissible} if $\forall \Lambda\subseteq V$, $\forall \sigma\in\Sigma^\Lambda$, $\sigma$ is feasible $\iff$ $\sigma$ is locally feasible.
}{
Recall that a $\sigma\in\Sigma^\Lambda$ is feasible if there is a $\tau\in\Sigma^V$ such that $\mu(\sigma)>0$ and $\tau_\Lambda=\sigma$. Clearly $\sigma$ is locally feasible if it is feasible. 
A Gibbs distribution $\mu$ is said to be \concept{locally admissible} if the converse is also true:
\[
\forall \Lambda\subseteq V,\,\, \forall \sigma\in\Sigma^\Lambda:
\quad
\sigma\text{ is feasible }\iff \sigma \text{ is locally feasible.}
\]}
\end{definition}

\begin{remark}
The locally admissible, local Gibbs distributions represent the LCL problems that can be solved by sequential local oblivious algorithms.
For any class of locally admissible, local Gibbs distributions $\Problem{M}=\{\mu_{(G,\vec{x})}\}$, the problem of constructing a $\vec{y}$ that is feasible with respect to $\mu_{(G,\vec{x})}$, can always be solved by a sequential local oblivious algorithm on any vertex ordering, i.e.~the problem is in \SLOCAL$(O(1))$.
\end{remark}

%\begin{assumption}
%For any vertex subset $S\subseteq V$, a partially specified configuration $\sigma_S\in\Sigma^S$ is feasible if and only if it is locally feasible.
%\end{assumption}

\ifabs{}{
\paragraph{The \LOCAL{} Model:}
In the \LOCAL{} model~\cite{naor1995can, peleg2000distributed}, the network is a simple, undirected graph $G=(V,E)$. Initially, each node $v\in V$ receives a local input and an arbitrary long random bit string sampled independently at $v$. For a \LOCAL{} algorithm with time complexity $t$, each node $v\in V$ gathers all information within radius $t$ from $v$, including the topology of the graph, the inputs and random bits of the nodes within that radius, and performs an arbitrary local computation with the information to compute an output.
}

\section{From Sequential to Parallel for Distributed Sampling}\label{sec:SLOCAL}
In this section, we establishes the computational equivalence (up to polylogarithmic factor) between approximate inference and approximate sampling in the \LOCAL{} model. The results in this section holds for general classes of joint distributions $\Problem{M}=\{\mu_{(G,\vec{x})}\}$.

A key step is to first resolve the problems sequentially with bounded locality.
The sequential local mode (\SLOCAL) is introduced in a recent breakthrough~\cite{ghaffari2016complexity}. 
\ifabs{}{We adopt the randomized version. 
Let $G=(V,E)$ be a simple, undirected graph with $n=|V|$. Each node $v\in V$ maintains a local state $S_v$ in its unbounded local memory. Initially, $S_v$ contains $v$'s local input and an arbitrarily long random bit string generated by $v$. 
An \SLOCAL{} algorithm $\mathcal{A}$ with locality $r(n)$ scans the nodes in an arbitrary ordering $\pi = (v_1,v_2,\ldots,v_n)$ provided by an adversary. 
%The algorithm must work for any given ordering $p$. 
When processing node $v$, $\mathcal{A}$ reads the states $S_u$ for all $u\in B_{r_v}(v)$ with $r_v\le r(n)$, then performs unbounded computation to update the state $S_v$ and compute the output $Y_v$ at $v$. 

}
An important property of the \SLOCAL{} model is that any \SLOCAL{} algorithm with  locality $r$ can be transformed into a \LOCAL{} algorithm with time complexity $O(r)$ multiplying the cost for network decomposition~\cite{panconesi1996complexity}.
%\[
%O(r\times\{\text{network decomposition cost}\}).
%\] 
We restate this in the following lemma in a slightly more refined way, in order to cover more general problems such as distributed sampling.

\begin{lemma}[Ghaffari, Kuhn, Maus~\cite{ghaffari2016complexity}]
\label{lemma-slocal-local}
Let $\mathcal{A}$ be an \SLOCAL{} algorithm which given any instance $\mathcal{I}$ on graph $G=(V,E)$ with $n=|V|$, any ordering $\pi$ of nodes in $V$, returns within locality $r(n)$ a random vector $Y=(Y_v)_{v\in V}$, where each node $v\in V$ outputs $Y_v$, such that $Y$ follows the distribution $\alg{\mu}_{\mathcal{I},\pi}$.

%If there is an \SLOCAL{} algorithm with locality $r$ which given an instance $\mathcal{I}$ on graph $G=(V,E)$ the output of the algorithm with respect to ordering $\pi$ follows the distribution $\alg{\mu}_{\mathcal{I},\pi}$, then there is a \LOCAL{} algorithm such that 

Then there is a \LOCAL{} algorithm $\mathcal{B}$  which given any instance $\mathcal{I}$ on graph $G=(V,E)$ with $n=|V|$, outputs within time complexity $O(r(n)\log^2 n)$ a pair $(Y_v,F_v)$ at each node $v\in V$, where $F_v$ is a Boolean random variable indicating whether the algorithm fails locally at $v$, satisfying that $\sum_{v\in V}\mathbb{E}[F_v]<\frac{1}{n^2}$ and conditioning on that $F_v=0$ for all nodes $v\in V$, the distribution of $Y=(Y_v)_{v\in V}$ is precisely $\alg{\mu}_{\mathcal{I},\pi}$ for some ordering  $\pi$ of nodes in $V$.
\end{lemma}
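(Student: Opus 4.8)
The plan is to reduce this to the standard \SLOCAL{}-to-\LOCAL{} simulation via network decomposition, tracking carefully where randomness and the ``failure bits'' $F_v$ come in. Recall that a network decomposition of $G$ with quality $(C,D)$ is a partition of $V$ into $C$ color classes such that each connected component induced by a single color class has diameter at most $D$. Given such a decomposition together with a radius-$\rho$ local procedure (here $\rho = r(n)$ after we account for the fact that the \SLOCAL{} algorithm reads radius-$r(n)$ balls in the \emph{state graph}, which after $C$ rounds of processing can be a radius-$Cr(n)$ ball in $G$), one simulates $\mathcal{A}$ by processing color classes one at a time: within color class $i$, every component is handled independently and in parallel, and the order in which vertices of that component are visited (together with the already-fixed states of previously processed color classes) induces a global ordering $\pi$ consistent with some adversarial ordering. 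Running all of this costs $O((C + D)\cdot r(n))$ rounds in \LOCAL{}. With the best known \emph{deterministic} network decomposition one has $C, D = O(\log n)$, giving the claimed $O(r(n)\log^2 n)$.

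The steps I would carry out, in order: (i) State the network-decomposition primitive being used and its round complexity, and fix notation for color classes $V_1,\dots,V_C$ and components; (ii) describe the simulation: each node $v$ in color class $i$, when its component is being processed, gathers the radius-$\Theta(i \cdot r(n))$ ball in $G$ — this contains all the inputs, all the independent random strings, and all the already-computed states $S_u$ for $u$ in lower color classes within range — and runs $\mathcal{A}$'s update rule to produce $S_v$ and $Y_v$; argue that the resulting $(Y_v)_{v\in V}$ is distributed exactly as $\alg{\mu}_{\mathcal{I},\pi}$ for the ordering $\pi$ that lists $V_1$ first (in any fixed internal order), then $V_2$, etc., since the \SLOCAL{} model allows an adversarial ordering and this is one such ordering; (iii) handle the $F_v$ bits: the only source of failure is that a node's required radius $i\cdot r(n)$ exceeds what it can actually see — but since $i \le C = O(\log n)$ deterministically and $r(n)$ is a fixed function, no truncation occurs, so in fact the simulation is exact and we may set $F_v \equiv 0$. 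If instead one uses a \emph{randomized} network decomposition (Linial–Saks style) to get a cleaner round bound, then $F_v$ is set to $1$ precisely on the low-probability event that $v$'s cluster radius is too large; bounding $\sum_v \mathbb{E}[F_v] < 1/n^2$ is then a union bound over $n$ nodes against a per-node failure probability driven below $1/n^3$ by running the decomposition with an $O(\log n)$ blow-up, and conditioned on no failure the decomposition is a genuine $(C,D)$-decomposition so the exactness argument of step (ii) applies verbatim.

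The main obstacle — and the reason the lemma is ``restated in a slightly more refined way'' — is the interaction between randomness and conditioning. In the \SLOCAL{} model each node holds a private random string and the output is a \emph{random} vector; we must make sure that the \LOCAL{} simulation samples from the same joint distribution $\alg{\mu}_{\mathcal{I},\pi}$ and not merely from its ``typical'' part. Concretely, the failure event (cluster too big, or random decomposition degenerate) must be independent enough of the randomness feeding $\mathcal{A}$, or at least the conditional-on-success distribution must still be exactly $\alg{\mu}_{\mathcal{I},\pi}$. The clean way to arrange this is to use a \emph{deterministic} network decomposition for the clustering (so the partition into color classes is a function of the graph alone, hence independent of $\mathcal{A}$'s random strings) and to let $\mathcal{A}$'s per-node random strings be exactly the random strings provided in the \LOCAL{} model; then there is literally no failure, $F_v \equiv 0$, and ``conditioning on $F_v = 0$ for all $v$'' is a null conditioning, so the output distribution is $\alg{\mu}_{\mathcal{I},\pi}$ on the nose. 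If the paper prefers the randomized decomposition for a better exponent, one isolates the decomposition's randomness from $\mathcal{A}$'s randomness (distinct random strings), makes $F_v$ a function of the decomposition randomness only, and observes that conditioned on $\{F_v = 0 \ \forall v\}$ the color classes form a valid $(C,D)$-decomposition and the downstream simulation is exact in $\mathcal{A}$'s randomness — which is exactly what is needed.
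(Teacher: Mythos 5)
Your overall plan — compute a network decomposition on a suitable power graph, process color classes with a chromatic scheduler, store-and-read intermediate states, and keep the decomposition's coin flips independent of $\mathcal{A}$'s coin flips so that failure is locally certifiable and the conditional output distribution is preserved — is the same as the paper's (a refinement of Theorem 1.6 of Ghaffari--Kuhn--Maus, decomposing $G^{r+1}$).

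Two remarks on the details. First, your preference for the deterministic decomposition as the ``clean'' route does not actually deliver the claimed bound in the paper's setting: the deterministic decomposition the paper cites (Panconesi--Srinivasan) costs $2^{O(\sqrt{\log n})}$ rounds, which is super-polylogarithmic, and the deterministic $(O(\log n),O(\log n))$ decomposition in $\mathrm{poly}(\log n)$ rounds (Rozhoň--Ghaffari and follow-ups) postdates this paper. So the $O(r(n)\log^2 n)$ bound \emph{requires} the randomized (Linial--Saks-type) decomposition, and the $F_v$ machinery is not optional: it is precisely why the paper restates the GKM theorem ``in a slightly more refined way.'' Your alternative randomized route — distinct random strings for the decomposition and for $\mathcal{A}$, $F_v$ a function of the decomposition randomness only, and the observation that conditioning on $\{F_v=0\ \forall v\}$ restores a valid $(C,D)$-decomposition under which the downstream simulation of $\mathcal{A}$ is exact — is the correct argument and matches the paper's intent. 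Second, the radius a node gathers in phase $i$ is governed by the cluster diameter $D=O(\log n)$ (times $r$), not by the color index $i$: one collects a $\Theta(D\cdot r)$-ball per color class and pays $O(C\cdot D\cdot r)=O(r\log^2 n)$ total; the $\Theta(i\cdot r)$ figure suggests re-deriving earlier states transitively, which the store-and-read scheduling avoids. The numerics happen to still sum to $O(r\log^2 n)$, but the mechanism you describe is not quite the one used. Lastly, you could be slightly more careful about ``for some ordering $\pi$'': since $\pi$ is itself a function of the decomposition's random bits, the conditional statement should be read as holding given the decomposition (hence given $\pi$), which is what the applications in the paper need.
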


%The following proof is analogous to the proof of Theorem 1.6 in [cite], we include it here for completeness.
%\begin{proof}%[Proof of Lemma~\ref{lemma-slocal-local}]

\ifabs{
With this lemma, the standard sequential sampler by progressively sampling according to the marginal distributions, can be transformed to a \LOCAL{} algorithm for approximate sampling.
}{
Lemma~\ref{lemma-slocal-local} is proved in the same way as Theorem 1.6 in~\cite{ghaffari2016complexity}.  Specifically, the \LOCAL{} algorithm $\mathcal{B}$ constructs an $(O(\log n), O(\log n))$-network decomposition on the power graph $G^{r+1}$, where every pair of $u$ and $v$ that $\dist_G(u,v)\le r+1$ are connected by an edge, and then simulates in parallel according to the chromatic scheduler provided by the network decomposition, the \SLOCAL{} algorithm $\mathcal{A}$ upon some ordering $\pi$ of nodes.
%in which for any vertices $u,v$, there is an edge between $u$ and $v$ in $G^{r+1}$ if and only if $\dist_G(u,v)\le r+1$. 

Lemma~\ref{lemma-slocal-local} can be proved by going through exactly the same proof, with a slightly more refined purpose to verify that all failures caused by network decomposition are locally certifiable and the output distribution $\alg{\mu}_{\mathcal{I},\pi}$ is preserved conditioning on success.

%\end{proof}

%\begin{theorem}
%If the Monte Carlo sampling (with total variation error $\delta$) for a class of distributions $\mu_{(G,\vec{x})}$ can be solved by an \SLOCAL{} algorithm with locality at most $t(n,\delta)$, then it can be solved by a \LOCAL{} algorithm with time complexity at most $O(t(n,\delta)\log^2 n)$.
%\end{theorem}

%\begin{theorem}
%For a class of distributions $\mu_{(G,\vec{x})}$, if the \LOCAL{} complexity of approximate inference (with total variation error $\delta$) is at most $t(n,\delta)$, then the \LOCAL{} complexity of  Monte Carlo sampling (with total variation error $\delta=\Omega(\frac{1}{\poly(n)})$) is $O(t(n,\frac{\delta}{n}-\frac{1}{\poly(n)})\log^2 n)$.
%\end{theorem}

%\begin{theorem}
%For a class of joint distributions $\Problem{M}=\{\mu_{(G,\vec{x})}\}$, 
%if the \LOCAL{} complexity of approximate inference is at most $t(n,\delta)$, 
%then the \LOCAL{} complexity of  Monte Carlo sampling (with any total variation error $\delta+\frac{1}{\poly(n)}$) is $O\left(t\left(n,\frac{\delta}{n}\right)\log^2 n\right)$.
%\end{theorem}

\subsection{Approximate Inference $\implies$ Approximate Sampling}
}
%We define following sequential sampling algorithm $\mathcal{S}$, which will be used in later proofs.

%\begin{algorithm}
%\label{def-sequential-sampling}
%For a class of joint distributions $\Problem{M}=\{\mu_{(G,\vec{x})}\}$, let $\mathcal{A}$ be any algorithm for approximate inference. For any instance $(G, \vec{x}, \tau)$, where $(G, \vec{x})$ specifies a joint distribution $\mu_{(G, \vec{x})} \in \Problem{M}$ and $\tau$ is any feasible partially specified configuration, given any ordering of nodes $\pi = v_1,v_2,\ldots,v_n$, algorithm $\mathcal{S}$ returns a random configuration
%\begin{align*}
%\sigma = \mathcal{S}\left(\mathcal{A} ,(G, \vec{x}, \tau), \pi\right) \in \Sigma^V
%\end{align*}
%by following procedure.
%\begin{itemize}
%\item At the beginning, let $\sigma = \tau$.
%\item Algorithm $\mathcal{S}$ processes nodes in order of $\pi$. In $i^{\text{th}}$ step,  algorithm $\mathcal{S}$ uses algorithm $\mathcal{A}$ to approximate the marginal distribution $\mu^\sigma_{v_i}$. Suppose the distribution returned by $\mathcal{A}$ is $\alg{\mu}_{v_i}^\sigma$, $\mathcal{S}$ samples $c_i \in \Sigma$ from the distribution $\alg{\mu}_{v_i}^\sigma$, then $\mathcal{S}$  updates the configuration $\sigma$ by setting $\sigma_{v_i} \gets c_i$.
%\item Return the random distribution $\sigma \in \Sigma^V$.
%\end{itemize}
%\end{algorithm}

\begin{theorem}\label{thm:approx-infer-approx-sample}
For any class of joint distributions $\Problem{M}=\{\mu_{(G,\vec{x})}\}$, 
if there is a \LOCAL{} algorithm for approximate inference (within arbitrary total variation error $\delta>0$) with time complexity at most $t(n,\delta)$, 
then there is a \LOCAL{} algorithm for approximate sampling (within arbitrary total variation error $\delta>0$) with time complexity $O\left(t\left(n,\frac{\delta}{n}\right)\log^2 n\right)$.
\end{theorem}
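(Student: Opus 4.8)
The plan is to build the approximate sampler by first constructing an \SLOCAL{} algorithm that samples sequentially using the inference oracle, and then invoking Lemma~\ref{lemma-slocal-local} to convert it into a \LOCAL{} algorithm. The \SLOCAL{} algorithm processes the nodes in the adversarial ordering $\pi=(v_1,\dots,v_n)$ and maintains the partial configuration $\tau^{(i)}=(Y_{v_1},\dots,Y_{v_{i-1}})$ sampled so far (together with the original conditioning $\tau$). When it reaches $v_i$, it would like to sample $Y_{v_i}$ from the true conditional marginal $\mu_{v_i}^{\tau\cup\tau^{(i)}}$; instead it runs the given approximate inference algorithm on the instance $(G,\vec x,\tau\cup\tau^{(i)})$ — which is a legitimate instance by self-reducibility — with error parameter $\delta'=\delta/n$, obtains an estimated marginal $\alg{\mu}_{v_i}$ with $\DTV{\alg{\mu}_{v_i}}{\mu_{v_i}^{\tau\cup\tau^{(i)}}}\le\delta'$, and samples $Y_{v_i}\sim\alg{\mu}_{v_i}$. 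Since the inference algorithm at $v_i$ needs radius $t(n,\delta')$ around $v_i$, and the states it reads encode the previously-sampled values $\tau^{(i)}$ (which were themselves written into local states), this is an \SLOCAL{} algorithm with locality $r(n)=O(t(n,\delta/n))$.

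The correctness step is a hybrid argument bounding the total variation distance between the distribution $\alg\mu$ of the output $\vec Y$ and the target $\mu^\tau$. I would define hybrid distributions $\nu_0,\nu_1,\dots,\nu_n$, where $\nu_i$ samples the first $i$ coordinates using the approximate marginals and the remaining coordinates using the exact conditional distribution $\mu^{\tau\cup\tau^{(i+1)}}$; so $\nu_0=\mu^\tau$ and $\nu_n=\alg\mu$. The key inequality is $\DTV{\nu_{i-1}}{\nu_i}\le\mathbb{E}_{\tau^{(i)}}\big[\DTV{\alg\mu_{v_i}}{\mu_{v_i}^{\tau\cup\tau^{(i)}}}\big]\le\delta'$: conditioned on the first $i-1$ coordinates, $\nu_{i-1}$ and $\nu_i$ differ only in how the $i$-th coordinate is drawn (exact vs.\ approximate marginal), and afterwards both complete with the same exact conditional distribution, so coupling the first $i-1$ coordinates identically and invoking the coupling lemma gives the bound; averaging over $\tau^{(i)}$ and using that each conditional call has error $\le\delta'$ yields the step. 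The triangle inequality then gives $\DTV{\alg\mu}{\mu^\tau}\le n\delta'=\delta$.

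Finally, I would apply Lemma~\ref{lemma-slocal-local} to this \SLOCAL{} algorithm $\mathcal A$ with locality $r(n)=O(t(n,\delta/n))$, obtaining a \LOCAL{} algorithm $\mathcal B$ running in $O(r(n)\log^2 n)=O\big(t(n,\delta/n)\log^2 n\big)$ rounds that, conditioned on $F_v=0$ everywhere (which holds except with probability $O(1/n^2)$, hence in particular $\sum_v\mathbb E[F_v]=O(1/n)$ as required), outputs $\vec Y$ distributed exactly as $\alg\mu_{\mathcal I,\pi}$ for some ordering $\pi$. Since the $\delta$-closeness bound above holds for \emph{every} ordering $\pi$, the conditional output distribution is within $\delta$ of $\mu^\tau$ in total variation, as needed. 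One subtlety to check is that the failure event of the network decomposition is independent of (or at least does not bias) the sampled values, so that conditioning on success preserves the distribution $\alg\mu_{\mathcal I,\pi}$ — this is exactly the refined guarantee built into the statement of Lemma~\ref{lemma-slocal-local}, so it can be quoted directly.

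The main obstacle is the hybrid/coupling argument: one must be careful that the \SLOCAL{} simulation's "approximate marginal" at step $i$ is itself a random object depending on $\tau^{(i)}$, so the error accounting must be done in expectation over the partial configurations, and one must verify that the instance $(G,\vec x,\tau\cup\tau^{(i)})$ fed to the inference algorithm is always feasible — which follows because we only ever extend a feasible partial configuration by a value in the support of the \emph{true} conditional marginal whenever the approximate marginal assigns it positive probability; to be safe one can restrict $\alg\mu_{v_i}$ to the feasible support (or note that infeasible values contribute to the $\delta'$ error and can be rejected), so that the chain of conditionings never breaks down.
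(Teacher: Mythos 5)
Your proposal is correct and follows essentially the same route as the paper: both reduce to an \SLOCAL{} algorithm that sequentially samples each $Y_{v_i}$ from the approximate marginal computed by the inference oracle with error parameter $\delta/n$, and then invoke Lemma~\ref{lemma-slocal-local} to convert it into a \LOCAL{} algorithm with an $O(\log^2 n)$ overhead; your hybrid/coupling calculation is just an explicit version of the ``coupling argument'' the paper invokes without detail, and your feasibility caveat is handled implicitly by the coupling as you note. The only step the paper makes explicit that you omit is Proposition~\ref{prop:deterministic-approx-infer}, which first converts the inference oracle into a deterministic, failure-free subroutine so that each estimated marginal is a fixed object satisfying the $\delta/n$ bound with certainty---a minor technicality that your analysis implicitly assumes.
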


\ifabs{
On the other hand, the reduction from approximate inference to approximate sampling in the \LOCAL{} model can be done by averaging over all local random choices.
}{
First, observe that for the inference problems, any randomized \LOCAL{} algorithm with certifiable local failures can be transformed into a deterministic \LOCAL{} algorithm with no failure by taking average over all random bits accessed by the randomized \LOCAL{} algorithm at each node $v$ that produce the successful output at $v$. This actually holds more generally for all problems with \concept{no symmetry}, where the correctness of the output of a node $v$ depends only on the instance but not on other nodes' outputs.

\begin{proposition}\label{prop:deterministic-approx-infer}
For any class of joint distributions $\Problem{M}=\{\mu_{(G,\vec{x})}\}$, 
if there is a randomized \LOCAL{} algorithm for approximate inference with time complexity at most $t(n,\delta)$ and failure probability $\sum_{v\in V}\mathbb{E}[F_v]<1$, %(with total variation error $\delta$), 
then there is a deterministic \LOCAL{} algorithm for approximate inference with time complexity at most $t(n,\delta)$. %(with total variation error $\delta$).
\end{proposition}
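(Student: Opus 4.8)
The plan is to derandomize $\mathcal{A}$ one node at a time, exploiting that approximate inference is a problem \emph{with no symmetry}: the object node $v$ must approximate, namely $\mu_v^{\tau}$, is a function of the instance $(G,\vec{x},\tau)$ alone, and ``$v$ is correct'' is the predicate $\DTV{\alg{\mu}_v}{\mu_v^{\tau}}\le\delta$, which refers only to $v$'s own output. First I would record what this buys us about the Las Vegas semantics of $\mathcal{A}$. Since $\mathcal{A}$ has time complexity $t=t(n,\delta)$, the pair $(\alg{\mu}_v,F_v)$ produced at $v$ is a deterministic function of $v$'s radius-$t$ view, which is the (fixed) topology and inputs inside $\Ball_{t}(v)$ together with the independent random strings of the nodes in $\Ball_{t}(v)$. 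By the no-symmetry property, the guarantee that $\mathcal{A}$ is correct conditioned on no node failing localizes to the statement: conditioned on the event $\{F_v=0\}$, which depends only on $v$'s $t$-ball randomness, the output satisfies $\DTV{\alg{\mu}_v}{\mu_v^{\tau}}\le\delta$ (or, if one only has the guarantee in expectation, $\mathbb{E}_{\mathcal{A}}[\DTV{\alg{\mu}_v}{\mu_v^{\tau}}\mid F_v=0]\le\delta$, which is all that is needed). Moreover $\sum_{u}\mathbb{E}[F_u]<1$ forces $\mathbb{E}[F_v]<1$, hence $\Pr[F_v=0]>0$, so the conditioning is on a nonempty event.

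Next I would define the deterministic algorithm $\mathcal{B}$. Each node $v$ gathers its radius-$t$ view without the randomness, and outputs
\[
\bar{\mu}_v \;\triangleq\; \mathbb{E}_{\mathcal{A}}\!\left[\,\alg{\mu}_v \;\middle|\; F_v=0\,\right],
\]
the probability-weighted average of $\mathcal{A}$'s marginal estimate at $v$ over exactly those assignments of random strings to $\Ball_t(v)$ under which $v$ does not fail (a finite sum if $\mathcal{A}$ uses a bounded number of random bits per node, and an integral against the bit-string measure in general --- either way it is computable by the unbounded local computation allowed at $v$); $\mathcal{B}$ never fails. All of this is local post-processing inside $\Ball_t(v)$, so $\mathcal{B}$ has time complexity $t(n,\delta)$.

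For correctness, $\bar{\mu}_v$ is a convex combination of vectors in $[0,1]^{\Sigma}$ of unit $\ell_1$-norm, hence itself a distribution over $\Sigma$; and since $\nu\mapsto\DTV{\nu}{\mu_v^{\tau}}$ is convex, Jensen's inequality gives
\[
\DTV{\bar{\mu}_v}{\mu_v^{\tau}}\;\le\;\mathbb{E}_{\mathcal{A}}\!\left[\,\DTV{\alg{\mu}_v}{\mu_v^{\tau}}\;\middle|\;F_v=0\,\right]\;\le\;\delta,
\]
using the localized guarantee from the first paragraph. Hence $\mathcal{B}$ solves approximate inference deterministically within time $t(n,\delta)$, as claimed.

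The hard part will be the first paragraph: making the no-symmetry reduction rigorous, i.e.\ arguing from the definition of a Las Vegas \LOCAL{} algorithm (correct conditioned on \emph{global} non-failure) that one may condition on the \emph{local} event $\{F_v=0\}$ instead. This rests on the fact that $\alg{\mu}_v$ and $F_v$ depend only on $v$'s $t$-ball randomness while the target $\mu_v^{\tau}$ is fixed by the instance, so no coupling with the outputs or failure bits of far-away nodes is needed; once that is isolated, the remaining steps are only convexity and the fact that the \LOCAL{} model charges nothing for local computation.
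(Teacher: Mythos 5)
Your proposal is correct and takes essentially the same route as the paper, which handles this proposition with only the one-paragraph observation immediately preceding it: for a \emph{no-symmetry} problem such as inference, each node may deterministically output the average of $\alg{\mu}_v$ over exactly those random strings in $\Ball_t(v)$ for which $v$ does not fail. You have simply made explicit the convexity/Jensen step for $\DTV{\cdot}{\cdot}$ and isolated the localization of the Las Vegas guarantee from global non-failure to the per-node event $\{F_v=0\}$, a step the paper likewise treats as immediate from the no-symmetry property rather than proving formally.
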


With this proposition, from now on we assume without loss of generality any \LOCAL{} algorithm for approximate inference is deterministic and has no failure.

\begin{proof}[Proof of Theorem~\ref{thm:approx-infer-approx-sample}]
With Lemma~\ref{lemma-slocal-local}, it is sufficient to give an \SLOCAL{} algorithm for approximate sampling, which is quite standard with the access to marginal probabilities.

Let $(G, \vec{x}, \tau)$ be an instance, where $G=(V,E)$ and $\tau\in\Sigma^\Lambda$ is a feasible configuration on a subset $\Lambda\subseteq V$. A joint distribution $\mu=\mu_{(G,\vec{x})}$ is specified by the labeled graph $(G,\vec{x})$ and the target distribution is $\mu^\tau=\mu_{(G,\vec{x})}$.

Let $v_1,v_2,\ldots,v_n$ be an arbitrary ordering of vertices in $V$. 
The algorithm will sample a random $\sigma\in\Sigma^V$ such that $\sigma_\Lambda=\tau$, by randomly generating $\sigma(v_i)$ vertex by vertex.
For $i=1,2,\ldots,n$, let $\sigma_{i-1}$ denote the configuration over $\{v_1,v_2,\ldots,v_{i-1}\}$ that has been randomly generated so far, if $v_i\in\Lambda$ then $\sigma(v_i)=\tau(v_i)$; and if otherwise, $\sigma(v_i)$ is randomly generated according to the marginal distribution $\alg{\mu}_{v_i}^{\tau\wedge\sigma_{i-1}}$ at $v_i$ conditioning on $\sigma_{i-1}$ and $\tau$, where the marginal distribution $\alg{\mu}_{v_i}^{\tau\wedge\sigma_{i-1}}$ is computed by simulating the \LOCAL{} algorithm for approximate inference with total variation error $\frac{\delta}{n}$ on instance $(G,\vec{x},\tau\wedge\sigma_{i-1})$ at node $v_i$ with locality $t=t(n,\frac{\delta}{n})$.
It is guaranteed that
\[
\DTV{\alg{\mu}_{v_i}^{\tau\wedge\sigma_{i-1}}}{{\mu}_{v_i}^{\tau\wedge\sigma_{i-1}}}\le\frac{\delta}{n}.
\]
Finally, $\sigma=\sigma_n\in\Sigma^V$ is returned. Let $\alg{\mu}$ denote its distribution. By a coupling argument, it can be verified that $\DTV{\alg{\mu}}{\mu^{\tau}}\le\delta$.
\end{proof}

%\begin{corollary}
%For any class of joint distributions $\Problem{M}=\{\mu_{(G,\vec{x})}\}$, 
%if there is a \LOCAL{} algorithm for approximate inference with time complexity at most $t(n,\delta)$
%then there is a \LOCAL{} algorithm for constructing a feasible configuration with time complexity at most $O(t(n,\delta)\log^2n)$.
%\end{corollary}

\subsection{Approximate Sampling $\implies$ Approximate Inference}
}

\begin{theorem}\label{thm:approx-sample-approx-infer}
For any class of joint distributions $\Problem{M}=\{\mu_{(G,\vec{x})}\}$, 
if there is a \LOCAL{} algorithm for approximate sampling (within arbitrary total variation error $\delta>0$) with time complexity at most $t(n,\delta)$, 
then there is a \LOCAL{} algorithm for approximate inference (within total variation error $\delta + \epsilon_0$) with time complexity at most $t(n,\delta)$, where $\epsilon_0 \ge \sum_{v\in V}\mathbb{E}[F_v]$ is the probability that the approximate sampling algorithm fails.
%approximate inference (with total variation error $\delta$) is at most $t(n,\delta)$, then the \LOCAL{} complexity of  Monte Carlo sampling (with total variation error $\delta$) is at most $O(t(n,\frac{\delta}{n})\log^2 n)$.
\end{theorem}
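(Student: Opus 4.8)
The plan is to turn the approximate sampler into a \emph{deterministic} approximate inference algorithm that uses no extra rounds at all: each node $v$ simulates the local computation of the sampler (gathering the ball $\Ball_{t}(v)$ with $t=t(n,\delta)$), and instead of actually drawing a sample it computes the exact distribution of its own output coordinate and reports that vector as $\alg{\mu}_v$. Note this works for an arbitrary class $\Problem{M}=\{\mu_{(G,\vec{x})}\}$; self-reducibility plays no role in this direction.

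First I would recall that in the \LOCAL{} model a node $v$'s output $Y_v$ together with its failure bit $F_v$ is a deterministic function of the radius-$t$ view of $v$, namely the topology, the inputs $x_u$, and the random strings of the nodes $u\in \Ball_t(v)$. Since these random strings are generated independently with distributions known to everyone, and since $|\Sigma|\le\poly(n)$ is finite, node $v$ can — using the unbounded local computation permitted by the model — evaluate the exact joint law of $(Y_v,F_v)$. Define $\bar Y_v$ to equal $Y_v$ when $F_v=0$ and an arbitrary fixed symbol of $\Sigma$ when $F_v=1$, let $\bar\mu_v\in[0,1]^\Sigma$ be the law of $\bar Y_v$, and have node $v$ output $\alg{\mu}_v:=\bar\mu_v$. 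This is a deterministic \LOCAL{} algorithm of locality exactly $t(n,\delta)$, matching the claimed complexity, so only the error bound remains.

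For the error bound, let $\mathcal{E}$ be the global success event that $F_u=0$ for every $u\in V$; by a union bound $\Pr[\mathcal{E}^c]\le\sum_{v\in V}\mathbb{E}[F_v]\le\epsilon_0$. Let $\bar\mu$ be the unconditional law of the substituted configuration $(\bar Y_v)_{v\in V}$ and let $\alg{\mu}$ be the law of $(Y_v)_{v\in V}$ conditioned on $\mathcal{E}$. Since the two configurations literally coincide on $\mathcal{E}$, expanding $\bar\mu(A)$ over $\mathcal{E}$ and $\mathcal{E}^c$ gives $\DTV{\bar\mu}{\alg{\mu}}\le\Pr[\mathcal{E}^c]\le\epsilon_0$. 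The approximate-sampling guarantee gives $\DTV{\alg{\mu}}{\mu^{\tau}}\le\delta$, and marginalizing onto coordinate $v$ only decreases total variation distance, so
\[
\DTV{\bar\mu_v}{\mu_v^{\tau}}\le\DTV{\bar\mu}{\alg{\mu}}+\DTV{\alg{\mu}}{\mu^{\tau}}\le\delta+\epsilon_0,
\]
which is the desired bound on $\DTV{\alg{\mu}_v}{\mu_v^{\tau}}$.

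The only points needing care are the two technical claims: that replacing $Y_v$ by a default symbol on the failure event perturbs the joint law by at most $\Pr[\mathcal{E}^c]$ (which is immediate from the coupling that equals the identity on $\mathcal{E}$), and that node $v$ can in fact compute the numbers $\Pr[\bar Y_v=c]$ from its local view (which is unproblematic given finite $\Sigma$ and unbounded local computation, though one could remark that it also suffices for the inference algorithm to itself be randomized and then invoke Proposition~\ref{prop:deterministic-approx-infer}). I do not expect a genuine obstacle here: the content of the theorem is essentially the observation that a per-node marginal of a distributed sampler is already locally computable, in zero additional rounds, by averaging over the local randomness.
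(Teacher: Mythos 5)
Your proof is correct and follows essentially the same route as the paper's: each node reconstructs the marginal law of its own sampler output by enumerating (averaging over) the local randomness in its radius-$t(n,\delta)$ ball, and the error bound $\delta+\epsilon_0$ follows from the triangle inequality combined with the fact that conditioning on the global success event changes the law of the output vector by at most $\Pr[\mathcal{E}^c]\le\epsilon_0$ and that projecting to a single coordinate only decreases total variation distance. Your explicit substitution of a default symbol on the failure event is a slightly more careful formalization of what the paper calls ``the distribution $\tilde{\mu}$ of the random vector $\vec{y}$ (without any condition),'' but the argument is the same.
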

\ifabs{}{
\begin{proof}
Let $\mathcal{A}_{\delta}$ be the \LOCAL{} algorithm for approximate sampling with total variation error $\delta$ and time complexity at most $t(n,\delta)$. 
Given any instance $(G, \vec{x}, \tau)$, upon termination the algorithm $\mathcal{A}_\delta$ outputs a random $(y_v, F_v)$ at each node $v\in V$, where $F_v$ is a Boolean random variable indicating the local failure at node $v$. It is guaranteed that $\sum_{v \in V}\mathbb{E}[F_v] \leq \epsilon_0$ and conditioning on $\sum_{v\in V}F_v=0$, the random vector $\vec{y}=(y_v)_{v\in V}$ follows a distribution $\alg{\mu}$ such that $\DTV{\alg{\mu}}{\mu^\tau}\le \delta$ where $\mu^\tau$ is the target distribution. Therefore, for the distribution $\tilde{\mu}$ of the random vector $\vec{y}$ (without any condition), it holds that
\[
\DTV{\tilde{\mu}_v}{\mu_v^\tau}\le \DTV{\tilde{\mu}}{\mu^\tau}\le\DTV{\alg{\mu}}{\mu^\tau}+\Pr\left[\mbox{$\sum_{v\in V}F_v>0$}\right]\le \delta+\epsilon_0.
\]
The marginal distribution $\tilde{\mu}_v$ can be reconstructed at node $v$ with locality at most $t=t(n,\delta)$ by enumerating the random bits used in $\mathcal{A}_\delta$ to generate the random variable $y_v$. This gives a \LOCAL{} algorithm for approximate inference within total variation error $\delta+\epsilon_0$ and time complexity $t(n,\delta)$.
\end{proof}
}

\section{Local Self-Reductions}\label{sec:JVV}
It is well-known  that on Turing machines, for self-reducible problems the accuracy of approximate counting can be boosted and approximate counting implies exact sampling (the Jerrum-Valiant-Vazirani sampler)~\cite{JVV86}.

In this section, we give two boosting results for local distributed counting and sampling:
\begin{itemize}
\item a \concept{local boosting} for approximate inference which transforms approximate inference with bounded total variation error to the one with bounded multiplicative error;
\item a \concept{distributed JVV sampler} which uses approximate inference to achieve exact sampling via the \concept{local rejection sampling}.
\end{itemize}
Both results utilizes \concept{local  self-reductions}, the self-reductions with bounded locality. The correctness of such reductions relies on the spatial Markovian (conditional independence) property of local Gibbs distributions.
\newcommand{\ApproxCounting}{
\subsection{Approximate Inference $\iff$ Approximate Counting}\label{sec:approx-counting}
%Let $\mu=\mu_{(G,\vec{x})}$ be a joint distribution on $\Sigma^V$ specified by the labeled graph $(G,\vec{x})$, where $G=(V,E)$.  
For a Gibbs distribution specified by $(G,\Sigma,\mathcal{F})$, where $G=(V,E)$, each configuration $\sigma\in\Sigma^V$ is assigned a weight $w(\sigma)$ given by~\eqref{eq:configuration-weight}. 
Recall that the partition function is given by the total weights $Z=\sum_{\sigma}w(\sigma)$. 
Furthermore, for any configuration $\tau\in\Sigma^\Lambda$ on a subset $\Lambda\subseteq V$, we define:
\[
Z(\tau)\triangleq\sum_{\sigma\in\Sigma^V: \sigma_{\Lambda}=\tau}w(\sigma).
\]
%For instance, for the Gibbs distribution defined by Boolean-valued constraints, $Z(\tau)$ gives the total number of configurations $\sigma\in\Sigma^V$ consistent with $\tau$ over $\Lambda$.

We consider the following distributed approximate counting problem. 
\begin{list}{}{}
\item[\textbf{Approximate counting:}] 
Given any instance $(G,\vec{x},\tau)$, for any $0<\epsilon<1$, 
each node $v$ returns a $\alg{z}_v$, %where $\alg{\mu}_v\in[0,1]^\Sigma$ and $\|\alg{\mu}\|_1=1$,
such that  the product $\alg{Z}=\prod_{v\in V}\alg{z}_v$ satisfies 
$1-\epsilon\le \frac{\alg{Z}}{Z(\tau)}\le 1+\epsilon$.
\end{list}

In next theorem, we show that the approximate counting and the approximate inference are computationally equivalent up to a polylogarithmic factor in the \LOCAL{} model.

\begin{theorem}
%Let $\Problem{M}=\{\mu_{(G,\vec{x})}\}$ be a class of local Gibbs distributions.
For a class of local Gibbs distributions  $\Problem{M}=\{\mu_{(G,\vec{x})}\}$, 
if there is a \LOCAL{} algorithm for approximate inference with time complexity at most $t(n,\delta)$, 
then there is a \LOCAL{} algorithm for approximate counting with time complexity at most $O(t(n,\delta)\log^2n)$;
and conversely, %for a class of local Gibbs distributions $\Problem{M}=\{\mu_{(G,\vec{x})}\}$,
if there is a \LOCAL{} algorithm for approximate counting with time complexity at most $t(n,\delta)$,
then there is a \LOCAL{} algorithm for approximate inference with time complexity at most $O(t(n,\delta))$.
\end{theorem}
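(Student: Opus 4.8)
The plan is to prove both directions by localizing the textbook self-reduction for partition functions; the $O(1)$-diameter of the constraint scopes is exactly what makes this localization possible, since it lets each node evaluate its incident factors (hence its ``share'' of a configuration weight $w(\sigma)=\prod_{(f,S)\in\mathcal F}f(\sigma_S)$) from an $O(1)$-ball, and it is why a locality-$t$ oracle is blind to a pin placed outside its $t$-ball.

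\emph{Approximate inference $\Rightarrow$ approximate counting.} By Proposition~\ref{prop:deterministic-approx-infer} I would first assume the inference oracle is deterministic and failure-free, with locality $t=t(n,\delta_0)$ for a parameter $\delta_0$ to be fixed. Then I would build an \SLOCAL{} algorithm of locality $t+O(1)$ that, scanning $V$ in the adversary's order, greedily constructs a feasible completion $\sigma^*\in\Sigma^{V\setminus\Lambda}$ of $\tau$: when it reaches $v\notin\Lambda$ it reads its $(t+O(1))$-ball -- which already holds all of $\tau$ and the values $\sigma^*_u$ chosen at the earlier-scanned non-$\Lambda$ nodes in that ball -- simulates the oracle on the conditioned instance $(G,\vec x,\tau\wedge\sigma^*_{<v})$ to get $\hat\mu_v$, and sets $\sigma^*_v$ to a value maximizing $\hat\mu_v$. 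Because $\mu_v^{\tau\wedge\sigma^*_{<v}}$ always has a value of true probability $\ge 1/q$, the chosen value has true conditional probability $\ge 1/q-O(\delta_0)>0$; hence $\sigma^*$ stays feasible by induction and the per-step approximation is multiplicative of size $O(q\delta_0)$. Node $v$ then outputs $\hat z_v$ equal to its share of $w(\tau\wedge\sigma^*)$ -- attributing each constraint $(f,S)$ to its last-scanned node, so the shares multiply to $w(\tau\wedge\sigma^*)$ -- divided, when $v\notin\Lambda$, by $\hat\mu_v(\sigma^*_v)$. The chain rule gives $\prod_v\hat z_v=w(\tau\wedge\sigma^*)\prod_{v\notin\Lambda}1/\hat\mu_v(\sigma^*_v)$, which is within a multiplicative $(1+O(q\delta_0))^n$ of $Z(\tau)$; choosing $\delta_0=\Theta(\epsilon/(qn))=\poly(\epsilon/n)$ makes this $1\pm\epsilon$. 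Since this telescoping identity is valid for every scanning order, the order produced by the transformation of Lemma~\ref{lemma-slocal-local} is immaterial, and that transformation yields the desired \LOCAL{} algorithm at the cost of a $\log^2 n$ factor and $O(1/n^2)$ locally-certifiable failure.

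\emph{Approximate counting $\Rightarrow$ approximate inference.} Here I would use locality in the opposite direction. For $v\notin\Lambda$ (the case $v\in\Lambda$ is trivial) write $\mu_v^\tau(a)=Z(\tau\wedge(v\mapsto a))/\sum_{b}Z(\tau\wedge(v\mapsto b))$. Node $v$ gathers its $2t$-ball and, for each $a\in\Sigma$, locally simulates the counting oracle on the pinned instance $(G,\vec x,\tau\wedge(v\mapsto a))$, but only far enough to recover the estimates $\hat z_u^{(a)}$ for $u\in B_t(v)$; it outputs $\hat\mu_v(a)\propto\prod_{u\in B_t(v)}\hat z_u^{(a)}$. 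The point is that for $u$ with $\dist(u,v)>t$ the oracle's output at $u$ does not depend on $a$ (its $t$-ball does not contain the pinned vertex $v$), so in $\hat Z^{(a)}=\prod_{u\in V}\hat z_u^{(a)}$ the ``far'' factor is an $a$-independent constant $R$ that cancels in the normalization; thus $\hat\mu_v(a)=\hat Z^{(a)}/\sum_b\hat Z^{(b)}$, which is within a $1\pm O(\epsilon)$ multiplicative factor -- a fortiori within total-variation $O(\epsilon)$ -- of $\mu_v^\tau(a)$, so $\epsilon=\Theta(\delta)$ finishes with no asymptotic loss in locality. Infeasible $a$ takes care of itself: the counting specification forces $\hat Z^{(a)}=0$, while feasibility of $\tau$ (and success of the oracle) keeps $R>0$, so the vanishing factor must already lie inside $B_t(v)$.

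\emph{Main obstacle.} The conceptual cores above are short; the work is in two pieces of bookkeeping that are precisely where the locality hypothesis bites. In the first direction one must guarantee that the greedily built $\sigma^*$ remains extendable to a feasible configuration using only local decisions -- this is what forces the ``maximum approximate marginal'' rule and the convention of always keeping all of $\tau$ in the conditioning -- and one must verify the weight-attribution identity, which needs $O(1)$-diameter scopes. In the second direction one must run the (randomized, and not per-node derandomizable) counting oracle on $q\le\poly(n)$ pinned instances, so its success probability has to be amplified -- a constant or $\poly\log$ factor, affordable for Las Vegas local algorithms -- before union-bounding over those instances and over the nodes touched by each local simulation; and one must again check that ``far factors cancel'' uses nothing but the $O(1)$-diameter of scopes together with the locality of the oracle.
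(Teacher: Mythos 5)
The theorem you are proving does not actually appear in the compiled paper: it lives only inside the macro named \texttt{ApproxCounting} (together with the definition of the distributed approximate-counting problem it refers to), and that macro is never invoked anywhere in the document. There is therefore no paper proof to compare against; what follows is a review of your argument on its own terms, noting where it echoes techniques the paper does use elsewhere.

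Your first direction (inference $\Rightarrow$ counting) is sound and is essentially a re-purposing of the technique the paper uses to prove the boosting lemma (Lemma~\ref{lemma-local-boosting}): greedily extend the pinned configuration by repeatedly choosing the argmax of the approximate marginal, so that the chosen value has true conditional probability at least $1/q-\delta_0>0$, keeping the partial configuration feasible and making the per-step discrepancy multiplicative of size $O(q\delta_0)$. Your telescoping
\[
\prod_{v}\alg{z}_v \;=\; \frac{w(\tau\wedge\sigma^*)}{\prod_{v\notin\Lambda}\alg{\mu}_v(\sigma^*_v)}\;\approx\;\frac{w(\tau\wedge\sigma^*)}{\mu^{\tau}(\sigma^*)}\;=\;Z(\tau)
\]
is exactly the right accounting, attributing each factor $(f,S)$ to its last-scanned node is well-defined because $S$ has $O(1)$ diameter and the chromatic scheduler makes the ordering locally readable, and the $\log^2 n$ overhead from Lemma~\ref{lemma-slocal-local} matches the claimed bound.

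Your second direction (counting $\Rightarrow$ inference) has the right structural idea, and it appears nowhere in the paper: a locality-$t$ counting oracle cannot see a pin placed at $v$ from nodes at distance $>t$, so the far factors $\alg{z}^{(a)}_u$ are $a$-independent and cancel in the normalization, making $\alg{\mu}_v(a)\propto\prod_{u\in B_t(v)}\alg{z}^{(a)}_u$ computable from $B_{2t}(v)$ with multiplicative accuracy inherited from the per-instance $1\pm\epsilon$ guarantee. The one genuine gap is the failure bookkeeping you flag but do not close. The bound $\alg{Z}^{(a)}\approx Z^{(a)}$ holds only conditional on the counting oracle succeeding at \emph{every} node of the pinned instance, and a failure outside $B_{2t}(v)$ silently voids the correctness of $\alg{\mu}_v$ while being invisible to $v$; this also blocks your plan of amplifying and then invoking Proposition~\ref{prop:deterministic-approx-infer}, because that derandomization averages over random strings that produce a successful output at $v$, which presupposes $v$ can recognize success. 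The cleanest repair is an extension of your own cancellation observation: the ratio $\alg{Z}^{(a)}/\alg{Z}^{(b)}$ depends only on the random bits in $B_{2t}(v)$, and since it equals $(1+\eta^{(a)})Z^{(a)}/((1+\eta^{(b)})Z^{(b)})$ on the high-probability success event, it must already lie within a $(1+\epsilon)/(1-\epsilon)$ factor of $Z^{(a)}/Z^{(b)}$ for all but an $O(1/n)$-fraction of near random strings; averaging over near random strings then gives total-variation error $O(\epsilon+q/n)$ after a union bound over the $q$ pins. Since $q$ may be $\poly(n)$, this still needs the oracle's failure probability driven below $1/(nq)$, which is reasonable to assume but is a real loose end in the argument as written.
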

}

\subsection{The Boosting Lemma}\label{sec:boosting}

%Suppose that $\mu$ and $\nu$ are two distributions over the same sample space $\Omega$. We define the \concept{multiplicative error function} $\err{\mu}{\nu}$ as
%\[
%\err{\mu}{\nu}\triangleq\max_{x\in\Omega}|\ln{{\mu}(x)}-\ln{\nu(x)}|,
%\]
%with the convention that $0/0=1$ and $\ln 0-\ln0=\ln(0/0)=0$.
%If $\err{\mu}{\nu}\le \epsilon$, where $\epsilon>0$ is sufficiently small, then for any $x\in\Omega$, it holds that
%\[
%1-\epsilon\approx \mathrm{e}^{-\epsilon}\le \frac{\mu(x)}{\nu(x)} \le \mathrm{e}^{\epsilon} \approx 1+\epsilon.
%\]
%The multiplicative error function $\err{\cdot}{\cdot}$ gives a more strict measure for error than total variation distance $\DTV{\cdot}{\cdot}$, especially when some probability values are close to 0.

We consider approximate inference with a stronger accuracy guarantee. %than the total variation.
The \concept{multiplicative error function} $\err{\cdot}{\cdot}$ is defined as follows:
for any two distributions $\mu$ and $\alg{\mu}$ over the same sample space $\Sigma$, 
\begin{align}\label{eq:error-function}
\err{{\mu}}{\alg{\mu}}\triangleq\max_{x\in\Sigma}|\ln{{\mu}(x)}-\ln{\alg{\mu}(x)}|,
\end{align}
with the convention that $0/0=1$ and $\ln 0-\ln0=\ln(0/0)=0$.
\begin{list}{}{}
\item[\textbf{Approximate inference (with multiplicative error $\epsilon$):}] 
For any instance $(G,\vec{x},\tau)$, any $0<\epsilon<1$, 
each node $v$ returns a marginal distribution $\alg{\mu}_v$, %where $\alg{\mu}_v\in[0,1]^\Sigma$ and $\|\alg{\mu}\|_1=1$,
such that $\err{\alg{\mu}_v}{\mu_v^{\tau}}\le\epsilon$.
\end{list}
For sufficiently small $\epsilon>0$, the condition $\err{\alg{\mu}_v}{\mu_v^{\tau}}\le\epsilon$ implies
\ifabs{$1-\epsilon\approx \mathrm{e}^{-\epsilon}\le \frac{\alg{\mu}_v(c)}{\mu_v^{\tau}(c)} \le \mathrm{e}^{\epsilon} \approx 1+\epsilon$ for any $c\in\Sigma$,}{
\[
\forall c\in\Sigma:\quad 1-\epsilon\approx \mathrm{e}^{-\epsilon}\le \frac{\alg{\mu}_v(c)}{\mu_v^{\tau}(c)} \le \mathrm{e}^{\epsilon} \approx 1+\epsilon,
\]}
which gives a more accurate approximation than the bounded total variation error.
%Therefore $\alg{\mu}_v$ gives a more accurate approximation of the target marginal distribution $\mu_v^{\tau}$ for small enough $\epsilon>0$, especially when some $\mu_v^{\tau}(c)$ is close to 0.

%For probability distributions, approximation with multiplicative error is more accurate than approximation with total variation error, especially when there is probability close to 0. 
The boosting lemma stated below says that for local Gibbs distributions, approximate inference with total variation error can be boosted into that with multiplicative error.

\begin{lemma}[boosting lemma]
\label{lemma-local-boosting}
For any class of local Gibbs distributions $\Problem{M}=\{\mu_{(G,\vec{x})}\}$, 
if there is a \LOCAL{} algorithm for approximate inference (within arbitrary total variation error $\delta>0$) with time complexity at most $t(n,\delta)$, 
then there is a \LOCAL{} algorithm for approximate inference (within arbitrary multiplicative error $0<\epsilon < 1$) with time complexity $O(t(n,\frac{\epsilon}{5qn}))$, where $q=|\Sigma|\le\poly(n)$ is the size of the alphabet.
\end{lemma}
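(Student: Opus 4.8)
The plan is to run, at each node $v$, the total-variation approximate inference algorithm not on the original instance but on a carefully chosen \emph{rescaled} instance, so that the additive error gets converted into multiplicative error wherever it matters. The key observation is that the multiplicative error $\err{\alg{\mu}_v}{\mu_v^{\tau}}$ is hard to control when some marginal value $\mu_v^{\tau}(c)$ is tiny: a total-variation guarantee $\DTV{\alg{\mu}_v}{\mu_v^{\tau}}\le\delta$ gives $|\alg{\mu}_v(c)-\mu_v^{\tau}(c)|\le\delta$, which is useless relative to $\mu_v^{\tau}(c)$ if $\mu_v^{\tau}(c)\ll\delta$. So the first step is to handle the two regimes separately. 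For a color $c$ with $\mu_v^{\tau}(c)$ bounded below by roughly $\frac{1}{qn}$ times something, an additive error of order $\frac{\epsilon}{5qn}$ translates directly into a multiplicative error of order $\epsilon$ by the elementary inequality $|\ln(1+x)|\le 2|x|$ for small $x$. For colors with a genuinely tiny marginal, one cannot hope to estimate $\ln\mu_v^{\tau}(c)$ additively-in-$\ln$ at all from an additive approximation; the fix is to first \emph{test} which colors have non-negligible marginal, and for the negligible ones output a tiny but safely nonzero value (or, using the $0/0=1$ convention, detect exact zeros via local feasibility of $\tau\wedge(v\mapsto c)$ — here is where locality of the Gibbs distribution and the spatial Markov property enter, since feasibility of a one-site extension is checkable within $O(1)$ hops).

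Concretely, the main technical device I expect to use is a \emph{self-reduction that amplifies small marginals}. Given the target node $v$ and a candidate color $c$, consider the conditional distribution $\mu^{\tau\wedge(v\mapsto c)}$: its partition-function ratio against $\mu^{\tau}$ is exactly $\mu_v^{\tau}(c)$. The idea is to estimate this ratio not in one shot but as a telescoping product along a sequence of intermediate instances, where at each step only a bounded amount of probability mass is pinned down, so each factor is bounded away from $0$ and the additive-to-multiplicative conversion is lossless at every step; multiplying $O(n)$ such factors, each estimated to additive error $O(\epsilon/(5qn))$ inside the log, yields total multiplicative error $O(\epsilon)$. This mirrors the classical JVV telescoping for approximate counting, but must be carried out so that the whole computation stays within locality $O(t(n,\frac{\epsilon}{5qn}))$ — which is possible because each intermediate marginal query is just an approximate inference call on a locally-modified instance (the modification being the pinning of $\tau$-values, which by self-reducibility of $\Problem{M}$ and locality of the factors is a local operation on the input).

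The key steps, in order: (1) reduce to estimating, for each node $v$ and color $c$, the ratio $\mu_v^{\tau}(c)$ with multiplicative (log-additive) error $\epsilon$; (2) dispose of the exact-zero case using local feasibility checking, valid for local Gibbs distributions; (3) for the nonzero case, write $\mu_v^{\tau}(c)$ as a telescoping product of $O(n)$ conditional ratios, each provably bounded below by a polynomial in $1/(qn)$ along the chosen pinning order, by self-reducibility; (4) estimate each factor by invoking the given total-variation inference algorithm with error parameter $\delta=\frac{\epsilon}{5qn}$, and apply $|\ln(1+x)|\le 2|x|$ to convert; (5) sum the $O(n)$ log-errors and check the constant works out to $\le\epsilon$; (6) renormalize $\alg{\mu}_v$ to a genuine distribution, verifying that renormalization only improves (or at worst doubles) the error, and bound the total locality as $O(t(n,\frac{\epsilon}{5qn}))$ since all inference calls run in parallel with the same radius. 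The main obstacle I anticipate is step (3): choosing the pinning order and the intermediate instances so that \emph{every} conditional factor is uniformly bounded below — this is exactly where one needs that the distribution is a \emph{local} Gibbs distribution (so conditioning propagates only $O(1)$ hops and the conditional instances remain in the class by self-reducibility) and where the $\frac{1}{5qn}$ slack in the lemma's statement is consumed.
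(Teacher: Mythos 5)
Your high-level instincts — separate the tiny-marginal regime from the rest, exploit self-reducibility and locality, telescope, and use $\delta = \epsilon/(5qn)$ additive error — match the paper's, but the concrete plan has several gaps that make it fail as stated, and the paper's actual mechanism is structurally different in a way that is essential.

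\textbf{The missing structural idea.} You propose to \emph{compute} $\mu_v^\tau(c) = Z(\tau\wedge(v\mapsto c))/Z(\tau)$ via a JVV-style telescoping product of $O(n)$ marginals, each estimated by the TV-inference algorithm. The paper does something different: it pins only the \emph{annulus} $\Gamma = B_{t+\ell}(v)\setminus(B_t(v)\cup\Lambda)$ (an $\ell=O(1)$-thick shell, so contained in $B_{2t+\ell}(v)$), choosing each color $c_i$ greedily to maximize the reported approximate marginal at $v_i$, yielding an extension $\tau_m$. Once $\Gamma$ is pinned, the Markov property makes $\mu_v^{\tau_m}$ a purely local quantity, computed \emph{exactly} inside $B_{t+\ell}(v)$. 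The telescoping appears only in the \emph{analysis}: one writes the ratio $\mu_v^{\tau_m}(c)/\mu_v^\tau(c)$ as $\prod_{i=1}^m \mu_{v_i}^{\tau_{i-1}^c}(c_i)/\mu_{v_i}^{\tau_{i-1}}(c_i)$, where $\tau_{i-1}^c$ is $\tau_{i-1}$ with $v$ additionally pinned to $c$. Each factor is a ratio of \emph{two} marginals that differ only in whether a far-away vertex $v$ is pinned, and so differ by at most $2\delta$ in TVD; combined with the lower bound $\mu_{v_i}^{\tau_{i-1}}(c_i) \ge 1/q - \delta$ from the greedy max-choice, each factor lies in $[1 - O(\epsilon/n), 1 + O(\epsilon/n)]$.

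\textbf{Why your plan fails as stated.} (i) \emph{The $v\mapsto c$ factor cannot be bounded below.} Whatever pinning order you choose, some factor in your telescoping of $Z(\tau\wedge(v\mapsto c))/Z(\tau)$ must be $\mu_v^{\sigma}(c)$ for some extension $\sigma$ of $\tau$, with $c$ fixed rather than chosen. That factor is not under your control and can be tiny; your claim that "every conditional factor is uniformly bounded below" is therefore unprovable. The paper never pins $v$ in its chain — it compares two conditional marginals of \emph{other} vertices $v_i$ and lets the lower bound come from the free choice of $c_i$. (ii) \emph{Error accumulation does not close.} Even granting a $1/(qn)$ lower bound, a raw marginal estimated to additive error $\epsilon/(5qn)$ has \emph{relative} error $\Theta(\epsilon)$; summed over $\Theta(n)$ factors this gives total log-error $\Theta(n\epsilon)$, not $\epsilon$. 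The paper's per-factor relative error is $O(\epsilon/n)$ precisely because each factor is a ratio of two marginals that are $O(\delta)$-close (not a raw marginal compared to its estimate). (iii) \emph{Locality.} A chain that pins $\Theta(n)$ sites across the graph cannot be simulated by $v$ within radius $O(t)$. The paper's chain stays inside $B_{t+\ell}(v)$ by construction. (iv) Your step (2) — detecting exact zeros by local feasibility of $\tau\wedge(v\mapsto c)$ — requires the distribution to be \emph{locally admissible}, but the lemma assumes only local Gibbs. The paper handles the zero case differently: it shows by an inductive feasibility claim that $\tau_m^c$ is infeasible iff $\tau_0^c$ is, so the exact local computation of $\mu_v^{\tau_m}(c)$ returns $0$ exactly when $\mu_v^\tau(c)=0$, without any separate test.
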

\ifabs{}{
\begin{proof}
Let $\mathcal{A}^+_\delta$ denote the \LOCAL{} algorithm for approximate inference with arbitrary total variation error $\delta>0$ whose time complexity is $t(n, \delta)$. 
We construct a \LOCAL{} algorithm $\mathcal{A}^{\times}_\epsilon$ for approximate inference with arbitrary multiplicative error $0<\epsilon< 1$. 

Let $(G, \vec{x}, \tau)$ be an instance, where $G=(V,E)$. The joint distribution $\mu=\mu_{(G, \vec{x})}$ is a Gibbs distribution given by $(G,\Sigma,\mathcal{F})$, where $q=|\Sigma|$.
Since the Gibbs distribution is local, we assume that there is an $\ell=O(1)$ such that $\forall (f, S) \in \mathcal{F}: \max_{u, v \in S}\dist_G(u, v) \leq \ell$. 

\paragraph{Algorithm $\mathcal{A}^{\times}_\epsilon$:}
Let $v\in V$. We assume $v \notin \Lambda$, otherwise the inference problem is trivial. Let $\delta=\frac{\epsilon}{5qn}$ and $t = t(n, \frac{\epsilon}{5qn})$ be the time complexity of the \LOCAL{} algorithm $A_{\delta}^+$. 
Node $v$ collects all information up to distance $2t+\ell$ and simulates the following algorithm locally.

Recall that $\Ball_r(v)$ denotes the $r$-ball centered at $v$ in $G$. 
We define
\[
\Gamma=\Ball_{t+\ell}(v)\setminus(\Ball_{t}(v)\cup\Lambda).
\]
Let $v_1,v_2,\ldots,v_m$, where $m=|\Gamma|$, be vertices in $\Gamma$ enumerated in the increasing order of their unique IDs.
A sequence of configurations $\tau_i\in\Sigma^{\Lambda_{i}}$ on subsets $\Lambda_i$, $0\le i\le m$, is constructed as follows:
\begin{itemize}
\item Initially, let $\Lambda_0 = \Lambda$ and $\tau_0 = \tau$.
\item For $i=1,2,\ldots,m$, let $\Lambda_i = \Lambda_{i - 1} \cup \{v_i\}$, and the configurations $\tau_i\in\Sigma^{\Lambda_i}$ is constructed such that $\tau_i$ is consistent with $\tau_{i-1}$ over $\Lambda_{i-1}$, and $\tau_i(v_i)=c_i$ for the $c_i\in\Sigma$ that maximizes the marginal probability $\alg{\mu}_{v_i}^{\tau_{i-1}}(c_i)$ where $\alg{\mu}_{v_i}^{\tau_{i-1}}$ is the marginal distribution returned by $\mathcal{A}^{+}_\delta$ at node $v_i$ on the instance $(G,\vec{x},\tau_{i-1})$. 
\end{itemize}
Finally, the marginal distribution $\mu^{\tau_m}_v$ is returned. 
Due to the conditional independence guaranteed by Proposition~\ref{prop:cond-ind}, $\mu^{\tau_m}_v$ is fully determined by the information in $\Ball_{t+\ell}(v)$.
Specifically, denoted $B=B_{t+\ell}(v)$, and define $\mathcal{C}$ to be the set of all configurations $\sigma\in\Sigma^B$ consistent with $\tau_m$ over $B\cap \Lambda_m$, i.e.
\[
\mathcal{C} = \{\sigma \in \Sigma^B \mid \forall u \in B \cap (\Gamma \cup \Lambda): \sigma_u =\tau_m(u) \}.
\]
Then for every $c\in \Sigma$, the marginal probability $\mu^{\tau_m}_v(c)$ is computed as 
\[
\mu^{\tau_m}_v(c)=\frac{\sum_{\sigma\in \mathcal{C} :\sigma_v=c}w_B(\sigma)}{\sum_{\sigma\in \mathcal{C}}w_B(\sigma)},
\]
where $w_B(\sigma)=\prod_{(f,S)\in\mathcal{F}: S\subseteq B}f(\sigma_S)$.
This finishes the definition of Algorithm $\mathcal{A}^{\times}_\epsilon$.

%Specifically, denoted $B=B_{t+\ell}(v)$, for every $c\in \Sigma$,
%\[
%\mu^{\tau_m}_v(c)=\frac{\sum_{\sigma\in\Omega\,:\, \sigma_v=c}w_B(\sigma)}{\sum_{\sigma\in\Omega}w_B(\sigma)},
%\]
%where $\Omega\subseteq\Sigma^B$ contains all configurations $\sigma\in\Sigma^B$ consistent with $\tau_m$ over $B\cap \Lambda_m$, and $w_B(\sigma)=\prod_{(f,S)\in\mathcal{F}: S\subseteq B}f(\sigma_S)$.

\paragraph{}
We then show that ${\tau_m}$ is feasible with respect to $\mu$, so the marginal distribution $\mu^{\tau_m}_v$ is well-defined. Furthermore, it holds that
\begin{align}
\label{eq-taum-tau}
\forall c \in \Sigma: \qquad \mathrm{e}^{-\epsilon}\mu^{\tau}_v(c)  \leq \mu^{\tau_m}_v(c)\leq \mathrm{e}^\epsilon \mu^{\tau}_v(c),
\end{align}
%which implies that algorithm $\mathcal{A}_\epsilon^{\times}$ approximates the marginal distribution $\mu^\tau_v$ with  multiplicative error $\epsilon$. 
which proves the Theorem.

For the sequence of configurations $\tau_i\in\Sigma^{\Lambda_i}$ on subsets $\Lambda_i$, $0\le i\le m$, constructed in Algorithm $\mathcal{A}^\times_\epsilon$, for every $c \in \Sigma$, let $\tau_i^c\in\Sigma^{\Lambda_i\cup\{v\}}$ denote the configuration on $\Lambda_i\cup\{v\}$ such that $\tau_i^c$ is consistent with $\tau_i$ over $\Lambda_i$ and $\tau_i^c(v)=c$. 

\begin{claim*}
If $\tau^c_0$ is feasible with respect to $\mu$, then all $\tau^c_i$ are feasible  with respect to $\mu$. 
\end{claim*}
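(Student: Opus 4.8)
The plan is to prove the claim by induction on $i$, showing that $\tau_i^c$ is feasible with respect to $\mu$ given that $\tau_{i-1}^c$ is. The base case $i=0$ is the hypothesis. For the inductive step, assume $\tau_{i-1}^c$ is feasible; I want to conclude $\tau_i^c=\tau_{i-1}^c\wedge(v_i\mapsto c_i)$ is feasible, where $c_i$ is the color chosen in Algorithm $\mathcal{A}^{\times}_\epsilon$ to maximize the \emph{estimated} marginal $\alg{\mu}_{v_i}^{\tau_{i-1}}(c_i)$. The key point is that $v_i\in\Gamma\subseteq \Ball_{t+\ell}(v)\setminus\Ball_t(v)$, so $\dist_G(v,v_i)>t$, and the estimate $\alg{\mu}_{v_i}^{\tau_{i-1}}$ is produced by running $\mathcal{A}^+_\delta$ (locality $t$) on the instance $(G,\vec{x},\tau_{i-1})$, i.e.\ it only depends on $\Ball_t(v_i)$, which does \emph{not} see $v$. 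Hence the conditioning on $\tau_{i-1}$ that matters for $v_i$'s estimate is the same whether or not we additionally fix $v$ to $c$: more precisely, I claim $\alg{\mu}_{v_i}^{\tau_{i-1}}=\alg{\mu}_{v_i}^{\tau_{i-1}^c}$ as computed by $\mathcal{A}^+_\delta$, since $v\notin \Ball_t(v_i)$ and the algorithm's output is a deterministic function of the radius-$t$ ball.

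The main step is then a guarantee that the estimate, while only within total variation error $\delta$, still never chooses a locally-infeasible extension: since $\err{}{}$ is not assumed but $\DTV{\alg{\mu}_{v_i}^{\tau_{i-1}^c}}{\mu_{v_i}^{\tau_{i-1}^c}}\le\delta<1/q$ (using $\delta=\frac{\epsilon}{5qn}$ and $\epsilon<1$), the color $c_i$ maximizing $\alg{\mu}_{v_i}^{\tau_{i-1}^c}$ has $\alg{\mu}_{v_i}^{\tau_{i-1}^c}(c_i)\ge 1/q>\delta$, hence $\mu_{v_i}^{\tau_{i-1}^c}(c_i)>0$. That is, $c_i$ is a genuinely feasible extension of $\tau_{i-1}^c$ at $v_i$, so $\tau_i^c=\tau_{i-1}^c\wedge(v_i\mapsto c_i)$ is feasible with respect to $\mu$, completing the induction. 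Setting $c$ aside (or rather, noting the same argument with $\tau_i$ in place of $\tau_i^c$, which is the special case where $v$ is not pinned), we also get that $\tau_m$ itself is feasible, so $\mu^{\tau_m}_v$ is well-defined.

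The delicate part — and the real content of the claim — is the interchange $\alg{\mu}_{v_i}^{\tau_{i-1}}=\alg{\mu}_{v_i}^{\tau_{i-1}^c}$, i.e.\ that pinning the extra vertex $v$ does not alter $v_i$'s locally-computed estimate. This is where the choice of radii ($\Gamma$ lives in the annulus between radius $t$ and $t+\ell$ around $v$) is used: every $v_i\in\Gamma$ is at distance strictly greater than $t$ from $v$, so $v\notin\Ball_t(v_i)$, and $\mathcal{A}^+_\delta$ run at $v_i$ with locality $t$ produces identical output on the instances $(G,\vec{x},\tau_{i-1})$ and $(G,\vec{x},\tau_{i-1}^c)$ because the two instances are indistinguishable inside $\Ball_t(v_i)$ (they differ only at $v$, and in the labels $\tau_{i-1}(u)$ versus $\tau_{i-1}^c(u)$ only at $u=v$). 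I expect this observation, together with the $\delta<1/q$ bound forcing the argmax color to be feasible, to be the crux; the rest is bookkeeping over the induction.
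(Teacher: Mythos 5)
Your proof is correct and follows essentially the same strategy as the paper's: induction on $i$, using (a) that every $v_i\in\Gamma$ is at distance $>t$ from $v$, so $\mathcal{A}^+_\delta$ run at $v_i$ cannot distinguish the instances with pinning $\tau_{i-1}$ versus $\tau_{i-1}^c$, and (b) that the argmax $c_i$ of $\alg{\mu}_{v_i}^{\tau_{i-1}}$ has estimated probability at least $1/q>\delta$, forcing the true conditional marginal $\mu_{v_i}^{\tau_{i-1}^c}(c_i)$ to be positive. Your argument is in fact slightly tighter — you pass directly from $\alg{\mu}_{v_i}^{\tau_{i-1}^c}(c_i)\ge 1/q$ to $\mu_{v_i}^{\tau_{i-1}^c}(c_i)\ge 1/q-\delta$, whereas the paper inserts a triangle inequality between the two true marginals $\mu_{v_i}^{\tau_{i-1}}$ and $\mu_{v_i}^{\tau_{i-1}^c}$ and lands on $1/q-3\delta$ — but the idea is identical.
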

\begin{proof}
We prove this by induction on $i$.
For $i = 0$, the claim holds trivially.
For general $i$, suppose that $\tau^c_{i-1}$ is feasible. Then $\tau_{i-1}$ must be feasible in the first place because $\tau^c_{i-1}$ extends $\tau_{i-1}$.
The two configurations $\tau^c_{i-1}$ and $\tau_{i-1}$ differ only at vertex $v$ and $\dist_G(v, v_{i-1}) > t$, where $t=t(n,\frac{\epsilon}{5qn})$, Algorithm $\mathcal{A}^{+}_{\delta}$ will output the same marginal distribution $\alg{\mu}_{v_{i-1}}^{\tau_{i-1}}$ at node $v_{i-1}$ on the instances $(G,\vec{x},\tau^c_{i-1})$ and $(G,\vec{x},\tau_{i-1})$, such that $\DTV{\alg{\mu}_{v_{i-1}}^{\tau_{i-1}}}{\mu^{\tau_{i-1}^c}_{v_{i}}}\le\frac{\epsilon}{5qn}$ and $\DTV{\alg{\mu}_{v_{i-1}}^{\tau_{i-1}}}{\mu^{\tau_{i-1}}_{v_{i}}}\le\frac{\epsilon}{5qn}$. By triangle inequality,
\begin{align}
\label{eq-dtv-tau-tauc}
\DTV{\mu^{\tau_{i-1}^c}_{v_{i}}}{\mu^{\tau_{i-1}}_{v_{i}}}\leq \frac{2\epsilon}{5nq}.
\end{align}
Recall that $\tau_i$ is constructed from $\tau_{i-1}$ in such a way that  $\tau_i(v_i)=c_i$ for the $c_i\in\Sigma$ that maximizes the marginal probability $\alg{\mu}_{v_i}^{\tau_{i-1}}(c_i)$, and $q=|\Sigma|$.
Therefore, we have
\begin{align}
\label{eq-lower-bound-tau}
\mu^{\tau_{i-1}}_{v_{i}}(\tau_i(v_i)) \geq \frac{1}{q} - \frac{\epsilon}{5nq}.
\end{align}
Note that $\mu_{\Lambda_i\cup\{v\}}(\tau_{i}^c)= \mu_{\Lambda_{i-1}\cup\{v\}}(\tau_{i-1}^c)\cdot \mu_{v_i}^{\tau_{i-1}^c}(c_i)\ge \mu_{v_i}^{\tau_{i-1}^c}(c_i)$, where $c_i=\tau_i(v_i)$. Combining with~\eqref{eq-dtv-tau-tauc} and~\eqref{eq-lower-bound-tau}, we have
\begin{align*}
\mu_{\Lambda_i\cup\{v\}}(\tau_{i}^c)\ge
\mu_{v_i}^{\tau_{i-1}^c}(c_i) \geq 	\mu^{\tau_{i-1}}_{v_{i}}(c_i) - \frac{2\epsilon}{5nq} \geq \frac{1}{q} - \frac{3\epsilon}{5nq} > 0,
\end{align*}
which implies that $\tau_{i}^c$ is feasible with respect to $\mu$. The claim is proved.
\end{proof}

Recall that $\tau_0 = \tau$ for a feasible $\tau\in\Sigma^{\Lambda}$. There must exist $c\in \Sigma$ such that $\tau_0^c$ is feasible, which according to above claim implies that all $\tau_i^c$ are feasible, and hence in particular, $\tau_m^c$ is feasible, which means $\tau_m$ is feasible in the first place since $\tau_m^c$ extends $\tau_m$.

Consider each $c \in \Sigma$ that $\mu^\tau_v(c) > 0$. For such $c\in\Sigma$,  $\tau_{0}^c$ must be feasible since $\tau_0 = \tau$, which according to above claim implies that $\tau_{0}^c$ is feasible. Denote that $c_i=\tau_i(v_i)$ for $0\le i\le m$. Apply the chain rule in two different orders of vertices, we have
\begin{align*}
\mu^{\tau}_{\Lambda\cup\{v\}}(\tau_m^c) 
&= \left(\prod_{i=1}^m \mu_{v_i}^{\tau_{i-1}}(c_i)\right)\mu^{\tau_m}_v(c), \\
\text{and}\quad\mu^{\tau}_{\Lambda\cup\{v\}}(\tau_m^c)   
 &= \mu^\tau_v(c) \left(\prod_{i = 1}^m \mu_{v_i}^{\tau^c_{i-1}}(c_i)\right).
 \end{align*}
Solving the above equations gives us
\begin{align*}
\mu^{\tau_m}_v(c) &= \left( \prod_{i=1}^m\frac{\mu_{v_i}^{\tau^c_{i-1}}(c_i)}{\mu_{v_i}^{\tau_{i-1}}(c_i)} \right)\mu^\tau_v(c).
\end{align*}
For each $1\leq i \leq m$, by~\eqref{eq-dtv-tau-tauc} and~\eqref{eq-lower-bound-tau}, it holds that
\begin{align*}
\mathrm{e}^{-\epsilon / n}
\le
1 - \frac{2\epsilon}{5n-\epsilon} 
%= 1 - \frac{\frac{2\epsilon}{5nq}}{\frac{1}{q}-\frac{\epsilon}{5nq}} 
\leq \frac{\mu_{v_i}^{\tau^c_{i-1}}(c_i)}{\mu_{v_i}^{\tau_{i-1}}(c_i)} 
\leq 
%1 +  \frac{\frac{2\epsilon}{5nq}}{\frac{1}{q}-\frac{\epsilon}{5nq}}  
%=  
1 + \frac{2\epsilon}{5n-\epsilon}
\le \mathrm{e}^{\epsilon / n}.
\end{align*}
%Note that $n \geq 1$ and $0 <\epsilon \leq 1$, thus $ 1 - \frac{2\epsilon}{5n-\epsilon} \geq \mathrm{e}^{-\epsilon / n}$ and $ 1 + \frac{2\epsilon}{5n-\epsilon} \leq \mathrm{e}^{\epsilon / n}$. 
Since $m=|\Lambda| \leq n$,  we have
\begin{align*}
\mathrm{e}^{-\epsilon}  \leq \frac{\mu^{\tau_m}_v(c)}{\mu^{\tau}_v(c)}\leq \mathrm{e}^\epsilon .
\end{align*}
For those $c \in \Sigma$ that $\mu_v^\tau(c) = 0$, it holds that $\tau_0^c$ is infeasible. %Since $\tau_m$ is feasible and extends $\tau$, 
It must hold that $\mu_v^{\tau_m}(c) = 0$, since if otherwise $\mu_v^{\tau_m}(c) > 0$, then $\tau_m^c$ is feasible,  contradicting that $\tau_0^c$ is infeasible since $\tau_m^c$ extends $\tau_0^c$.
\end{proof}
}

\subsection{Distributed JVV Sampler}

The Jerrum-Valiant-Vazirani (JVV) sampler~\cite{JVV86} is a general global reduction  from exact sampling to approximate counting for self-reducible problems via rejection sampling. Here we give a local distributed JVV sampler by realizing a local rejection sampling.
\begin{theorem}\label{thm:approx-infer-exact-sample}
For a class of local Gibbs distributions $\Problem{M}=\{\mu_{(G,\vec{x})}\}$, 
if there is a \LOCAL{} algorithm for approximate inference (with total variation error  $\le {1}/{5qn^4}$) with time complexity at most $t(n)$, where $q=|\Sigma|\le\poly(n)$ is the size of the alphabet, 
then there is a \LOCAL{} algorithm for exact sampling with time complexity  $O(t(n)\log^2n)$.
\end{theorem}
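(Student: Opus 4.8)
The plan is to implement the JVV rejection sampler locally. The classical JVV sampler works as follows: to sample from a self-reducible distribution, process the variables one at a time; when sampling variable $v_i$, we already have a partial assignment $\tau_{i-1}$ and want to draw $Y_{v_i}$ from $\mu_{v_i}^{\tau_{i-1}}$, but we only have an \emph{approximation} $\alg{\mu}_{v_i}^{\tau_{i-1}}$; so we draw a candidate from $\alg{\mu}_{v_i}^{\tau_{i-1}}$ and accept it with a probability that corrects the bias back to the true marginal, using an upper bound on the ratio of the two. To make this local I would first invoke the boosting lemma (Lemma~\ref{lemma-local-boosting}) to turn the hypothesized approximate inference algorithm (total variation error $\le 1/5qn^4$) into one with a small \emph{multiplicative} error $\epsilon = O(1/n^3)$ or so, since rejection sampling needs the multiplicative control $\mathrm{e}^{-\epsilon}\le \alg{\mu}_v(c)/\mu_v^\tau(c)\le \mathrm{e}^{\epsilon}$ to bound the rejection probability and to argue the corrected sample is exactly distributed as $\mu_v^\tau$. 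The multiplicative guarantee costs only a polylogarithmic (indeed polynomial-in-$1/\epsilon$ inside $t$, i.e.\ an additive $O(\log n)$ in locality) overhead.

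The key insight making this a \emph{distributed} (parallel) rather than sequential procedure is that, for a \emph{local} Gibbs distribution, the conditional marginal $\mu_{v_i}^{\tau_{i-1}}$ depends on $\tau_{i-1}$ only through its values within a ball of radius $t$ around $v_i$ (by the conditional independence / spatial Markov property, Proposition~\ref{prop:cond-ind}, since the inference algorithm only reads a $t$-ball, and values outside a separating $(t{+}\ell)$-shell are irrelevant). Hence the ``local rejection sampling'' can be set up as an \SLOCAL{} algorithm: process vertices in an arbitrary order, and at each vertex $v_i$ run the multiplicative-error inference algorithm (locality $t(n)$, boosted), draw a candidate, and accept/reject; the whole thing has locality $O(t(n))$ plus the shell radius $\ell=O(1)$. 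Then apply Lemma~\ref{lemma-slocal-local} to convert this \SLOCAL{} sampler into a \LOCAL{} algorithm, paying the $O(\log^2 n)$ network-decomposition factor, with all failures (both network-decomposition failures and rejection-sampling failures) locally certifiable and the output distribution preserved conditioned on success. This yields the claimed $O(t(n)\log^2 n)$ round complexity.

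The correctness argument has two prongs. First, \emph{exactness}: one shows by induction along the vertex ordering that, conditioned on no rejection, the partial configuration generated after step $i$ is distributed exactly as $\mu_{\{v_1,\dots,v_i\}}^{\tau}$; the accept probability at step $i$ is chosen precisely so that the candidate drawn from $\alg{\mu}_{v_i}^{\cdot}$ and then filtered equals a draw from the true $\mu_{v_i}^{\cdot}$. Here it is essential that $\alg{\mu}_{v_i}^{\cdot}$ has the \emph{same support} as $\mu_{v_i}^{\cdot}$ — this is where the multiplicative error (rather than total variation) is needed, together with the fact that a feasible partial configuration always has a feasible one-vertex extension (which the boosting lemma's analysis already establishes using local admissibility of the constraints, via the spatial Markov property). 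Second, \emph{success probability}: at each of the $\le n$ steps the rejection probability is at most $1-\mathrm{e}^{-\epsilon}=O(\epsilon)$, so by a union bound the total failure probability from rejections is $O(n\epsilon)$, which with $\epsilon = \Theta(1/n^3)$ (set by the $1/5qn^4$ hypothesis through the boosting lemma's $\epsilon/5qn$ scaling) gives $\sum_v \mathbb{E}[F_v] = O(1/n)$ as required for a Las Vegas algorithm in the sense fixed earlier.

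The main obstacle I anticipate is the bookkeeping around \emph{feasibility and support matching under the approximate oracle}: the rejection step must never propose a value that is infeasible for the conditional distribution, and it must assign the true (not merely approximate) acceptance probability, which requires knowing $\mu_{v_i}^{\tau_{i-1}}$ exactly — but locally one can only recover it exactly using the conditional-independence structure (re-deriving the exact conditional marginal from the weights inside a finite ball, exactly as in the proof of Lemma~\ref{lemma-local-boosting}), while the \emph{candidate distribution} is the genuinely approximate $\alg{\mu}_{v_i}^{\tau_{i-1}}$. Reconciling ``use the approximate oracle to propose, but the exact local computation to correct'' — and verifying that the radius-$(t{+}\ell)$ neighborhood indeed determines the exact conditional marginal given the already-fixed values on the separating shell — is the delicate part; the rest (union bound, \SLOCAL$\to$\LOCAL{} transfer, error scaling) is routine given the earlier lemmas.
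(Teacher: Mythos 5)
Your high-level outline matches the paper's strategy (boost the TV-error inference to multiplicative-error inference via Lemma~\ref{lemma-local-boosting}, then run a local JVV rejection sampler as an \SLOCAL{} algorithm, then transfer to \LOCAL{} via Lemma~\ref{lemma-slocal-local}), but the rejection mechanism you propose has a genuine gap and is not how the paper does it.

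Your plan is to have each node $v_i$ propose a value from the approximate marginal $\alg{\mu}_{v_i}^{\tau_{i-1}}$ and then accept it with a probability built from the ratio $\mu_{v_i}^{\tau_{i-1}}(\cdot)/\alg{\mu}_{v_i}^{\tau_{i-1}}(\cdot)$, claiming you can recover the \emph{true} conditional marginal $\mu_{v_i}^{\tau_{i-1}}$ exactly from local information via conditional independence. That step fails. The partial configuration $\tau_{i-1}$ specifies only $\Lambda\cup\{v_1,\dots,v_{i-1}\}$, which in general does \emph{not} contain a vertex separator isolating $v_i$ from the rest of the graph; Proposition~\ref{prop:cond-ind} (conditional independence) therefore does not let you compute $\mu_{v_i}^{\tau_{i-1}}$ from a ball around $v_i$. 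The boosting lemma's trick — extending the conditioning to a configuration $\tau_m$ on a full shell $\Gamma = B_{t+\ell}(v)\setminus(B_t(v)\cup\Lambda)$ so that the separator is complete — gives exact access to $\mu_{v_i}^{\tau_m}$, which is a \emph{different} conditional distribution, only multiplicatively close to $\mu_{v_i}^{\tau_{i-1}}$. If you plug $\mu_{v_i}^{\tau_m}$ into the acceptance probability, the per-vertex filters no longer telescope into $\mu^\tau(Y)/\alg{\mu}^\tau(Y)$, and the sampler is no longer exact.

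The paper's resolution is materially different and worth understanding: the \JVV{} algorithm runs three \SLOCAL{} passes. The first builds a feasible \emph{ground state} $\sigma_0$. The second generates the candidate $Y=\sigma_n$ by progressively sampling from the approximate marginals $\alg{\mu}^{Y}_{v_i}$ (nothing exact needed). The third constructs a path of feasible full configurations $\sigma_0,\sigma_1,\dots,\sigma_n=Y$, each differing from its predecessor only inside $B_t(v_i)$ and agreeing with $Y$ on $\{v_1,\dots,v_i\}$, and sets the local acceptance factor to
\[
q_{v_i}=\frac{\alg{\mu}^{\tau}(\sigma_{i-1})\,w(\sigma_i)}{\alg{\mu}^{\tau}(\sigma_i)\,w(\sigma_{i-1})}\,\mathrm{e}^{-3/n^2}.
\]
Because consecutive $\sigma$'s differ only on a $t$-ball, both the ratio of $\alg{\mu}^\tau$-probabilities (which factors through the algorithm's chain-rule representation, via~\eqref{eq-chain-rule-alg}) and the ratio of weights $w(\cdot)$ (which factors through the local constraints) are computable from $O(t)$-radius information, and the product over $i$ equals $\bigl(\alg{\mu}^\tau(\sigma_0)/w(\sigma_0)\bigr)\mathrm{e}^{-3/n}\cdot w(Y)/\alg{\mu}^\tau(Y)$, which is exactly what rejection sampling needs. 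In other words, the paper never needs the true marginal $\mu_{v_i}^{\tau_{i-1}}$; it only needs \emph{ratios} of the approximate distribution and of the weight function along a locally-supported path, and these are exactly the quantities that \emph{are} local. You should replace your per-step acceptance rule with this path-based decomposition (or some equivalent way to write the global acceptance ratio as a product of locally computable factors).
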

By the boosting lemma (Lemma~\ref{lemma-local-boosting}), 
%an approximate inference algorithm with total variation error $\frac{1}{5qn^4}$ and time complexity $t(n)$ implies 
the algorithm for approximate inference with total variation error in the assumption of Theorem~\ref{thm:approx-infer-exact-sample} can be boosted to an  approximate inference algorithm with multiplicative error ${1}/{n^3}$ and time complexity $O(t(n))$.
Theorem~\ref{thm:approx-infer-exact-sample} is then a consequence of the following proposition.

\begin{proposition}
\label{lemma-multi-infer-to-sample}
For a class of local Gibbs distributions $\Problem{M}=\{\mu_{(G,\vec{x})}\}$, 
if there is a \LOCAL{} algorithm for approximate inference (with multiplicative error  $\le {1}/{n^3}$) with time complexity at most $t(n)$, 
then there is a \LOCAL{} algorithm for exact sampling with time complexity $O(t(n)\log^2n)$.
\end{proposition}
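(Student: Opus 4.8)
The plan is to build a \concept{local JVV sampler} and then invoke the sequential-to-parallel transformation. By Lemma~\ref{lemma-slocal-local} it suffices to exhibit an \SLOCAL{} algorithm with locality $O(t(n))$ that, given any instance $(G,\vec{x},\tau)$ and any vertex ordering, outputs a random $\vec{Y}=(Y_v)_{v\in V}$ distributed \emph{exactly} as the target $\mu^{\tau}$; that lemma then promotes it to a \LOCAL{} algorithm with time complexity $O(t(n)\log^2 n)$ that, conditioned on its locally certifiable $O(1/n)$‑probability failure event not occurring, preserves this distribution. Since the \SLOCAL{} algorithm I will describe produces the same output distribution for every processing order, the ``for some ordering $\pi$'' clause of Lemma~\ref{lemma-slocal-local} causes no trouble.

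The \SLOCAL{} algorithm processes the vertices $v_1,v_2,\dots,v_n$ in the given order, growing a partial configuration that starts as $\tau$ on $\Lambda$. To process a free vertex $v_i$, it runs the assumed approximate‑inference procedure at $v_i$ on the instance $(G,\vec{x},\sigma_{<i})$, where $\sigma_{<i}$ is the configuration fixed so far (agreeing with $\tau$ on $\Lambda$); because that procedure has locality $t(n)$ it only reads $\Ball_{t(n)}(v_i)$, and it returns a distribution $\alg{\mu}_{v_i}^{\sigma_{<i}}$ with $\err{\alg{\mu}_{v_i}^{\sigma_{<i}}}{\mu_{v_i}^{\sigma_{<i}}}\le n^{-3}$. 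I would first record two easy facts. By Proposition~\ref{prop:deterministic-approx-infer} the procedure may be taken deterministic and failure‑free, so $\alg{\mu}_{v_i}^{\sigma_{<i}}$ is a well‑defined function of the local view. And the multiplicative guarantee forces $\alg{\mu}_{v_i}^{\sigma_{<i}}(c)>0\iff\mu_{v_i}^{\sigma_{<i}}(c)>0$; hence assigning $v_i$ any value in the support of $\alg{\mu}_{v_i}^{\sigma_{<i}}$ keeps the configuration feasible, so by induction every conditional marginal queried is well defined. If we simply sampled each $v_i\sim\alg{\mu}_{v_i}^{\sigma_{<i}}$, the chain rule would give a configuration from a distribution $\alg{\mu}$ with $\mathrm{e}^{-1/n^{2}}\le\alg{\mu}(\sigma)/\mu^{\tau}(\sigma)\le\mathrm{e}^{1/n^{2}}$ — only an approximate sampler.

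To upgrade to exact sampling I would attach to each step a certifiable rejection (``continue, or fail locally at $v_i$'') in the JVV spirit, so that the probability the whole process reaches a given feasible full configuration $\sigma$ equals $C\cdot w(\sigma)$ for one constant $C$ independent of $\sigma$; conditioned on no step failing, the output is then exactly $\mu^{\tau}(\sigma)=w(\sigma)/Z(\tau)$, where $Z(\tau)=\sum_{\sigma'\colon\sigma'_{\Lambda}=\tau}w(\sigma')$, and the overall failure probability is $1-C\cdot Z(\tau)$. Each step's rejection probability is built only from local data: the approximate marginal $\alg{\mu}_{v_i}^{\sigma_{<i}}$, the values $f(\sigma_S)$ of the constraints \emph{saturated} at step $i$ (those whose scope becomes fully fixed when $v_i$ is assigned — local because local Gibbs distributions have constant‑diameter scopes), and one ``local partition‑function factor'' $\alg{z}_{v_i}$; the $\alg{z}_v$ are obtained either from the equivalence between approximate inference and approximate counting in the \LOCAL{} model, or directly by a local self‑reduction that telescopes the approximate marginals along the processing order, and they satisfy $\mathrm{e}^{-1/n^{2}}\le\prod_{v\in V}\alg{z}_v/Z(\tau)\le\mathrm{e}^{1/n^{2}}$. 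The rejection probabilities are calibrated so that (i) each is at most $1$ for \emph{every} feasible $\sigma$, and (ii) the product over all $n$ steps collapses the $\alg{\mu}$‑ and $\alg{z}$‑factors against $w(\sigma)/\prod_v\alg{z}_v$, leaving $\Pr[\text{reach }\sigma]=C\,w(\sigma)$ with $C\asymp 1/Z(\tau)$, hence $1-C\,Z(\tau)=O(n^{-2})$, well within the Las Vegas budget.

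I expect the main obstacle to be precisely the \emph{localization} of this JVV correction: a priori it is a single global coin flip of probability $w(\sigma)/(\alg{\mu}(\sigma)\prod_v\alg{z}_v)$, and the naive way to distribute it — each vertex flips an independent coin for its own factor, accept iff all succeed — forces replacing each local factor by its worst case over configurations, collapsing $C$ to a product of worst cases and making the algorithm fail almost surely. The fix is to turn the correction into a step‑by‑step \concept{local self-reduction}: when $v_i$ is processed I would use the conditional‑independence (spatial Markov) property of local Gibbs distributions (Proposition~\ref{prop:cond-ind}) to argue that fixing the already‑sampled values inside a radius‑$O(t(n))$ ball around $v_i$ already pins down the relevant conditional marginal, so the factor needed at step $i$ to convert $\alg{\mu}$ into $\mu^{\tau}$ is itself realizable by a bounded‑locality sub‑rejection that loses at most a factor $\mathrm{e}^{\pm n^{-3}}$; summed over $n$ steps the total multiplicative slack is only $\mathrm{e}^{\pm n^{-2}}$, and the calibration against $\prod_v\alg{z}_v$ absorbs the one remaining global constant. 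Verifying $\sum_{v}\mathbb{E}[F_v]=O(n^{-2})$ and that conditioning on $\{F_v=0\text{ for all }v\}$ yields precisely $\mu^{\tau}$ completes the \SLOCAL{} construction, and Lemma~\ref{lemma-slocal-local} then delivers the claimed $O(t(n)\log^2 n)$‑round \LOCAL{} algorithm.
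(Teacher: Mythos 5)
Your high-level architecture matches the paper's --- build an \SLOCAL{} rejection-style JVV sampler, invoke Lemma~\ref{lemma-slocal-local} to get the $O(t(n)\log^2 n)$ \LOCAL{} bound --- and you correctly analyze the chain-rule sampling pass: drawing each $v_i$ from the approximate marginal $\alg{\mu}_{v_i}^{\sigma_{<i}}$ yields a distribution $\alg{\mu}^\tau$ with $\mathrm{e}^{-1/n^2}\le\alg{\mu}^\tau(\sigma)/\mu^\tau(\sigma)\le\mathrm{e}^{1/n^2}$, and the multiplicative guarantee does keep all queried conditionals feasible. You also correctly identify the central obstacle: naively factoring the global JVV ratio $w(\sigma)/\alg{\mu}^\tau(\sigma)$ into independent per-vertex coins forces worst-case local factors and collapses the acceptance probability.

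The fix you sketch, however, does not close the gap. Proposition~\ref{prop:cond-ind} localizes $\mu_{v_i}^{\cdot}$ only once a \emph{full} separator around $v_i$ is pinned; during the sequential scan the boundary of $B_{O(t)}(v_i)$ still contains unassigned vertices (indices $>i$ outside $\Lambda$), so $\mu_{v_i}^{\sigma_{<i}}$ remains a genuinely global quantity that no radius-$O(t)$ view computes, and the claimed "bounded-locality sub-rejection" for the step-$i$ factor is never constructed. The "local partition-function factors" $\alg{z}_v$ route is exactly the naive scheme you flag: forcing each per-step coin probability $\le1$ uniformly over all continuations pushes each $\alg{z}_{v_i}$ to a worst case, and the product of worst cases is not within a constant of $Z(\tau)$. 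The device the paper actually uses, and which your plan lacks, is a \emph{ground state plus an interpolation path of full configurations}: a preliminary pass builds a fixed feasible $\sigma_0\in\Sigma^V$; after the sampling pass produces $Y$, a third pass builds a chain $\sigma_0,\sigma_1,\ldots,\sigma_n=Y$ of fully specified feasible configurations with $\sigma_i(v_j)=Y(v_j)$ for $j\le i$ and $\sigma_i,\sigma_{i-1}$ differing only inside $B_t(v_i)$ (Claim~\ref{claim-JVV-path}); the rejection ratio is $q_{v_i}=\frac{\alg{\mu}^\tau(\sigma_{i-1})\,w(\sigma_i)}{\alg{\mu}^\tau(\sigma_i)\,w(\sigma_{i-1})}\mathrm{e}^{-3/n^2}$. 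Because consecutive $\sigma_i$ differ only in a $t$-ball and $\alg{\mu}^\tau$ is a chain-rule product of $t$-local approximate marginals, each $q_{v_i}$ is computable within radius $3t+\ell$ and satisfies $q_{v_i}\le1$ (Claim~\ref{claim-qv}); crucially, $\prod_i q_{v_i}$ telescopes to $\frac{\alg{\mu}^\tau(\sigma_0)}{w(\sigma_0)}\mathrm{e}^{-3/n}\cdot\frac{w(Y)}{\alg{\mu}^\tau(Y)}$, a single global constant times the needed correction, giving $\Pr[Y=\sigma\wedge\mathcal{G}]\propto w(\sigma)$. Without this path-plus-telescoping construction, the rejection in your proposal remains a global coin with no local realization, and supplying that realization is the actual content of the proposition.
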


Let $\mathcal{A}$ be the \LOCAL{} algorithm for approximate inference with multiplicative error $1 /n^3 $ and time complexity at most $t=t(n)$. 
We construct an \SLOCAL{} algorithm called \JVV{} for exact sampling with time complexity  $O(t)$. \ifabs{}{For convenience, the \JVV{} algorithm is presented as a multi-pass \SLOCAL{} algorithm with the ability of writing nearby nodes' internal memories. It was observed in~\cite{ghaffari2016complexity} that these variations will not substantially change the power of \SLOCAL{} algorithms.

%then the Lemma can be proved by Lemma~\ref{lemma-slocal-local}.

\begin{lemma}\label{lemma-SLOCAL-variation}
The followings hold for \SLOCAL{} algorithms.
\begin{enumerate}
\item (Observation 2.1 in~\cite{ghaffari2016complexity})
Any \SLOCAL{} algorithm $\mathcal{A}$ with locality $R$ in which each node $v$ can write into the local memory $S_u$ of other nodes $u$ within its radius $r \leq R$ can be transformed into an \SLOCAL{} algorithm $\mathcal{B}$ with locality $r + R$ in which $v$ writes only in its own memory $S_v$.
\item (Lemma 2.2 in~\cite{ghaffari2016complexity})
Any $k$-pass \SLOCAL{} algorithm $\mathcal{A}$ with locality $r_i$ in the $i$-th pass for $i = 1 ,\ldots, k$, can be transformed into a single-pass \SLOCAL{} algorithm $\mathcal{B}$ with locality $r_1+2\sum_{i = 2}^k r_i$.
\end{enumerate}
\end{lemma}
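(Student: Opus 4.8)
The plan is to re-derive both facts by the local-simulation arguments behind Observation~2.1 and Lemma~2.2 of~\cite{ghaffari2016complexity}, at the level of detail needed here. In each case $\mathcal{B}$ runs on the same adversarial vertex ordering $\pi$ as $\mathcal{A}$ (and, for part~2, as all $k$ passes of $\mathcal{A}$), and the invariant I maintain is that once $\mathcal{B}$ has processed a node, that node's memory stores exactly the portion of $\mathcal{A}$'s execution trace localized there, so any later node can recover everything it needs by reading a bounded ball of already-written memories. Correctness in both parts is then an induction on the position of a node in $\pi$.

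\emph{Part 1 (removing writes into other nodes' memories).} I would let $\mathcal{B}$ store at each processed node $v$ the pair $(\hat{S}_v, W_v)$, where $\hat{S}_v$ is the memory $\mathcal{A}$ would hold at $v$ right after $v$ is processed (its self-update folded in), and $W_v$ is the list of messages ``(recipient $u$, written value)'' that $v$ writes into $S_u$ for $u\in \Ball_r(v)$ during its $\mathcal{A}$-step. When $\mathcal{B}$ processes $v$, it reads all memories within radius $r+R$ of $v$; for each $x\in\Ball_R(v)$ it reconstructs the state $S_x$ that $\mathcal{A}$ would have at the moment $v$ is processed, by taking $x$'s base state (its stored $\hat{S}_x$ if $x\prec_\pi v$, else its initial state) and applying, in $\pi$-order, all messages addressed to $x$ found in the lists $W_w$ over $w\in\Ball_r(x)\subseteq\Ball_{r+R}(v)$ with $w\prec_\pi v$. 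Since $v$'s $\mathcal{A}$-step reads only $\Ball_R(v)$, $\mathcal{B}$ can now simulate that step, obtaining $Y_v$, the new $\hat{S}_v$, and the new list $W_v$, and write these into its own memory. The inductive hypothesis supplies correct $(\hat{S}_w,W_w)$ for all $w\prec_\pi v$, which is precisely the data the reconstruction consumes, so $Y_v$ is correct; the locality of $\mathcal{B}$ is $r+R$.

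\emph{Part 2 (collapsing $k$ passes into one).} Here I would have $\mathcal{B}$ store at each processed node $v$ its full trajectory $(T_v^{(1)},\dots,T_v^{(k)})$ of post-pass states together with $v$'s output. When $\mathcal{B}$ processes $v$, it reconstructs these pass by pass: $T_v^{(1)}$ comes from simulating pass~1's step at $v$, reading $\Ball_{r_1}(v)$, where a neighbor $u$ contributes its stored $T_u^{(1)}$ if $u\prec_\pi v$ and its initial state otherwise; and once the post-pass-$(i-1)$ states are available in the relevant ball, $T_v^{(i)}$ comes from simulating pass~$i$'s step at $v$, reading $\Ball_{r_i}(v)$, where a neighbor $u$ contributes its stored $T_u^{(i)}$ if already processed and otherwise forces $\mathcal{B}$ to recompute $T_u^{(i-1)}$ by recursively re-simulating passes $1,\dots,i-1$ in a ball around $u$. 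Bounding the radius this recursion reaches is the crux: re-simulating a single pass at a node costs $r_i$ to reach its read-neighborhood and a further $r_i$ because that neighborhood may contain nodes processed earlier in that same pass whose post-pass states must themselves be re-simulated; a careful accounting shows these contributions telescope, adding $2r_i$ per pass $i\ge 2$ on top of the cost $r_1$ of the innermost pass-1 reconstruction, giving locality $r_1+2\sum_{i=2}^k r_i$. Correctness is again an induction on $\pi$. (In both parts a minor technicality is that when the writes to, or updates of, a node are not order-independent, each node must additionally record its rank in $\pi$ so the reconstruction respects the correct processing order.)

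Combining the two, a multi-pass \JVV{} algorithm is first folded into a single \SLOCAL{} pass by part~2, then the memory-writes are eliminated by part~1, and finally Lemma~\ref{lemma-slocal-local} transforms the result into a \LOCAL{} algorithm; this is how these \SLOCAL{} variations get discharged in the rest of Section~\ref{sec:JVV}. The step I expect to be the main obstacle is the locality bookkeeping in part~2 — verifying that the recursive re-simulation does not blow up and that it collapses to exactly $r_1+2\sum_{i=2}^k r_i$ — for which I would follow the argument of Lemma~2.2 in~\cite{ghaffari2016complexity}.
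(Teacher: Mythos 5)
The paper does not actually prove this lemma: both parts are imported verbatim from \cite{ghaffari2016complexity} (Observation~2.1 and Lemma~2.2 there), so your reconstruction has to be judged on its own merits. Your Part~1 is correct and is essentially the standard argument: each node records the writes it would have issued, and a later node reassembles $S_x$ for every $x\in \Ball_R(v)$ from the recorded write-lists of nodes in $\Ball_r(x)\subseteq\Ball_{r+R}(v)$, which costs radius $r+R$.

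Part~2 as you describe it has a genuine gap, and it sits exactly where you flagged it. You simulate all $k$ passes under the \emph{original} order $\pi$ and, when the pass-$i$ step at $v$ needs the pass-$(i-1)$ state of an unprocessed neighbor $u$, you ``recursively re-simulate passes $1,\dots,i-1$ in a ball around $u$.'' But re-simulating pass $j$ at $u$ under $\pi$ requires the \emph{post-pass-$j$} states of $u$'s $\pi$-predecessors inside $\Ball_{r_j}(u)$; those predecessors lying strictly between $v$ and $u$ in $\pi$ have not yet been processed by $\mathcal{B}$ and store nothing, so they must themselves be re-simulated, and their pass-$j$ steps depend on \emph{their} unprocessed $\pi$-predecessors within another $r_j$, and so on. Along a path whose vertices appear in $\pi$ in decreasing order away from $v$, this dependency chain can have length up to $n$, so the radius does not telescope to $r_1+2\sum_{i\ge 2}r_i$. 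The argument of Lemma~2.2 in \cite{ghaffari2016complexity} avoids this by \emph{not} preserving $\pi$ for the inner passes: when the outer pass processes $v$, it runs pass $i-1$ on the not-yet-handled nodes of $\Ball_{r_i}(v)$ in the order in which they are first discovered, which is a legitimate adversarial order for an \SLOCAL{} algorithm and hence still correct; the resulting inner-pass states are written into the memories of those nodes (at distance up to $r_i$), and eliminating these remote writes via Part~1 is precisely what turns $r_i$ into $2r_i$ for each $i\ge 2$. Storing only each node's own trajectory and insisting on the original order, as your sketch does, cannot yield the stated locality.
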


%\begin{proof}[Proof of Proposition~\ref{lemma-multi-infer-to-sample}]
}

Let $(G, \vec{x}, \tau)$ be an instance, where $G=(V,E)$. The joint distribution $\mu=\mu_{(G, \vec{x})}$ is a Gibbs distribution given by $(G,\Sigma,\mathcal{F})$, where $q=|\Sigma|$.
Since the Gibbs distribution is local, we assume that there is an $\ell=O(1)$ such that $\forall (f, S) \in \mathcal{F}: \max_{u, v \in S}\dist_G(u, v) \leq \ell$. 
The $\tau\in\Sigma^\Lambda$ is an arbitrary feasible configuration on an arbitrary subset $\Lambda\subseteq V$. The distribution $\mu^\tau$ is the target distribution that we want to sample from.

\ifabs{\paragraph{The \JVV{} algorithm.}}{
\subsubsection{The \JVV{} algorithm}}
%\SLOCAL{} algorithm $\mathcal{B}$ processes nodes in two phases, then $\mathcal{B}$ either returns a random configuration or declares that the algorithm fails at some node.
The \JVV{} algorithm is an \SLOCAL{} algorithm for the \concept{local rejection sampling}: Upon termination, the algorithm returns a $(Y,F')=((Y_{v})_{v\in V}, (F'_{v})_{v\in V})$, where %$Y=(Y_{v})_{v\in V}$, $F'=(F'_{v})_{v\in V}$,  and 
each node $v\in V$ outputs $Y_{v}$ and $F'_{v}$, such that $Y\in\Sigma^V$ is a random configuration, and each $F'_{v}\in\{0,1\}$  indicates that the algorithm fails (rejects) locally at node $v$.

The \SLOCAL{} algorithm consists of three passes. 
In each pass, the algorithm scans the nodes in the same ordering $\pi = v_1,v_2,\ldots,v_n$ provided to the algorithm by an adversary.

In the first pass, a configuration $\sigma_0\in\Sigma^V$ that is feasible with respect to the target distribution $\mu^\tau$, called the \concept{ground state}, is constructed by the following procedure:
\begin{itemize}
\item Initially, $\sigma_0\in\Sigma^{\Lambda}$ and $\sigma_0 = \tau$. Then $\sigma_0$ is updated at each step.
\item In the $i$-th step, simulate the algorithm $\mathcal{A}$ at node $v_i$ within radius $t=t(n)$ on the instance $(G, \vec{x}, \sigma_0)$, which returns a marginal distribution $\alg{\mu}^{\sigma}_{v_i}$. Pick an arbitrary $c_i\in \Sigma$ with $\alg{\mu}^{\sigma_0}_{v_i}(c_i)>0$, and extend the current $\sigma_0$ further onto $v_i$ by setting $\sigma_0(v_i) \gets c_i$.
%\footnote{
%We remark that $\sigma_0,\sigma_n$ may be partially specified configurations at the beginning. In each step, $\sigma_0, \sigma_n$ are updated or extended  by algorithm $\mathcal{B}$. And $\sigma_0, \sigma_n$ become configurations in $\Sigma^V$ upon the phase-1 of $\mathcal{B}$ terminates.
%} 
\end{itemize}

In the second pass, a random configurations $Y\in\Sigma^V$ is generated independently by the following procedure:
\begin{itemize}
\item Initially, $Y\in\Sigma^\Lambda$ and $Y = \tau$. Then $Y$ is randomly updated at each step.
\item In the $i$-th step, simulate the algorithm $\mathcal{A}$ at node $v_i$ on the instance $(G, \vec{x}, Y)$, which returns a marginal distribution $\alg{\mu}^{Y}_{v_i}$. Sample a random $Y_i\in \Sigma$ independently according to $\alg{\mu}^{Y}_{v_i}$, and extend the current $Y$ further onto $v_i$ by setting $Y(v_i) \gets Y_i$.
%\footnote{
%We remark that $\sigma_0,\sigma_n$ may be partially specified configurations at the beginning. In each step, $\sigma_0, \sigma_n$ are updated or extended  by algorithm $\mathcal{B}$. And $\sigma_0, \sigma_n$ become configurations in $\Sigma^V$ upon the phase-1 of $\mathcal{B}$ terminates.
%} 
\end{itemize}
Let $\alg{\mu}^\tau$ be the distribution of $Y\in\Sigma^V$ generated as above. It closely approximates the $\mu^\tau$.
\ifabs{}{
\begin{claim}\label{claim-JVV-approx}
For any $\sigma \in \Sigma^V$,	it holds that $\mathrm{e}^{-1/n^2} \leq \frac{\alg{\mu}^\tau(\sigma)}{\mu^\tau(\sigma)} \leq \mathrm{e}^{1/n^2}$.
\end{claim}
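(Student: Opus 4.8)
The plan is to analyze the second pass of the \JVV{} algorithm directly; the first pass (the ground state) plays no role in this claim. We may assume $\mathcal{A}$ is deterministic (Proposition~\ref{prop:deterministic-approx-infer}), so the only randomness in the second pass is the value sampled at each vertex. Fix a target configuration $\sigma\in\Sigma^V$, let $\pi=v_1,\dots,v_n$ be the adversarial order, and let $\sigma_{\le i}$ denote the restriction of $\sigma$ to $\Lambda\cup\{v_1,\dots,v_i\}$. When $\sigma$ is feasible with respect to $\mu^\tau$, the sequence of conditionings that the second pass passes through while producing $\sigma$ is exactly $\tau=\sigma_{\le 0},\sigma_{\le 1},\dots,\sigma_{\le n}=\sigma$, so
\[
\alg{\mu}^\tau(\sigma)=\prod_{i:\,v_i\notin\Lambda}\alg{\mu}^{\sigma_{\le i-1}}_{v_i}(\sigma(v_i)),
\qquad
\mu^\tau(\sigma)=\prod_{i:\,v_i\notin\Lambda}\mu^{\sigma_{\le i-1}}_{v_i}(\sigma(v_i)),
\]
where the second identity is the chain rule applied in the order $\pi$.

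The only fact to set up first is that the multiplicative-error hypothesis $\err{\alg{\mu}^{\tau'}_v}{\mu^{\tau'}_v}\le 1/n^3$ is in force for every factor in those products. Using the conventions $\ln 0-\ln 0=0$, this hypothesis forces, for every feasible $\tau'$ and every $v$, that $\alg{\mu}^{\tau'}_v$ and $\mu^{\tau'}_v$ have exactly the same support and that $\alg{\mu}^{\tau'}_v(c)/\mu^{\tau'}_v(c)\in[\mathrm{e}^{-1/n^3},\mathrm{e}^{1/n^3}]$ on that common support. A short induction on $i$ then shows that every partial configuration $Y$ arising in the second pass is feasible with respect to $\mu$: if $Y$ is feasible, then $\alg{\mu}^{Y}_{v_i}$ is supported inside the support of $\mu^{Y}_{v_i}$, so the sampled value extends $Y$ to a feasible configuration. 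Consequently $\alg{\mu}^\tau$ is supported exactly on the configurations feasible with respect to $\mu^\tau$, and whenever $\sigma$ is feasible each conditioning $\sigma_{\le i-1}$ above is feasible (being a restriction of $\sigma$), so the per-vertex bound applies to every factor.

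The conclusion is then immediate. For $\sigma$ feasible with respect to $\mu^\tau$, dividing the two products telescopes, and since there are at most $n$ factors each lying in $[\mathrm{e}^{-1/n^3},\mathrm{e}^{1/n^3}]$,
\[
\frac{\alg{\mu}^\tau(\sigma)}{\mu^\tau(\sigma)}
=\prod_{i:\,v_i\notin\Lambda}\frac{\alg{\mu}^{\sigma_{\le i-1}}_{v_i}(\sigma(v_i))}{\mu^{\sigma_{\le i-1}}_{v_i}(\sigma(v_i))}
\in\left[\mathrm{e}^{-1/n^2},\mathrm{e}^{1/n^2}\right].
\]
For $\sigma$ infeasible with respect to $\mu^\tau$, both $\alg{\mu}^\tau(\sigma)$ and $\mu^\tau(\sigma)$ are $0$ (the former by the support statement just established), so the ratio is $0/0=1$ by convention, again within the stated interval. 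I do not expect a real obstacle: the argument is just a telescoping of the chain rule, and the only thing needing care is the feasibility bookkeeping that keeps the multiplicative-error hypothesis applicable at every step, together with the $0/0=1$ and $\ln 0-\ln 0=0$ conventions that pin down the support of $\alg{\mu}^\tau$.
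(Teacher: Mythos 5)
Your proof is correct and follows the same chain-rule route as the paper's: factor both $\alg{\mu}^\tau(\sigma)$ and $\mu^\tau(\sigma)$ along the scan order and bound each of the $\le n$ per-vertex ratios by $\mathrm{e}^{\pm 1/n^3}$ using the multiplicative-error guarantee. The one genuine addition is your explicit feasibility bookkeeping — that the $\err{\cdot}{\cdot}\le 1/n^3$ hypothesis forces $\alg{\mu}^{\tau'}_v$ and $\mu^{\tau'}_v$ to share support, hence the second pass stays within feasible configurations, hence the error bound is applicable to every factor and the infeasible-$\sigma$ case reduces to $0/0=1$; the paper leaves this implicit.
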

}

In the third pass, each node $v_i$ carefully computes a probability $q_{v_i}$,
and uses $q_{v_i}$ to independently sample a random $F'_{v_i}\in\{0,1\}$ indicating the local failure at $v_i$.
\begin{itemize}
\item Initially, let $\sigma_0\in\Sigma^{V}$ be the ground state constructed in the first pass.
\item 
A sequence of configurations $\sigma_1,\sigma_2,\ldots,\sigma_n\in\Sigma^V$ is constructed, where $\sigma_n=Y$ is the configuration randomly generated in the second pass, such that the following invariants hold for all $0\le i\le n$:
\begin{align}
&\sigma_i\text{ is feasible with respect to }\mu^\tau;\label{eq:JVV-path-invariant-0}\\
&\forall 1\le j\le i: 
\quad\sigma_i(v_j)=Y(v_j);\label{eq:JVV-path-invariant-1}\\ 
&\sigma_i\text{ and }\sigma_{i-1}\text{ differ only over the $t$-ball }B_t(v_i).\label{eq:JVV-path-invariant-2}
\end{align}
The invariants hold trivially for $\sigma_0$. 
In the $i$-th step, assume the above invariants for $\sigma_{i-1}$.
Construct a $\sigma_i\in\Sigma^V$ satisfying these invariants by enumerating all configurations $\sigma'\in\Sigma^{B}$ over the $t$-ball $B=B_t(v_i)$ and trying replacing the assignment of $\sigma_{i-1}(B)$ with $\sigma'$.
The following claim guarantees the existence of such $\sigma_i\in\Sigma^V$.
\begin{claim}\label{claim-JVV-path}
Assume that $\sigma_{i-1}\in\Sigma^V$ is feasible with respect to $\mu^\tau$ and $\sigma_{i-1}(v_j)=Y(v_j)$ for all $1\le j\le i-1$. There exists a $\sigma_{i}\in\Sigma^V$ satisfying~\eqref{eq:JVV-path-invariant-0}, \eqref{eq:JVV-path-invariant-1} and~\eqref{eq:JVV-path-invariant-2}.
\end{claim}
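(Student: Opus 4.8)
The plan is to build $\sigma_i$ by keeping $\sigma_{i-1}$ untouched outside the ball $B:=B_t(v_i)$ and re-choosing the configuration on $B$; the only substantive point will be that the target value $Y(v_i)$ can be installed at $v_i$ in this way without breaking feasibility.

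First I would record an auxiliary fact about the second pass. Writing $Y^{(j)}$ for the restriction of $Y$ to $\Lambda\cup\{v_1,\dots,v_j\}$, a short induction on $j$ shows every $Y^{(j)}$ is feasible with respect to $\mu^\tau$: in the $j$-th step $Y_j$ is sampled from $\alg\mu^{Y^{(j-1)}}_{v_j}$, and since $\mathcal A$ has multiplicative error at most $1/n^3$ — in particular finite — $\alg\mu^{Y^{(j-1)}}_{v_j}$ and the true conditional marginal $\mu^{Y^{(j-1)}}_{v_j}$ have the same support, so $\mu^{Y^{(j-1)}}_{v_j}(Y_j)>0$ and $Y^{(j)}$ extends feasibly. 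Now $\sigma_{i-1}$ being feasible for $\mu^\tau$ it agrees with $\tau$ on $\Lambda$, and by hypothesis it agrees with $Y$ on $v_1,\dots,v_{i-1}$; hence the partial configuration
\[
\chi'\ :=\ \sigma_{i-1}|_{V\setminus B}\ \cup\ Y|_{\Lambda\cup\{v_1,\dots,v_{i-1}\}}
\]
is well defined, and is feasible for $\mu^\tau$ because the full configuration $\sigma_{i-1}$ extends it.

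The key step is to show $\chi:=\chi'\cup\{v_i\mapsto Y(v_i)\}$ is still feasible for $\mu^\tau$. I would compare the two instances $(G,\vec x,\chi')$ and $(G,\vec x,Y^{(i-1)})$, both valid since both pinned configurations are feasible. Their restrictions to $B$ coincide: $\chi'$ carries $\sigma_{i-1}$-values only outside $B$, so on $B$ it equals $Y|_{(\Lambda\cup\{v_1,\dots,v_{i-1}\})\cap B}=Y^{(i-1)}|_B$. Since $\mathcal A$ has locality $t$ (and may be taken deterministic), its output at $v_i$ depends only on the instance restricted to $B$, so it is the same marginal on both instances; together with the multiplicative-error guarantee this gives
\[
\mathrm{supp}\bigl(\mu^{\chi'}_{v_i}\bigr)=\mathrm{supp}\bigl(\alg\mu^{\chi'}_{v_i}\bigr)=\mathrm{supp}\bigl(\alg\mu^{Y^{(i-1)}}_{v_i}\bigr)=\mathrm{supp}\bigl(\mu^{Y^{(i-1)}}_{v_i}\bigr)\ni Y(v_i),
\]
so $\mu^{\chi'}_{v_i}(Y(v_i))>0$, i.e.\ $\chi$ is feasible for $\mu^\tau$.

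To finish, let $\sigma_i$ be any configuration feasible for $\mu^\tau$ that extends $\chi$ onto the remaining free vertices $B\setminus(\Lambda\cup\{v_1,\dots,v_i\})$; such $\sigma_i$ exists because $\chi$ is feasible. Then \eqref{eq:JVV-path-invariant-0} holds by construction; \eqref{eq:JVV-path-invariant-2} holds because $\sigma_i$ agrees with $\chi$, hence with $\sigma_{i-1}$, on $V\setminus B$; and \eqref{eq:JVV-path-invariant-1} holds because $\{v_1,\dots,v_i\}\subseteq\mathrm{dom}(\chi)$ with $\chi$ equal to $Y$ there (note $v_i\in B$, so $v_i\notin\mathrm{dom}(\chi')$ and $\sigma_i(v_i)=Y(v_i)$); the degenerate case $v_i\in\Lambda$ is trivial. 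The single non-routine step is the feasibility of $\chi$, and the mechanism above — that the bounded locality $t$ of the inference algorithm makes the set of values available at $v_i$ insensitive to the configuration outside $B$, so the value $Y(v_i)$ chosen in the second pass remains realizable once the far part of $\sigma_{i-1}$ is pulled back into place — is the crux; everything else is bookkeeping.
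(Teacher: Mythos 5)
Your proof is correct and follows essentially the same route as the paper's: you compare the two pinnings obtained from $\sigma_{i-1}$ and from $Y$ that agree inside $B_t(v_i)$, use the locality $t$ of $\mathcal A$ together with the finiteness of the multiplicative error to conclude that the exact conditional marginals at $v_i$ have the same support, and deduce $\mu^{\chi'}_{v_i}(Y(v_i))>0$. The only cosmetic difference is that you explicitly carry $\Lambda$ in the pinned set $\chi'$ (the paper's $\Gamma$ omits $\Lambda$, leaving the agreement with $\tau$ on $\Lambda\cap B_t(v_i)$ implicit), which is a slightly more careful bookkeeping of the same argument.
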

Assuming this claim, the algorithm can always find a good $\sigma_i$ and verify the invariants within bounded radius from $v_i$. Specifically, assuming the invariants for $\sigma_{i-1}$, the invariants~\eqref{eq:JVV-path-invariant-1} and~\eqref{eq:JVV-path-invariant-2} can be verified at $v_i$ within radius $t$, and~\eqref{eq:JVV-path-invariant-0} can be verified within radius $t+\ell$\ifabs{}{ due to the conditional independence property stated in Proposition~\ref{prop:cond-ind}}.
Once a good $\sigma_i$ is found, node $v_i$ updates the internal states of all nodes in $B_t(v_i)$ to update the current configuration to $\sigma_i$.

Node $v_i$ computes the value of 
\begin{align}
\label{eq-definition-qv}
q_{v_i} =q_{v_i}(Y)\triangleq \frac{\alg{\mu}^{\tau}(\sigma_{i-1})w(\sigma_i)}{\alg{\mu}^{\tau}(\sigma_i)w(\sigma_{i-1})}	\mathrm{e}^{-3/n^2},
\end{align}
where $\alg{\mu}^{\tau}$ stands for the distribution of the random configuration $Y$ generated in the second pass, and $w(\cdot)$ is the weight for the Gibbs distribution defined in\ifabs{ Definition~\ref{def:Gibbs}}{~\eqref{eq:configuration-weight}}.

%The $q_{v_i}$ is a random variable determined by the random configuration $Y$ generated in the second pass. Observe that $w(\cdot)$ is a product of local factors and  $\alg{\mu}^{\tau}$ is also produced by a local algorithm. Since $\sigma_i$ and $\sigma_{i-1}$ differ only over $B_t(v_i)$, it is conceivable that $q_{v_i}$ can be computed at $v_i$ within bounded radius. Indeed, the following claim is true.
\begin{claim}
\label{claim-qv}	
The $q_{v_i}$ defined in~\eqref{eq-definition-qv} can be computed at $v_i$ within radius $3t+\ell=O(t)$  and it always holds that $\mathrm{e}^{-5/n^2} \leq q_{v_i} \leq 1$.
\end{claim}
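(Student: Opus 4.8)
The plan is to verify the two assertions separately: first the locality bound on the radius needed to compute $q_{v_i}$, then the two-sided bound $\mathrm{e}^{-5/n^2}\le q_{v_i}\le 1$.

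For the locality claim, I would argue that each of the four quantities appearing in the defining formula~\eqref{eq-definition-qv} can be reconstructed from information within radius $3t+\ell$ of $v_i$. The weights $w(\sigma_i)$ and $w(\sigma_{i-1})$ do not literally depend only on a bounded neighborhood, but their \emph{ratio} does: since $\sigma_i$ and $\sigma_{i-1}$ differ only over the $t$-ball $B_t(v_i)$ by invariant~\eqref{eq:JVV-path-invariant-2}, every factor $f(\sigma_{S})$ with $S$ disjoint from $B_t(v_i)$ cancels, and since the Gibbs distribution is local every surviving constraint has scope $S$ with $S\cap B_t(v_i)\ne\varnothing$ and $\Diam_G(S)\le\ell$, hence $S\subseteq B_{t+\ell}(v_i)$; so $w(\sigma_i)/w(\sigma_{i-1})$ is determined within radius $t+\ell$. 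Similarly $\alg{\mu}^\tau(\sigma_{i-1})/\alg{\mu}^\tau(\sigma_i)$ is a ratio of products of marginals produced by $\mathcal{A}$ along the scanning order in the second pass: after taking the ratio, only the factors corresponding to nodes $v_j$ at which $\sigma_{i-1}$ and $\sigma_i$ (or the conditioning configurations feeding $\mathcal{A}$) differ survive, and by~\eqref{eq:JVV-path-invariant-2} together with the radius-$t$ locality of $\mathcal{A}$ these are only the $v_j\in B_{2t}(v_i)$, each of whose marginal is itself computable within a further radius $t$; collecting everything, radius $3t+\ell$ suffices, and $3t+\ell=O(t)$ since $\ell=O(1)$. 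Here I would be a bit careful: the second-pass marginal $\alg{\mu}^Y_{v_j}$ is conditioned on the partial configuration $Y$ restricted to the already-scanned prefix, so I need the prefixes under $\sigma_{i-1}$ and $\sigma_i$ to agree outside $B_t(v_i)$, which is exactly invariant~\eqref{eq:JVV-path-invariant-2}; this is the point where the telescoping of the two chain-rule expansions must be set up explicitly.

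For the quantitative bound, the upper bound $q_{v_i}\le 1$ and the lower bound $q_{v_i}\ge\mathrm{e}^{-5/n^2}$ both follow from comparing $\alg{\mu}^\tau$ with the exact measure $\mu^\tau$. Writing $q_{v_i}$ in~\eqref{eq-definition-qv} as
\[
q_{v_i}=\frac{\alg{\mu}^\tau(\sigma_{i-1})}{\mu^\tau(\sigma_{i-1})}\cdot\frac{\mu^\tau(\sigma_i)}{\alg{\mu}^\tau(\sigma_i)}\cdot\frac{\mu^\tau(\sigma_{i-1})\,w(\sigma_i)}{\mu^\tau(\sigma_i)\,w(\sigma_{i-1})}\,\mathrm{e}^{-3/n^2},
\]
the last genuine factor equals $1$ because $\mu^\tau(\sigma)\propto w(\sigma)$ on feasible configurations (both $\sigma_{i-1},\sigma_i$ are feasible with respect to $\mu^\tau$ by~\eqref{eq:JVV-path-invariant-0}, so the conditioning normalization $Z(\tau)$ cancels), and Claim~\ref{claim-JVV-approx} gives $\mathrm{e}^{-1/n^2}\le \alg{\mu}^\tau(\sigma)/\mu^\tau(\sigma)\le\mathrm{e}^{1/n^2}$ for every $\sigma$. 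Hence $\mathrm{e}^{-2/n^2}\cdot\mathrm{e}^{-3/n^2}\le q_{v_i}\le \mathrm{e}^{2/n^2}\cdot\mathrm{e}^{-3/n^2}$, i.e. $\mathrm{e}^{-5/n^2}\le q_{v_i}\le \mathrm{e}^{-1/n^2}\le 1$, which is exactly the claimed bound; the deliberate padding factor $\mathrm{e}^{-3/n^2}$ in the definition is precisely what guarantees $q_{v_i}\le 1$ so that it can legitimately be used as a rejection probability.

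I expect the main obstacle to be the locality argument for the ratio $\alg{\mu}^\tau(\sigma_{i-1})/\alg{\mu}^\tau(\sigma_i)$ rather than the inequality, which is essentially bookkeeping once Claim~\ref{claim-JVV-approx} is in hand. The subtlety is that $\alg{\mu}^\tau(\sigma)$ is a global product over all $n$ scanning steps, and one must show that all but $O(1)\cdot|B_{2t}(v_i)|$ of these factors cancel pairwise between the numerator and denominator — which requires knowing that changing $\sigma_{i-1}(B_t(v_i))$ to $\sigma_i(B_t(v_i))$ does not perturb the conditioning fed into $\mathcal{A}$ at any node outside $B_{2t}(v_i)$, and that the surviving non-cancelling factors are themselves locally computable. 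Invariants~\eqref{eq:JVV-path-invariant-1} and~\eqref{eq:JVV-path-invariant-2} are exactly tailored for this, so the work is to phrase the telescoping cleanly and track the radii additively.
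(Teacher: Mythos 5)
Your proposal is correct and follows essentially the same argument as the paper: you telescope the chain-rule expansion of $\alg{\mu}^\tau(\sigma_{i-1})/\alg{\mu}^\tau(\sigma_i)$ and use invariant~\eqref{eq:JVV-path-invariant-2} together with the radius-$t$ locality of $\mathcal{A}$ to show only factors at $v_j\in B_{2t}(v_i)$ survive (each computable within a further radius $t$, and the weight ratio within $t+\ell$), and you bound $q_{v_i}$ by inserting $\mu^\tau(\sigma_{i-1})/\mu^\tau(\sigma_i)=w(\sigma_{i-1})/w(\sigma_i)$ and applying Claim~\ref{claim-JVV-approx}. This matches the paper's proof in both structure and all key steps.
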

At last, $v_i$ samples a random $F'_{v_i}\in\{0,1\}$ such that $F'_{v_i}=1$ with probability $q_{v_i}$.
\end{itemize}
Finally, each node $v_i$ returns a pair $(Y_{v_i},F'_{v_i})$ where $Y=(Y_{v_i})_{1\le i\le n}$ is the random configuration sampled in the second pass and the algorithm fails locally at $v_i$ if $F'_{v_i}=1$.

\ifabs{}{
\subsubsection{Proofs of the three claims}
\begin{proof}[Proof of Claim~\ref{claim-JVV-approx}]
%Recall $\tau$ is a feasible partially specified configuration on subset $\Lambda$. We define a sequence of subsets as $\Lambda_0 = \Lambda$ and $\Lambda_i = \Lambda_{i-1} \cup \{v_i\}$ $(1\leq  i\leq n)$.
Let $Y^i$ denote the partially specified random configuration $Y$ constructed in the $i$-th step in the second pass of the the \JVV{} algorithm.
For any $\sigma \in \Sigma^V$, let $\sigma^{i}$ denote the $Y^i$ when $Y=\sigma$, where $\sigma^0=\tau$ and $\sigma^n=Y=\sigma$.
%we define a sequence of partially specified configurations $\sigma^0,\sigma^1,\ldots,\sigma^n$ as follows. Let $\sigma^0 = \tau$. For for each $1\leq i\leq n$, $\sigma^i$ is defined in two steps: firstly let $\sigma^i = \sigma^{i-1}$, then let $\sigma^i_{v_i} \gets \sigma_{v_i}$. Note that $\sigma^n =\sigma$. By the second pass of \JVV{}, we have
Clearly, for the distribution $\alg{\mu}^\tau$ of $Y$, 
\begin{align}
\alg{\mu}^\tau(\sigma) = \prod_{i = 1}^n \alg{\mu}^{\sigma^{i-1}}_{v_i}(\sigma_{v_i}).\label{eq-chain-rule-alg}
\end{align}
On the other hand, apply the chain rule to the Gibbs distribution $\mu^\tau$. We have
\begin{align*}
\mu^\tau(\sigma) = \prod_{i = 1}^n \mu^{\sigma^{i-1}}_{v_i}(\sigma_{v_i}).	
\end{align*}
Since each marginal probability $\alg{\mu}^{\sigma^{i-1}}_{v_i}(\sigma_{v_i})$ is computed by an approximate inference algorithm $\mathcal{A}$ with multiplicative error $1/n^3$, it holds that
\begin{align*}
\mathrm{e}^{-1/n^3} \leq	\frac{\alg{\mu}^{\sigma^{i-1}}_{v_i}(\sigma_{v_i})}{\mu^{\sigma^{i-1}}_{v_i}(\sigma_{v_i})} \leq \mathrm{e}^{1/n^3}.
\end{align*}
The claim follows.
\end{proof}

\begin{proof}[Proof of Claim~\ref{claim-JVV-path}]
Assume that $\sigma_{i-1}\in\Sigma^V$ is feasible with respect to $\mu^\tau$ and $\sigma_{i-1}(v_j)=Y(v_j)$ for all $1\le j\le i-1$. 
We show that there exists a $\sigma_{i}\in\Sigma^V$ satisfying~\eqref{eq:JVV-path-invariant-0}, \eqref{eq:JVV-path-invariant-1}, and \eqref{eq:JVV-path-invariant-2}. 
Define set of nodes
\begin{align*}
\Gamma = \{v_j \in V \mid v_j \notin B_{t}(v) \text{ or } j \leq i - 1\}.	
\end{align*}
Let $\tau_1 = \sigma_{i-1}(\Gamma)$ and $\tau_2 = Y(\Gamma)$ be two configurations on subset $\Gamma$. 
By assumption and Claim~\ref{claim-JVV-approx}, $\sigma_{i-1}$ and $Y$ are feasible, so $\tau_1$ and $\tau_2$ are also feasible.

The marginal distributions returned by $\mathcal{A}$ at node $v_i$ on the two instances $(G, \vec{x}, \tau_1)$ and $(G,\vec{x}, \tau_2)$ are identical, because $\tau_1$ and $\tau_2$ agree with each other over $B_t(v_i)$. We denote this marginal distribution as $\alg{\mu}^{\tau_1}_{v_i}=\alg{\mu}^{\tau_2}_{v_i}$, which is an approximation of the marginal distributions $\mu^{\tau_1}_{v_i}$, $\mu^{\tau_2}_{v_i}$ with multiplicative error ${1/n^3}$. It holds that
\begin{align*}
\mathrm{e}^{1/n^3}\cdot\mu^{\tau_1}_{v_i}(Y(v_i)) \ge 
\alg{\mu}^{\tau_1}_{v_i}(Y(v_i)) 
&\geq \mathrm{e}^{-1/n^3}\cdot\mu^{\tau_2}_{v_i}(Y(v_i)) > 0.
\end{align*}
where the last inequality is due to that $Y$ is feasible.

This shows that $\mu^{\tau_1}_{v_i}(Y(v_i))>0$, which means there exists a feasible configuration $\sigma\in\Sigma^V$ such that $\sigma(\Gamma)=\tau_1 = \sigma_{i-1}(\Gamma)$ and $\sigma(v_i) = Y({v_i})$. Such $\sigma$ is the $\sigma_i$ we want. It can be easily verified that it satisfies the invariants~\eqref{eq:JVV-path-invariant-0}, \eqref{eq:JVV-path-invariant-1} and~\eqref{eq:JVV-path-invariant-2}.
\end{proof}

\begin{proof}[Proof of Claim~\ref{claim-qv}]
Recall that $q_{v_i}$ is defined as
\begin{align*}
q_{v_i} = \frac{\alg{\mu}^{\tau}(\sigma_{i-1})w(\sigma_i)}{\alg{\mu}^{\tau}(\sigma_i)w(\sigma_{i-1})}	\mathrm{e}^{-3/n^2}.
\end{align*}
Due to Claim~\ref{claim-JVV-path}, every $\sigma_i$ is feasible with respect to $\mu^\tau$, so $w(\sigma_i) > 0$ and moreover by Claim~\ref{claim-JVV-approx}, we have $\alg{\mu}^\tau(\sigma_i) > 0$. The ratio $q_{v_i}$ is well-defined.
%We will give an algorithm $\mathcal{B}$ to compute $q_{v_i}$ by information up to distance $3t+\ell$ from $v_i$.

Consider the configurations $\sigma_0, \sigma_1, \ldots,\sigma_n\in\Sigma^V$ constructed in the third pass of the \JVV{} algorithm. For each $\sigma_i$, we define a sequence of configurations $\sigma_i^0,\sigma_i^1,\ldots,\sigma_i^n$, where $\sigma_i^0=\tau$ and $\sigma_i^n=\sigma_i$. 
Let $\sigma_i^j$ denote the $Y^j$ when $Y=\sigma_i$, where $Y^j$ denotes the partially specified random configuration $Y$ constructed in the $j$-th step in the second pass of the the \JVV{} algorithm. Due to~\eqref{eq-chain-rule-alg}, we have
\begin{align*}
 \frac{\alg{\mu}^\tau(\sigma_{i-1})}{\alg{\mu}^\tau(\sigma_i)} = \prod_{j=1}^n\frac{\alg{\mu}^{\sigma_{i-1}^{j-1}}_{v_j}(\sigma_{i-1}(v_j))}{\alg{\mu}^{\sigma_{i}^{j-1}}_{v_j}(\sigma_{i}(v_j))}.
\end{align*}
Note that the two configurations $\sigma_i$ and $\sigma_{i-1}$ differ only on the $t$-ball $B_t(v_i)$. Therefore, for all $\sigma_i$ and $\sigma_{i-1}$ agree with each other over $B_{t}(v_k)$ for any $v_k \notin B_{2t}(v_i)$.
The two instances $(G, \vec{x}, \sigma_{i-1}^{k-1})$ and $(G, \vec{x}, \sigma_{i}^{k-1})$  are indistinguishable to the $t$-local approximate inference algorithm $\mathcal{A}$, thus
\[
\alg{\mu}^{\sigma_{i-1}^{k-1}}_{v_k}(\sigma_{i-1}(v_k)) = \alg{\mu}^{\sigma_{i}^{k-1}}_{v_k}(\sigma_{i}(v_k)),
\]
which means
\begin{align}
\label{eq-local-mu}
\frac{\alg{\mu}^\tau(\sigma_{i-1})}{\alg{\mu}^\tau(\sigma_i)} = 	\prod_{v_j \in B_{2t}(v_i)}\frac{\alg{\mu}^{\sigma_{i-1}^{j-1}}_{v_j}(\sigma_{i-1}(v_j))}{\alg{\mu}^{\sigma_{i}^{j-1}}_{v_j}(\sigma_{i}(v_j))},
\end{align}
where each marginal probability $\alg{\mu}_{v_j}^{\cdot}(\cdot)$ can be computed by the local algorithm $\mathcal{A}$ at node $v_j$ within radius $t$.

On the other hand, since the two configurations $\sigma_i$ and $\sigma_{i-1}$ differ only on nodes in $B_t(v_i)$, by the definition of the weight $w(\cdot)$ in~\eqref{eq:configuration-weight}, we have
\begin{align}
\label{eq-local-w}
\frac{w(\sigma_i)}{w(\sigma_{i-1})} = \prod_{(S, f) \in \mathcal{F} \atop S \subseteq B_{t+\ell}(v_i)}\frac{f(\sigma_i(S))}{f(\sigma_{i-1}(S))}.
\end{align}

Equations~\eqref{eq-local-mu} and~\eqref{eq-local-w} imply that the quantity $q_{v_i}$ can be computed at node $v_i$ by gathering all information up to distance $3t+\ell$.

For the Gibbs distribution $\mu$, it always holds that
$\frac{\mu^\tau(\sigma_{i-1})w(\sigma_i)}{\mu^\tau(\sigma_i)w(\sigma_{i-1})}= 1$ since the measure $\mu$ is defined proportional to the weights $w(\cdot)$.
Due to Claim~\ref{claim-JVV-approx}, we have
\begin{align*}
	\mathrm{e}^{-2/n^2} \leq \frac{\alg{\mu}^\tau(\sigma_{i-1})w(\sigma_i)}{\alg{\mu}^\tau(\sigma_i)w(\sigma_{i-1})} \leq \mathrm{e}^{2/n^2},
\end{align*}
which implies $\mathrm{e}^{-5/n^2} \leq q_{v_i}(\sigma) \leq \mathrm{e}^{-1/n^2} \leq 1$. 
\end{proof}

\subsubsection{Proof of Proposition~\ref{lemma-multi-infer-to-sample}}}
The correctness of the \JVV{} algorithm is guaranteed by the following lemma.
\begin{lemma}\label{lemma-local-rejection-sampling}
On any ordering of nodes,
the \JVV{} algorithm fails with probability at most $\sum_{i=1}^n\mathbb{E}[F'_{v_i}]=1-O(1/n)$ and conditioning on that $F'_{v_i}=0$ for all $v_i$, the $Y\in\Sigma^V$ randomly sampled by the algorithm follows precisely the target distribution $\mu^\tau$.
\end{lemma}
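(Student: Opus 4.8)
The plan is to read the three passes of the \JVV{} algorithm as a distributed rejection sampler and to verify, purely at the level of distributions, the two assertions of the lemma: that conditioning on no local failure turns the output into exactly $\mu^\tau$, and that the probability of some local failure is $O(1/n)$. Locality, the well-definedness of the chain $\sigma_0,\dots,\sigma_n$, the feasibility of each $\sigma_i$, and the range of the $q_{v_i}$ have already been isolated in Claims~\ref{claim-JVV-approx}, \ref{claim-JVV-path} and~\ref{claim-qv}, so the only remaining work is a telescoping identity and a one-line estimate.

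Fix the adversarial ordering $\pi=v_1,\dots,v_n$. Then the ground state $\sigma_0$ (first pass) is determined, and for each value $\sigma$ of the random configuration $Y$ returned in the second pass, the chain $\sigma_0,\sigma_1,\dots,\sigma_n=\sigma$ and the weights $q_{v_1}(\sigma),\dots,q_{v_n}(\sigma)$ constructed in the third pass are determined functions of $\sigma$; the only randomness left after the second pass is, for each $i$, an independent coin that makes $v_i$ accept with probability $q_{v_i}(Y)\in(0,1]$ (Claim~\ref{claim-qv}), with $F'_{v_i}$ the indicator that this coin rejects, so that $\mathbb{E}[F'_{v_i}]=\mathbb{E}_Y[1-q_{v_i}(Y)]$. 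Thus for any $\sigma$ with $\mu^\tau(\sigma)>0$ — by Claim~\ref{claim-JVV-approx} exactly the configurations with $\alg{\mu}^\tau(\sigma)>0$, so the others contribute nothing —
\[
\Pr[\,Y=\sigma\text{ and no node rejects}\,]=\alg{\mu}^\tau(\sigma)\prod_{i=1}^n q_{v_i}(\sigma).
\]
Substituting the definition~\eqref{eq-definition-qv} of $q_{v_i}$, the factors $\alg{\mu}^\tau(\sigma_{i-1})/\alg{\mu}^\tau(\sigma_i)$ telescope to $\alg{\mu}^\tau(\sigma_0)/\alg{\mu}^\tau(\sigma)$, the factors $w(\sigma_i)/w(\sigma_{i-1})$ telescope to $w(\sigma)/w(\sigma_0)$, the $n$ copies of $\mathrm{e}^{-3/n^2}$ multiply to $\mathrm{e}^{-3/n}$, and $\alg{\mu}^\tau(\sigma)$ cancels; writing $w(\sigma)=Z(\tau)\,\mu^\tau(\sigma)$ with $Z(\tau)\triangleq\sum_{\sigma':\sigma'_\Lambda=\tau}w(\sigma')$ independent of $\sigma$, the right-hand side equals $c\cdot\mu^\tau(\sigma)$ for $c=\mathrm{e}^{-3/n}\,\alg{\mu}^\tau(\sigma_0)\,Z(\tau)/w(\sigma_0)$, a constant depending only on $\sigma_0$. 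Here I use Claim~\ref{claim-JVV-path} for the feasibility of every $\sigma_i$ (so the ratios are meaningful and no $0/0$ arises) and the third-pass construction for $\sigma_n=Y$. Summing over $\sigma$ gives $\Pr[\text{no node rejects}]=c$, hence $\Pr[\,Y=\sigma\mid\text{no node rejects}\,]=\mu^\tau(\sigma)$, which is the exactness part of the lemma, and it holds for every ordering (only the constant $c$ depends on $\sigma_0$).

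For the quantitative part I would simplify $c$ using $w(\sigma_0)/Z(\tau)=\mu^\tau(\sigma_0)$ to $c=\mathrm{e}^{-3/n}\,\alg{\mu}^\tau(\sigma_0)/\mu^\tau(\sigma_0)$, and apply the multiplicative bound of Claim~\ref{claim-JVV-approx} to get $c\in[\mathrm{e}^{-3/n-1/n^2},\mathrm{e}^{-3/n+1/n^2}]$, so $c=1-O(1/n)$ and the chance that some node rejects is $1-c=O(1/n)$. To phrase this through the $F'_{v_i}$ as in the statement, note $\mathbb{E}[F'_{v_i}]=\mathbb{E}_Y[1-q_{v_i}(Y)]\le 1-\mathrm{e}^{-5/n^2}\le 5/n^2$ by Claim~\ref{claim-qv}, so by a union bound the failure probability is at most $\sum_{i=1}^n\mathbb{E}[F'_{v_i}]\le 5/n=O(1/n)$.

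The one place I would take care to write out is the bookkeeping that legitimizes the telescoping: that every node in the third pass works with the same ground state $\sigma_0$ and the same chain; that the chain terminates at the actually-sampled $Y$; that $q_{v_i}(Y)$ depends on $Y$ alone, so that, given $Y$, the acceptance events are mutually independent; and that all probabilities involved stay strictly positive on the relevant support. Granting Claims~\ref{claim-JVV-approx}--\ref{claim-qv} (which already cover feasibility, the range of $q_{v_i}$, locality of computing $\sigma_0$, the chain and $q_{v_i}$ within $O(t)$ radius, and the multiplicative closeness of $\alg{\mu}^\tau$ to $\mu^\tau$), the identity is a single telescoping step and the failure bound is immediate, so I do not expect a genuine analytic obstacle here.
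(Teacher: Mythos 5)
Your proposal is correct and follows essentially the same route as the paper's proof: you substitute the definition~\eqref{eq-definition-qv} of $q_{v_i}$ into $\Pr[Y=\sigma\wedge\mathcal{G}]=\alg{\mu}^\tau(\sigma)\prod_i q_{v_i}(\sigma)$, telescope the $\alg{\mu}^\tau$ and $w$ ratios to leave a $\sigma$-independent constant times $w(\sigma)$, and conclude the conditional law is $\mu^\tau$; you then bound the success probability using Claims~\ref{claim-JVV-approx} and~\ref{claim-qv}. (The paper's text has a sign slip — the algorithm and the first line of the proof say ``$F'_{v_i}=1$ with probability $q_{v_i}$'' while the rest of the argument clearly treats $q_{v_i}$ as the per-node \emph{acceptance} probability, and the lemma's ``$=1-O(1/n)$'' should read ``$=O(1/n)$''; you correctly read through these typos.)
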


\ifabs{Proposition~\ref{lemma-multi-infer-to-sample} is then proved by applying Lemma~\ref{lemma-slocal-local}.}{
\begin{proof}

Let $\mathcal{G}$ denote the event that the \JVV{} algorithm succeeds, i.e.~$F'_{v_i}=0$ for all $v_i$. Recall that for any $\sigma\in\Sigma^V$,
\[
\Pr[F'_{v_i}=1\mid Y=\sigma]=q_{v_i}(\sigma),
%=\left.\frac{\alg{\mu}^{\tau}(\sigma_{i-1})w(\sigma_i)}{\alg{\mu}^{\tau}(\sigma_i)w(\sigma_{i-1})}	\mathrm{e}^{-3/n^2}\right|_{\sigma_n=Y=\sigma}.
\] 
where $q_{v_i}(\sigma)=q_{v_i}(Y)\big|_{Y=\sigma}$ is as defined in~\eqref{eq-definition-qv}. Due to Claim~\ref{claim-qv}, 
%Note that it always holds that $\frac{\mu^\tau(\sigma_{i-1})w(\sigma_i)}{\mu^\tau(\sigma_i)w(\sigma_{i-1})}= 1$ for the Gibbs distribution $\mu$ since the measure $\mu$ is defined proportional to the weights $w(\cdot)$. Then due to Claim~\ref{claim-JVV-approx}, 
%\begin{align*}
%	\mathrm{e}^{-2/n^2} \leq \frac{\alg{\mu}^\tau(\sigma_{i-1})w(\sigma_i)}{\alg{\mu}^\tau(\sigma_i)w(\sigma_{i-1})} \leq \mathrm{e}^{2/n^2},
%\end{align*}
%which implies $\mathrm{e}^{-5/n^2} \leq q_{v_i}(\sigma) \leq \mathrm{e}^{-1/n^2} \leq 1$. 
for any $\sigma\in\Sigma^V$,
\begin{align*}
 \Pr[\mathcal{G}\mid Y=\sigma] = \prod_{i=1}^nq_{v_i}(\sigma) \geq \mathrm{e}^{-5/n} = 1 - O\left(\frac{1}{n}\right).
\end{align*}
This proves that the algorithm succeeds with probability $\Pr[\mathcal{G}]=1 - O\left(\frac{1}{n}\right)$.

Next, we show that $\Pr[Y=\sigma\mid \mathcal{G}]=\mu^\tau(\sigma)$ for any $\sigma \in\Sigma^V$, which proves the lemma.

For any $\sigma \in \Sigma^V$ that $\mu^\tau(\sigma) = 0$, by Claim~\ref{claim-JVV-approx}, we have $\alg{\mu}^\tau(\sigma)=\Pr[Y=\sigma]=0$, therefore $\Pr[Y=\sigma\mid \mathcal{G}]=0$.

For any $\sigma \in \Sigma^V$ that $\mu^\tau(\sigma) > 0$, by Claim~\ref{claim-JVV-approx}, we have $\alg{\mu}^\tau(\sigma)=\Pr[Y=\sigma]>0$.
The probability  that the algorithm succeeds and outputs $Y = \sigma$ is given by
\begin{align*}
\Pr[Y =  \sigma \land \mathcal{G}] &= \alg{\mu}^\tau(\sigma)\Pr[\mathcal{G} \mid Y=\sigma]\\
&=\alg{\mu}^\tau(\sigma)\prod_{i=1}^nq_{v_i}(\sigma)\\
&=
\alg{\mu}^\tau(\sigma)\prod_{i=1}^n
\left.\left( \frac{\alg{\mu}^\tau(\sigma_{i-1})w(\sigma_i)}{\alg{\mu}^\tau(\sigma_i)w(\sigma_{i-1})}\mathrm{e}^{-3/n^2} \right)\right|_{\sigma_n=Y=\sigma}\\
&=
\left(  \frac{\alg{\mu}^\tau(\sigma_0)}{w(\sigma_0)}\cdot \mathrm{e}^{-3/n}\right)w(\sigma).
\end{align*}
Note that the factor $\left(  \frac{\alg{\mu}^\tau(\sigma_0)}{w(\sigma_0)}\cdot \mathrm{e}^{-3/n}\right)$ is independent of $\sigma$. Therefore, 
\begin{align*}
\Pr[Y =  \sigma \mid \mathcal{G}] 
&=\frac{\Pr[Y =  \sigma \land \mathcal{G}] }{\sum_{\sigma':\,\mu^{\tau}(\sigma')>0}\Pr[Y =  \sigma' \land \mathcal{G}]  }\\
&=\frac{w(\sigma)}{\sum_{\sigma':\,\mu^{\tau}(\sigma')>0}w(\sigma')}\\
&=\mu^{\tau}(\sigma).
\end{align*}
%Altogether, \eqref{eq-jvv-1} and~\eqref{eq-jvv-2} guarantee that on any ordering of nodes,  the \JVV{} algorithm succeeds with probability $1-O(1/n)$ and conditioning on success (which means all $F_{v_i}=0$), the random configuration $Y$ sampled by the algorithm follows precisely the target distribution $\mu^\tau$. 
\end{proof}

Finally, we are going to prove Proposition~\ref{lemma-multi-infer-to-sample}.
It can be easily verified that the \JVV{} algorithm has locality $O(t)$ where $t=t(n)$ is the time complexity of the \LOCAL{} algorithm $\mathcal{A}$.

We apply Lemma~\ref{lemma-SLOCAL-variation} and Lemma~\ref{lemma-slocal-local} to transform the \JVV{} algorithm defined in the \SLOCAL{} model to a \LOCAL{} algorithm. Due to Lemma~\ref{lemma-slocal-local}, upon successful termination, the \LOCAL{} algorithm preserves the distribution of the output $(Y,F')$ of the \SLOCAL{} algorithm on some ordering of nodes, where $Y=(Y_v)_{v\in V}$ is the randomly sampled configuration and $F'=(F'_v)_{v\in V}$ is the vector of random indicators for local failures generated in the final pass of the \JVV{} algorithm. 
The transformation to the \LOCAL{} model will introduce another random failure $F_v''\in\{0,1\}$ to each node $v$ which is independent of $(Y,F')$, where $F''_v=1$ indicates the failure at node $v$ caused by the network decomposition and $\sum_{v\in V}\mathbb{E}[F_v'']=O(1/n^2)$. 
We combine the two failures and define $F_v=F_v'\vee F_v''$ to indicate the failure of the \LOCAL{} algorithm at node $v$. Clearly, 
\[
\sum_{v\in V}\mathbb{E}[F_v]\le \sum_{v\in V}\mathbb{E}[F_v']+\sum_{v\in V}\mathbb{E}[F_v'']=O\left(\frac{1}{n}\right).
\] 
Furthermore, since $(F_v'')_{v\in V}$ is independent of $(Y,F')$, it still holds that conditioning on that $F_v=0$ for all nodes $v\in V$, the distribution of $Y$ is precisely the target distribution.
}

\section{Approximate Inference and Strong Spatial Mixing}\label{sec:SSM}
In this section, we explore the intrinsic relation between distributed sampling/counting problems and  decays of correlation in joint distributions. The main result in this section holds for locally admissible, local Gibbs distributions. 

An important decay of correlation property for joint distributions is the \concept{strong spatial mixing}. We adopt the definition of strong spatial mixing in~\cite{weitz2006counting} into our context.
\begin{definition}[strong spatial mixing]\label{def:ssm}
Let $\delta_n:\mathbb{N}\to \mathbb{R}_{\ge0}$ be a sequence of non-increasing functions. A class of joint distributions $\Problem{M}=\{\mu_{(G,\vec{x})}\}$ is said to exhibit \emph{strong spatial mixing} with rate $\delta_n(\cdot)$ if for every distribution $\mu_{(G,\vec{x})}\in\Problem{M}$ over $\Sigma^V$, where $G=(V,E)$ and $n=|V|$,
%Let $G=(V,E)$ be a graph on $n=|V|$ vertices and $\delta_n:\mathbb{N}\to \mathbb{R}_{\ge0}$ a non-increasing function. A distribution $\mu$ over all configurations $\Sigma^V$ is said to exhibit \emph{strong spatial mixing} with rate $\delta_n(\cdot)$ if
for every $v\in V$, $\Lambda\subseteq V$, and any two feasible configurations $\sigma,\tau\in\Sigma^{\Lambda}$,
\begin{align}
\DTV{\mu_v^{\sigma}}{\mu_v^{\tau}}\le\delta_{n}(\dist_G(v,D)),\label{eq:ssm}
\end{align}
where $D\subseteq \Lambda$ is the subset on which $\sigma$ and $\tau$ differ.

In particular, the strong spatial mixing is said to be \concept{with exponential decay at rate $\alpha$}, for some $0<\alpha<1$, if the mixing rate $\delta_{\cdot}(\cdot)$ is in the form $\delta_n(t)=\poly(n)\cdot \alpha^t$.
\end{definition}

The strong spatial mixing is intrinsically related to the approximate inference in the \LOCAL{} model. In fact, the strong spatial mixing can be thought as a weaker form of  approximate inference in the \LOCAL{} model, where every node knows the graph $G$ and distribution $\mu$ but not the partially specified feasible configuration $\tau$. Therefore, it is quite natural that the approximate inference always implies strong spatial mixing. Meanwhile, the converse also holds for locally admissible, local Gibbs distributions. 

\begin{theorem}\label{thm:ssm-approx-infer}
%Let $\mu_{(G,\vec{x})}$ be a class of distributions. 
For any class of joint distributions $\Problem{M}=\{\mu_{(G,\vec{x})}\}$, 
if there is a \LOCAL{} algorithm for approximate inference (within arbitrary total variation error $\delta>0$) with time complexity at most $t(n,\delta)$, 
then $\Problem{M}$ exhibits strong spatial mixing with rate 
\ifabs{$\delta_n(t)=2\min\{\delta\mid t(n,\delta)\le t - 1\}$.

}{\[
\delta_n(t)=2\min\{\delta\mid t(n,\delta)\le t - 1\}. 
\]}
%\footnote{When $t = 0$, which means $v \in D$, we assume by convention that $\min\{\delta\mid t(n,\delta)\le t - 1\} = 1$.}.
%If the approximate inference problem for a class of distributions $\mu_{(G,\vec{x})}$ is solvable in $t(n,\delta)$ rounds in the \LOCAL{} model, then the distributions $\mu_{(G,\vec{x})}$ exhibit strong spatial mixing with rate $\delta_n(t)=\min\{\delta\mid t(n,\delta)\le t\}$.
Conversely, for any class of  locally admissible, local Gibbs distributions $\Problem{M}=\{\mu_{(G,\vec{x})}\}$,
if $\Problem{M}$ exhibits strong spatial mixing with rate $\delta_n(t)$, 
%if the distributions $\mu_{(G,\vec{x})}$ are locally admissible, local Gibbs distributions, and exhibit strong spatial mixing with rate $\delta_n(t)$, then 
then there is a \LOCAL{} algorithm for approximate inference (within arbitrary total variation error $\delta>0$) with time complexity 
\ifabs{$t(n,\delta)=\min\{t\mid \delta_n(t)\le\delta\}+O(1)$.}{\[
t(n,\delta)=\min\{t\mid \delta_n(t)\le\delta\}+O(1).
\]}
%the approximate inference problem for $\mu_{(G,\vec{x})}$ is solvable in $t(n,\delta)=\ell+\min\{t\mid \delta_n(t)\le\delta\}$ rounds in the \LOCAL{} model.
\end{theorem}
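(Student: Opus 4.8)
The statement splits into two implications; I would handle them separately, as they have very different flavors.

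\textbf{Forward direction (approximate inference $\Rightarrow$ strong spatial mixing), for arbitrary $\Problem{M}$.} This is an indistinguishability argument. By Proposition~\ref{prop:deterministic-approx-infer} I may assume the approximate inference algorithm $\mathcal{A}$ is deterministic and never fails. Fix $v$, a subset $\Lambda$, two feasible configurations $\sigma,\tau\in\Sigma^\Lambda$ differing on $D\subseteq\Lambda$, and put $r=\dist_G(v,D)$. For any $\delta$ with $t(n,\delta)\le r-1$, running $\mathcal{A}$ with error parameter $\delta$ at node $v$ inspects only $\Ball_{r-1}(v)$; since $D\cap\Ball_{r-1}(v)=\varnothing$, the local views of $v$ on the instances $(G,\vec{x},\sigma)$ and $(G,\vec{x},\tau)$ are identical, so $\mathcal{A}$ returns one and the same $\alg{\mu}_v$ in both cases. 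The triangle inequality then gives $\DTV{\mu_v^\sigma}{\mu_v^\tau}\le\DTV{\mu_v^\sigma}{\alg{\mu}_v}+\DTV{\alg{\mu}_v}{\mu_v^\tau}\le 2\delta$, and taking the infimum over admissible $\delta$ yields $\DTV{\mu_v^\sigma}{\mu_v^\tau}\le 2\min\{\delta\mid t(n,\delta)\le r-1\}$, which is exactly the claimed rate evaluated at $\dist_G(v,D)$.

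\textbf{Converse direction, for locally admissible local Gibbs distributions.} Let $\ell=O(1)$ bound the $G$-diameter of every factor scope. Given an instance $(G,\vec{x},\tau)$ with $\tau\in\Sigma^\Lambda$ and a target total variation error $\delta$, set $r=\min\{t\mid\delta_n(t)\le\delta\}$; the algorithm will have locality $r+O(\ell)=r+O(1)$ (the case $v\in\Lambda$ is trivial, as $v$ knows $\tau_v$). The plan has three steps. First, using that factor scopes have diameter $\le\ell$, the shell $C=\{u\mid r<\dist_G(v,u)\le r+\ell\}$ is a vertex separator of $\{v\}$ from the exterior in the factor hypergraph, since no scope can meet both $\Ball_r(v)$ and $V\setminus\Ball_{r+\ell}(v)$. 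Scan the vertices of $C\setminus\Lambda$ in increasing ID order and assign to each, when processed, any value keeping the configuration built so far \emph{locally feasible}; by local admissibility a locally feasible partial configuration is feasible, hence extendable by one vertex, so the scan never gets stuck, and the resulting $\tau'\in\Sigma^{\Lambda\cup C}$ (agreeing with $\tau$ on $\Lambda$) is locally feasible, hence feasible. This construction, restricted to what is needed, depends only on data within $\Ball_{r+O(\ell)}(v)$.

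Second, since $C\subseteq\Lambda\cup C$ separates $v$ from the exterior, the spatial Markov property (Proposition~\ref{prop:cond-ind}) makes $\mu_v^{\tau'}$ a function of $\tau'$ restricted to $\Ball_{r+\ell}(v)$ together with the factors whose scope lies inside $\Ball_{r+\ell}(v)$ — all in $v$'s view — so node $v$ computes $\mu_v^{\tau'}$ exactly and outputs $\alg{\mu}_v\triangleq\mu_v^{\tau'}$. Third, write $S=C\setminus\Lambda$ and condition on the value of $Y_S$: $\mu_v^{\tau}=\sum_{\rho}\mu_S^{\tau}(\rho)\,\mu_v^{\tau\wedge\rho}$, the sum over feasible $\rho\in\Sigma^S$. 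Each $\tau\wedge\rho$ and our chosen $\tau'=\tau\wedge\rho^\ast$ are feasible configurations on the common domain $\Lambda\cup C$ differing only inside $S$, i.e.~at $G$-distance $>r$ from $v$, so strong spatial mixing and monotonicity of $\delta_n$ give $\DTV{\mu_v^{\tau\wedge\rho}}{\mu_v^{\tau'}}\le\delta_n(r)\le\delta$; convexity of total variation distance then yields $\DTV{\alg{\mu}_v}{\mu_v^\tau}\le\sum_\rho\mu_S^\tau(\rho)\,\delta=\delta$, as required.

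\textbf{Main obstacle.} The forward direction is routine. The substance is in the converse, and it is twofold. One must produce a \emph{locally computable} feasible boundary configuration; this is precisely what local admissibility buys, converting the global obstacle of "extendability to a feasible solution" into the purely local check of violating no onsite constraint. And one must show the locally computed conditional marginal $\mu_v^{\tau'}$ is close to the true $\mu_v^\tau$, which combines the spatial Markov property (to localize the exact computation, using bounded scope diameter) with strong spatial mixing and the convexity of $d_{\mathrm{TV}}$ (to average away the arbitrariness of the boundary choice). The only place the \emph{local} structure of the factors — as opposed to merely finite-range correlations — is genuinely needed is in the radius bookkeeping, to get locality $r+O(1)$ rather than something like $2r$.
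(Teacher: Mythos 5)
Your proof is correct and follows essentially the same route as the paper's: the forward direction by the identical indistinguishability-plus-triangle-inequality argument, and the converse by locally constructing a feasible boundary configuration on the shell $B_{t+\ell}(v)\setminus B_t(v)$ (local admissibility makes feasibility checkable onsite), computing $\mu_v^{\tau'}$ exactly via the spatial Markov property, and bounding $\DTV{\mu_v^{\tau'}}{\mu_v^\tau}$ by decomposing $\mu_v^\tau$ as a mixture over shell configurations and applying strong spatial mixing termwise. The only cosmetic differences are that you build $\tau'$ by a greedy one-vertex-at-a-time scan whereas the paper enumerates $\Sigma^\Gamma$, and you phrase the final bound as convexity of total variation distance where the paper phrases it as a coupling; these are the same argument.
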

\ifabs{}{
\begin{proof}
Let $\mu = \mu_{(G, \vec{x})} \in \Problem{M}$ be a joint distribution over $\Sigma^V$, where $G = (V, E)$.   
Let $\sigma, \tau\in\Sigma^\Lambda$ be two feasible configurations on subset $\Lambda\subseteq V$ that differ over $D \subseteq \Lambda$. 
Fix any vertex $v \not\in D$. Suppose that $\dist_G(v, D) = t$. 

Let $\mathcal{A}$ denote the \LOCAL{} algorithm for approximate inference with time complexity $t(n,\delta)$. By Proposition~\ref{prop:deterministic-approx-infer}, we can assume without loss of generality that $\mathcal{A}$ is deterministic. 

For any $\delta>0$, if $t(n,\delta)\le t-1$, then algorithm $\mathcal{A}$ (given the total variation error $\delta$) will return the same marginal distribution $\alg{\mu}_v$ at node $v$ on the two instances $(G,\vec{x}, \sigma)$ and $(G,\vec{x}, \tau)$, because the two instances are indistinguishable for the algorithm at node $v$, and it is guaranteed that $\DTV{\alg{\mu}_v}{\mu^\sigma_v}\le \delta$ and $\DTV{\alg{\mu}_v}{\mu^\tau_v}\le\delta$. Since this holds for any $\delta>0$ such that $t(n,\delta)\le t-1$, we have
\begin{align*}
\DTV{\alg{\mu}_v}{\mu^\sigma_v} &\leq \min\{\delta \mid t(n, \delta) \leq t-1 \},\\
\DTV{\alg{\mu}_v}{\mu^\tau_v} &\leq \min\{\delta \mid t(n, \delta) \leq t-1 \},		
\end{align*}
which  implies 
$\DTV{\mu^\sigma_v}{\mu^\tau_v} \leq  2\min\{\delta \mid t(n, \delta) \leq t-1 \}$
by triangle inequality.
Therefore, $\Problem{M}$ exhibits strong spatial mixing with rate $\delta_n(t)=2\min\{\delta\mid t(n,\delta)\le t - 1\}$. 

\paragraph{}
Conversely, $\Problem{M}$ is a class of locally admissible, local Gibbs distributions and $\Problem{M}$  exhibits strong spatial mixing with rate $\delta_n(t)$. 
We construct a \LOCAL{} algorithm $\mathcal{A}$ for approximate inference with arbitrary total variation error $\delta>0$.

Let $(G, \vec{x}, \tau)$ be an instance, where $G=(V,E)$. The joint distribution $\mu=\mu_{(G, \vec{x})} \in \Problem{M}$ is a Gibbs distribution specified by $(G,\Sigma,\mathcal{F})$, where $q=|\Sigma|$.
Since the Gibbs distribution is local, we assume that there is an $\ell=O(1)$ such that $\forall (f, S) \in \mathcal{F}: \max_{u, v \in S}\dist_G(u, v) \leq \ell$. 
The $\tau\in\Sigma^\Lambda$ is an arbitrary feasible configuration on an arbitrary subset $\Lambda\subseteq V$ and $\mu^\tau$ is the target distribution.

The algorithm $\mathcal{A}$ is described as follows. For each node $v \in V$, node $v$ gathers all information up to distance $t + 2\ell $, where
\[
t=\min\{t'\mid \delta_n(t')\le\delta\},
\]
and simulates the following procedure locally:
\begin{itemize}
\item
Node $v$ extends $\tau$ to a feasible configuration $\tau'$ on subset $\Lambda \cup \Gamma$, where
\begin{align*}
\Gamma  &= B_{t + \ell}(v) \setminus (B_t(v) \cup \Lambda),
\end{align*}
by enumerating all configurations in $\Sigma^\Gamma$. Such feasible $\tau'$ must exist since $\tau$ is feasible.

Since $\tau$ itself is feasible and the distribution $\mu$ is locally admissible, then $\tau'$ is feasible if and only if
\begin{align}
\label{eq-tau'-feasiable}
\prod_{(f,S)\in\mathcal{F}\atop S\subseteq A }f(\tau'_S) >0,	
\end{align}
where $A = B_{t+2\ell}(v) \cap (\Gamma \cup \Lambda)$. Hence, this condition can be checked by $v$ locally. 

\item
Node $v$ returns the marginal distribution $\mu_v^{\tau'}$. Due to the conditional independence guaranteed by Proposition~\ref{prop:cond-ind}, $\mu^{\tau'}_v$ is fully determined by the information in $\Ball_{t+\ell}(v)$.
Specifically, denoted $B=B_{t+\ell}(v)$, and define the set of configurations 
\[
\mathcal{C} = \{\sigma \in \Sigma^B \mid \forall u \in B \cap (\Gamma \cup \Lambda): \sigma_u =\tau'_u \}.
\]
Then for every $c\in \Sigma$, the marginal probability $\mu^{\tau'}_v(c)$ can be computed as
\[
\mu^{\tau'}_v(c)=\frac{\sum_{\sigma\in \mathcal{C} :\sigma_v=c}w_B(\sigma)}{\sum_{\sigma\in \mathcal{C}}w_B(\sigma)},
\]
where $w_B(\sigma)=\prod_{(f,S)\in\mathcal{F}: S\subseteq B}f(\sigma_S)$.
%This finishes the definition of algorithm $\mathcal{A}$.
\end{itemize}

Let $\mathcal{S}$ be the set of all feasible configurations $\sigma$ on subset $\Lambda \cup \Gamma$ such that $\sigma$ is consistent with $\tau$ over $\Lambda$. Fix any $\sigma \in \mathcal{S}$. Let $D$ denote the subset on which $\sigma$ and $\tau'$ disagree. Note that $\dist_G(v, D) \geq t$. Due to the strong spatial mixing,
\begin{align*}
\DTV{\mu^{\tau'}_v}{\mu^{\sigma}_v} \leq \delta_n(t) \leq \delta.	
\end{align*}
We couple the two distributions $\mu^{\tau'}_v$ and $\mu^{\tau}_v$ as follows: First sample a $\sigma \in \mathcal{S}$ with probability $\mu^\tau_{\Lambda \cup \Gamma}(\sigma)$, then use the optimal coupling between $\mu^{\tau'}_v$ and $\mu^{\sigma}_v$ to sample a pair $(x, y) \in \Sigma \times \Sigma$. It is easy to verify that the marginal distribution of $y$ resulting from this two-step sampling is just $\mu^{\tau}_v$.
By the coupling Lemma
\begin{align*}
\DTV{\mu^{\tau'}_v}{\mu^{\tau}_v} 
&\leq \Pr[x \neq y]\\
&=\sum_{\sigma \in \mathcal{S}}
\mu^{\tau}_{\Lambda \cup \Gamma}(\sigma) 
\cdot\Pr[x \neq y\mid \sigma]\\
&= \sum_{\sigma \in \mathcal{S}} 
\mu^{\tau}_{\Lambda \cup \Gamma}(\sigma)
\cdot\DTV{\mu^{\tau'}_v}{\mu^{\sigma}_v}	\\
&\leq \delta.
\end{align*}
Hence, $\mathcal{A}$ is a \LOCAL{} algorithm for approximate inference with time complexity at most $t(n,\delta)=t+2\ell=\min\{t\mid \delta_n(t)\le\delta\}+O(1)$.
\end{proof}
}
%For locally admissible, local Gibbs  distributions exhibiting strong spatial mixing with exponential decay, we immediately have the following corollary.
%\begin{corollary}
%For  any class of locally admissible, local Gibbs distributions $\Problem{M}=\{\mu_{(G,\vec{x})}\}$, there is a \LOCAL{} algorithm for approximate inference with time complexity $O(\log n+\log\frac{1}{\delta})$ if and only if $\Problem{M}$ exhibits strong spatial mixing with exponential decay at constant rate $<1$.
%\end{corollary}

%\begin{theorem}
%For any $\Problem{M}$ it holds that $\delta_{\Problem{M}}(n,t)\le \min\{\delta\mid t_{\Problem{M}}(n,\delta)\le t\}$.
%Furthermore, if $\Problem{M}$ is $r$-local and locally admissible,  then $t_{\Problem{M}}(n,\delta)\le r+\min\{t\mid \delta_{\Problem{M}}(n,t)\le\delta\}$.
%\end{theorem}

The strong spatial mixing defined in Definition~\ref{def:ssm} measures the decay of correlation in terms of total variation distance. If we replace the total variation distance $\DTV{\cdot}{\cdot}$ in~\eqref{eq:ssm} with the multiplicative error function $\err{\cdot}{\cdot}$ defined in~\eqref{eq:error-function}, we have an even stronger form of strong spatial mixing, namely the one with decay in multiplicative error.  Several well-known strong spatial mixing results for important classes of Gibbs distributions (e.g.~independent sets, matchings, and graph colorings) were actually established in this stronger form~\cite{weitz2006counting, bayati2007simple, gamarnik2007correlation, gamarnik2013strong}. Here we see this is not a coincidence. Combing Theorem~\ref{thm:ssm-approx-infer} with the boosting lemma (Lemma \ref{lemma-local-boosting}), we have the following corollary.

\begin{corollary}\label{corollary-ssm-boosting}
A class of locally admissible, local Gibbs distributions $\Problem{M}$
exhibits strong spatial mixing with exponential decay at rate $\alpha$ in total variation distance, 
if and only if it exhibits strong spatial mixing with exponential decay at rate $\alpha$ in multiplicative error. %with respect to the multiplicative error function $\err{\cdot}{\cdot}$ defined in~\eqref{eq:error-function}.
\end{corollary}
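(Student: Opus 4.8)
The plan is to prove Corollary~\ref{corollary-ssm-boosting} by the standard sandwich argument, using the two directions of Theorem~\ref{thm:ssm-approx-infer} together with the boosting lemma (Lemma~\ref{lemma-local-boosting}) as a black box. The non-trivial direction is showing that SSM in total variation distance with exponential decay at rate $\alpha$ implies SSM in multiplicative error at the \emph{same} rate $\alpha$; the reverse direction is almost immediate, since for two distributions over $\Sigma$ one has $\DTV{\mu}{\nu}\le \frac{1}{2}\sum_{c}|\mu(c)-\nu(c)| \le \frac{1}{2}\sum_c \mu(c)(\mathrm{e}^{\err{\mu}{\nu}}-1)\le \tfrac12(\mathrm{e}^{\err{\mu}{\nu}}-1)$, so a bound of $\poly(n)\,\alpha^t$ on $\err{\mu_v^\sigma}{\mu_v^\tau}$ gives a bound of $\poly(n)\,\alpha^t$ (up to the same exponential base) on $\DTV{\mu_v^\sigma}{\mu_v^\tau}$.

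For the forward direction, first I would invoke the second half of Theorem~\ref{thm:ssm-approx-infer}: since $\Problem{M}$ is a class of locally admissible, local Gibbs distributions exhibiting SSM in total variation with rate $\delta_n(t)=\poly(n)\,\alpha^t$, there is a \LOCAL{} algorithm for approximate inference (in total variation error $\delta$) with time complexity $t(n,\delta)=\min\{t\mid \delta_n(t)\le\delta\}+O(1)$. Solving $\poly(n)\,\alpha^t\le\delta$ gives $t(n,\delta)=O\!\big(\log\frac{1}{\delta}+\log n\big)/\log(1/\alpha)$, i.e.\ logarithmic in $1/\delta$ and $n$. Next I would apply the boosting lemma: this \LOCAL{} inference algorithm can be converted into one achieving multiplicative error $\epsilon$ with time complexity $t(n,\tfrac{\epsilon}{5qn}) = O\!\big(\log\frac{qn}{\epsilon}+\log n\big)/\log(1/\alpha) = O\!\big(\log\frac{n}{\epsilon}\big)/\log(1/\alpha)$, using $q\le\poly(n)$.

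Finally I would close the loop by running essentially the first half of Theorem~\ref{thm:ssm-approx-infer}, but now tracking the \emph{multiplicative} error through the same indistinguishability argument rather than the total-variation one: if $\dist_G(v,D)=t$ and the multiplicative-error inference algorithm has locality $\le t-1$, then on the two instances $(G,\vec x,\sigma)$ and $(G,\vec x,\tau)$ the algorithm returns the \emph{same} marginal $\alg\mu_v$, and since $\err{\alg\mu_v}{\mu_v^\sigma}\le\epsilon$ and $\err{\alg\mu_v}{\mu_v^\tau}\le\epsilon$, the triangle inequality for $\err{\cdot}{\cdot}$ (which holds because $\err{\cdot}{\cdot}$ is an $\ell_\infty$-distance between log-probability vectors) yields $\err{\mu_v^\sigma}{\mu_v^\tau}\le 2\epsilon$. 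Taking the smallest admissible $\epsilon$ for a given target locality $t$: from the complexity bound above, locality $\le t-1$ is achievable once $\epsilon \ge \beta^{t}$ for a suitable $\beta<1$ with $\log(1/\beta)=\Theta(\log(1/\alpha))$ up to the $\poly(n)$ prefactor, so $\err{\mu_v^\sigma}{\mu_v^\tau}\le \poly(n)\cdot\beta^{t}$. The only subtlety is that a priori $\beta$ need not equal $\alpha$ — the black-box composition (SSM $\to$ inference $\to$ boosting $\to$ back to SSM) could in principle degrade the exponential base. The main obstacle is therefore to verify that this base is preserved, i.e.\ that no step in the chain multiplies the exponent by a constant $>1$; this needs a careful accounting of how the locality overheads ($O(1)$ additive terms, the $\log$-scale conversions, the $\tfrac{\epsilon}{5qn}$ rescaling in the boosting lemma) enter the exponent, all of which contribute only additive $O(\log n)$ terms to the locality and hence only a $\poly(n)$ factor to the SSM rate, leaving the base $\alpha$ intact. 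I would spell out exactly this bookkeeping to conclude that the resulting multiplicative-error SSM rate is again of the form $\poly(n)\,\alpha^t$.
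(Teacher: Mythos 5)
Your proposal follows the same plan as the paper -- chain Theorem~\ref{thm:ssm-approx-infer} (SSM $\to$ TV-inference, and back) with Lemma~\ref{lemma-local-boosting} (TV $\to$ multiplicative) and re-run the first direction of Theorem~\ref{thm:ssm-approx-infer} through the $\err{\cdot}{\cdot}$ triangle inequality -- and you rightly flag that preserving the base $\alpha$ is the only thing that could go wrong. However, your concluding claim that all the overheads in the chain ``contribute only additive $O(\log n)$ terms to the locality'' is not correct, and this is exactly where the argument breaks.

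The boosting lemma is stated with time complexity $O\big(t(n,\tfrac{\epsilon}{5qn})\big)$, and you silently dropped the $O(\cdot)$: the proof of Lemma~\ref{lemma-local-boosting} constructs $\mathcal{A}^\times_\epsilon$ with locality $2t + \ell$, where $t = t(n, \tfrac{\epsilon}{5qn})$. This is a \emph{multiplicative} factor of $2$ on the locality, not an additive $O(\log n)$. If you trace it through: starting from SSM at rate $\poly(n)\alpha^t$, the TV-inference locality from Theorem~\ref{thm:ssm-approx-infer} is $t(n,\delta) = \log_{1/\alpha}(\poly(n)/\delta) + O(1)$; the boosting lemma then gives a multiplicative-error algorithm of locality roughly $2\log_{1/\alpha}(\poly(n)/\epsilon) + O(\log n)$; closing the loop with the indistinguishability argument requires this locality to be at most $T-1$, which forces $\epsilon \ge \poly(n)\,\alpha^{T/2} = \poly(n)\,(\sqrt\alpha)^T$. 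The black-box composition therefore yields SSM in multiplicative error at rate $\sqrt\alpha$, not $\alpha$, and the $\poly(n)$ prefactor cannot absorb a change of exponential base.

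To actually get the same base $\alpha$ you must open the boosting lemma rather than use it as a black box. The factor of $2$ arises because the proof conflates two roles of the same parameter $t$: (i) the radius of the pinned layer $\Gamma$, which is what controls the chain-rule factors $\mu_{v_i}^{\tau_{i-1}^c}(c_i)/\mu_{v_i}^{\tau_{i-1}}(c_i)$ and hence the multiplicative error; and (ii) the locality of the inner TV-inference call at each $v_i \in \Gamma$, which only needs to be accurate enough (say, TV error $\le \tfrac{1}{2q}$) to guarantee the greedy pin has marginal bounded away from $0$. If you decouple these -- take pinning radius $t_1 \approx T$ but inner-call locality $t_2 = t(n, \tfrac{1}{2q}) = O(\log n / \log\tfrac1\alpha)$, and bound $\DTV{\mu_{v_i}^{\tau_{i-1}^c}}{\mu_{v_i}^{\tau_{i-1}}} \le \poly(n)\,\alpha^{t_1}$ directly from SSM rather than through the inner algorithm -- the total locality becomes $t_1 + t_2 + O(1) = t_1 + O(\log n / \log\tfrac1\alpha)$, i.e.\ \emph{genuinely} additive overhead, and the unwinding gives $\err{\mu_v^\sigma}{\mu_v^\tau} \le \poly(n)\,\alpha^T$ as claimed. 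Your write-up should replace the unverified ``additive-only'' assertion with this refinement.
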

Interestingly, the corollary gives a result in probability theory proved by local computation.

Combining Theorem~\ref{thm:ssm-approx-infer} with the distributed JVV sampler (Theorem~\ref{thm:approx-infer-exact-sample}), we have the followings.

\begin{corollary}\label{coro:ssm-sampling}
For any class of locally admissible, local Gibbs distributions $\Problem{M}=\{\mu_{(G,\vec{x})}\}$, 
if $\Problem{M}$ exhibits strong spatial mixing with exponential decay at rate $\alpha$ for some $\alpha<1$, 
then there is a \LOCAL{} algorithm for exact sampling with time complexity $O(\frac{1}{1-\alpha}\log^3n)$.
\end{corollary}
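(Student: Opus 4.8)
The plan is to obtain Corollary~\ref{coro:ssm-sampling} purely by composing Theorem~\ref{thm:ssm-approx-infer} with the distributed JVV sampler of Theorem~\ref{thm:approx-infer-exact-sample}. First I would apply the converse (second) direction of Theorem~\ref{thm:ssm-approx-infer}: since $\Problem{M}$ is a class of locally admissible, local Gibbs distributions exhibiting strong spatial mixing with exponential decay at rate $\alpha$, its mixing rate has the form $\delta_n(t)=\poly(n)\cdot\alpha^t$, and hence there is a \LOCAL{} algorithm for approximate inference within arbitrary total variation error $\delta>0$ with time complexity $t(n,\delta)=\min\{t\mid \delta_n(t)\le\delta\}+O(1)$.

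Next I would instantiate this with the error tolerance demanded by the distributed JVV sampler, namely $\delta=\frac{1}{5qn^4}$ with $q=|\Sigma|\le\poly(n)$. Writing $\delta_n(t)=n^{c_1}\alpha^t$ and $q\le n^{c_2}$ for constants $c_1,c_2$, the requirement $\delta_n(t)\le\frac{1}{5qn^4}$ is implied by $\alpha^{t}\le\frac{1}{5\,n^{c_1+c_2+4}}$, i.e.\ by $t\ln(1/\alpha)\ge(c_1+c_2+4)\ln n+\ln 5$. Using the elementary inequality $\ln(1/\alpha)=-\ln\bigl(1-(1-\alpha)\bigr)\ge 1-\alpha$, this is in turn implied by taking $t=O\!\left(\frac{\log n}{1-\alpha}\right)$; the additive $O(1)$ coming from the constant locality $\ell$ of the Gibbs distribution is absorbed. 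Thus approximate inference within total variation error $\frac{1}{5qn^4}$ has a \LOCAL{} algorithm of time complexity $t(n)=O\!\left(\frac{\log n}{1-\alpha}\right)$. Feeding this into Theorem~\ref{thm:approx-infer-exact-sample} (which applies since a locally admissible, local Gibbs distribution is in particular a local Gibbs distribution) yields a \LOCAL{} algorithm for exact sampling with time complexity $O(t(n)\log^2 n)=O\!\left(\frac{\log^3 n}{1-\alpha}\right)$, which is exactly the claimed bound.

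The argument is essentially a two-line composition, so there is no substantive obstacle; the only point needing care is the bookkeeping of the polynomial factors. Specifically, I would check that $\log\frac{1}{\delta}=O(\log n)$ remains valid even when $q$ is as large as $\poly(n)$, so that the $\poly(n)$ prefactor in the decay rate together with the $\frac{1}{5qn^4}$ error only contributes an $O(\log n)$ term, and that $\frac{1}{\ln(1/\alpha)}=O\!\left(\frac{1}{1-\alpha}\right)$, so that the decay rate $\alpha$ enters the final round complexity precisely through the factor $\frac{1}{1-\alpha}$ (this is the source of the $\frac{1}{1-\alpha}$ in the statement, and is where one must avoid the crude bound $\frac{1}{\ln(1/\alpha)}=O(1)$, which would be false as $\alpha\to 1$).
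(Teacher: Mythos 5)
Your proof is correct and matches the paper's approach exactly: the paper also obtains Corollary~\ref{coro:ssm-sampling} by composing the converse direction of Theorem~\ref{thm:ssm-approx-infer} with the distributed JVV sampler of Theorem~\ref{thm:approx-infer-exact-sample}. Your bookkeeping of the $\poly(n)$ prefactor, the $q\le\poly(n)$ alphabet size, and the bound $\ln(1/\alpha)\ge 1-\alpha$ giving the $\frac{1}{1-\alpha}$ factor is all sound.
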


Combining with the state-of-the-arts for strong spatial mixing in~\cite{bayati2007simple, weitz2006counting, gamarnik2013strong, li2013correlation, song2016counting}, 
the corollary gives us the following \LOCAL{} algorithm for exact sampling:
%an $O(\sqrt{\Delta}\log^3n)$-round \LOCAL{} algorithm for exact sampling matchings in graphs with max-degree $\Delta$, and $O(\log^3n)$-round \LOCAL{} exact sampling algorithm for the following problems:
\begin{itemize}
\item an $O(\sqrt{\Delta}\log^3n)$-round algorithm for sampling matchings in graphs with maximum degree $\Delta$ due to the strong spatial mixing of matchings with exponential decay at rate $1-\Omega({1}/{\sqrt{\Delta}})$~\cite{bayati2007simple};
\item an $O(\log^3n)$-round algorithm for sampling independent sets in graphs with max-degree $\Delta\le 5$, or more generally, for sampling according to the hardcore model (weighted independent sets) with fugacity $\lambda$ up to the uniqueness threshold (where $\lambda<\lambda_c(\Delta)\triangleq{(\Delta-1)^{(\Delta-1)}}/{(\Delta-2)^{\Delta}}$), due to the strong spatial mixing of the model up to the uniqueness threshold~\cite{weitz2006counting};
\item an $O(\log^3n)$-round algorithm for sampling $q$-colorings of triangle-free graphs when $q\ge\alpha\Delta$ for $\alpha>\alpha^*$ where $\alpha^*\approx1.763\ldots$ satisfies $\alpha^*=\exp(\frac{1}{\alpha^*})$, due to the strong spatial mixing proved in~\cite{gamarnik2013strong};
\item an $O(\log^3n)$-round algorithm for sampling according to the anti-ferromagnetic 2-spin model in the interior of the uniqueness regime, due to the strong spatial mixing of the model in the uniqueness regime~\cite{li2013correlation};
\item an $O(\log^3n)$-round algorithm for sampling weighted hypergraph matchings up to the uniqueness threshold (when the weight $\lambda<\lambda_c(r,\Delta)\triangleq\frac{(\Delta-1)^{(\Delta-1)}}{(r-1)(\Delta-2)^{\Delta}}$, where $r$ is the rank of the hypergraph), due to the strong spatial mixing of the model up to the uniqueness threshold~\cite{song2016counting}.
\end{itemize}
The definitions of these models are given in the referred papers.
All these joint distributions are either locally admissible, local Gibbs distributions, or in the case of edge models (e.g.~graph/hypergraph matchings) can be represented as such joint distributions through dualities of graphs/hypergraphs, which preserve the distances. 

For lower bounds, the long-range correlation established in a previous work (\cite{feng2017sampling}, Theorem 5.3) implies an $\Omega(\mathrm{diam})$ lower bound for approximate sampling according to the hardcore model with fugacity $\lambda$ in the non-uniqueness regime (where $\lambda>\lambda_c(\Delta)$). Along with the $O(\log^3n)$ upper bound for exact sampling according to the hardcore model in the uniqueness regime obtained above, we discover for the first time a computational phase transition for local distributed sampling and counting, at the same critical threshold for the computational phase transition discovered for sampling and counting on polynomial-time Turing machines~\cite{weitz2006counting,sly2010computational}.

\ifabs{}{
\section{Conclusion}
We study the complexities of sampling and counting in the \LOCAL{} model, where the counting is represented by a local variant, namely the inference problem. We found that for self-reducible problems, the well known generic relations between sampling and counting on classic polynomial-time Turing machines hold similarly for local computation. Meanwhile, the tractability of these problems by local computation is captured by a decay of correlation property known as the strong spatial mixing. 

Perhaps a lesson we could learn from this research is that it is helpful to model local computation problems as joint distributions, and hence  studying the complexities of these problems is reduced to studying the discrepancies between such problem-specified joint distributions and the distributions that can be generated by local algorithms. This new approach for local computation seems to have much potential.

Several open problems are worth investiageting. First, can we make the \LOCAL{} algorithms in this paper use bounded-size messages and bounded local computation? Such efficient distributed algorithms would necessarily improve the state of the arts of sampling and approximate counting on polynomial-time Turing machines. Second, how should we classify complexities of sampling and counting in local computation, and does there exist a complexity hierarchy? Third, the distributed JVV sampler given in this paper terminates in a fixed number of rounds with bounded locally certifiable failure. Can we make this algorithm Las Vegas, in a sense that the time complexity of the algorithm may be random but once it terminates the algorithm always outputs precisely according to the correct distribution, and still being local? This requires a strategy for non-biased local resampling, which is far from being well understood. So far, it was only discovered for the Lov\'{a}sz-local-lemma-based sampler for restrictive problems under strict conditions~\cite{guo2016uniform}.
}

\bibliographystyle{abbrv}
%\bibliography{refs}

\begin{thebibliography}{10}

\bibitem{bayati2007simple}
M.~Bayati, D.~Gamarnik, D.~Katz, C.~Nair, and P.~Tetali.
\newblock Simple deterministic approximation algorithms for counting matchings.
\newblock In {\em Proceedings of the 39th ACM Symposium on Theory of Computing
  (STOC)}, pages 122--127, 2007.

\bibitem{feng2017sampling}
W.~Feng, Y.~Sun, and Y.~Yin.
\newblock What can be sampled locally?
\newblock In {\em Proceedings of the {ACM} Symposium on Principles of
  Distributed Computing ({PODC})}, pages 121--130, 2017.

\bibitem{feng2018coloring}
W.~Feng and Y.~Yin.
\newblock Distributed symmetry breaking in sampling: (optimal distributed
  randomly coloring with fewer colors).
\newblock Submitted.

\bibitem{gamarnik2007correlation}
D.~Gamarnik and D.~Katz.
\newblock Correlation decay and deterministic {FPTAS} for counting
  list-colorings of a graph.
\newblock In {\em Proceedings of the 18th ACM-SIAM Symposium on Discrete
  Algorithms (SODA)}, pages 1245--1254, 2007.

\bibitem{gamarnik2013strong}
D.~Gamarnik, D.~Katz, and S.~Misra.
\newblock Strong spatial mixing of list coloring of graphs.
\newblock {\em Random Structures \& Algorithms}, 2013.

\bibitem{ghaffari2017derandomizing}
M.~Ghaffari, D.~G. Harris, and F.~Kuhn.
\newblock On derandomizing local distributed algorithms.
\newblock {\em arXiv preprint arXiv:1711.02194}, 2017.

\bibitem{ghaffari2016complexity}
M.~Ghaffari, F.~Kuhn, and Y.~Maus.
\newblock On the complexity of local distributed graph problems.
\newblock In {\em Proceedings of the 49th Annual {ACM} Symposium on Theory of
  Computing ({STOC})}, pages 784--797, 2017.

\bibitem{guo2016uniform}
H.~Guo, M.~Jerrum, and J.~Liu.
\newblock Uniform sampling through the {L}ov{\'a}sz local lemma.
\newblock In {\em Proceedings of the 49th Annual ACM SIGACT Symposium on Theory
  of Computing (STOC)}, pages 342--355, 2017.

\bibitem{jerrum2003book}
M.~Jerrum.
\newblock {\em Counting, Sampling and Integrating: Algorithms and Complexity}.
\newblock Lectures in Mathematics. ETH Z\:{u}rich. Birkh\:{a}user Basel, 2003.

\bibitem{JVV86}
M.~R. Jerrum, L.~G. Valiant, and V.~V. Vazirani.
\newblock Random generation of combinatorial structures from a uniform
  distribution.
\newblock {\em Theor. Comput. Sci.}, 43:169--188, 1986.

\bibitem{li2013correlation}
L.~Li, P.~Lu, and Y.~Yin.
\newblock Correlation decay up to uniqueness in spin systems.
\newblock In {\em Proceedings of the 24th ACM-SIAM Symposium on Discrete
  Algorithms (SODA)}, pages 67--84, 2013.

\bibitem{mezard2009information}
M.~Mezard and A.~Montanari.
\newblock {\em Information, physics, and computation}.
\newblock Oxford University Press, 2009.

\bibitem{naor1995can}
M.~Naor and L.~Stockmeyer.
\newblock What can be computed locally?
\newblock {\em SIAM Journal on Computing (SICOMP)}, 24(6):1259--1277, 1995.

\bibitem{panconesi1996complexity}
A.~Panconesi and A.~Srinivasan.
\newblock On the complexity of distributed network decomposition.
\newblock {\em Journal of Algorithms}, 20(2):356--374, 1996.

\bibitem{paskin2004robust}
M.~A. Paskin and C.~E. Guestrin.
\newblock Robust probabilistic inference in distributed systems.
\newblock In {\em Proceedings of the 20th conference on Uncertainty in
  artificial intelligence ({UAI})}, pages 436--445, 2004.

\bibitem{peleg2000distributed}
D.~Peleg.
\newblock {\em Distributed computing: a locality-sensitive approach}.
\newblock SIAM, 2000.

\bibitem{sly2010computational}
A.~Sly.
\newblock Computational transition at the uniqueness threshold.
\newblock In {\em Proceedings of the 51st Annual IEEE Symposium on Foundations
  of Computer Science (FOCS)}, pages 287--296, 2010.

\bibitem{song2016counting}
R.~Song, Y.~Yin, and J.~Zhao.
\newblock Counting hypergraph matchings up to uniqueness threshold.
\newblock In {\em RANDOM}, 2016.

\bibitem{stockmeyer1983complexity}
L.~J. Stockmeyer.
\newblock The complexity of approximate counting.
\newblock In {\em Proceedings of the 15th Annual {ACM} Symposium on Theory of
  Computing ({STOC})}, pages 118--126, 1983.

\bibitem{valiant1979complexity}
L.~G. Valiant.
\newblock The complexity of computing the permanent.
\newblock {\em Theor. Comput. Sci.}, 8:189--201, 1979.

\bibitem{weitz2006counting}
D.~Weitz.
\newblock Counting independent sets up to the tree threshold.
\newblock In {\em Proceedings of the 38th Annual ACM Symposium on Theory of
  Computing (STOC)}, pages 140--149, 2006.

\end{thebibliography}

\end{document}